\newcommand{\note}[1]{{\textcolor{red}{[#1]}}\@latex@warning{Note: #1}}
\DeclareSymbolFont{tipa}{T3}{cmr}{m}{n}
\DeclareMathAccent{\invbreve}{\mathalpha}{tipa}{16}
\DeclareMathOperator*{\residue}{Res}
\theoremstyle{plain}
\newtheorem{theorem}{Theorem}
\newtheorem{corollary}[theorem]{Corollary}
\newtheorem{proposition}[theorem]{Proposition}
\newtheorem{lemma}[theorem]{Lemma}
\theoremstyle{definition}
\newcommand{\ind}{\mathbf{1}}
\newcommand{\R}{\mathbb{R}}
\newcommand{\Z}{\mathbb{Z}}
\newcommand{\rmd}{\mathrm{d}}
\newcommand{\breakline}{\nonumber\\&\quad\quad\quad}
\begin{document}

\title{\bf Topological recursion of the Weil--Petersson volumes of hyperbolic surfaces with tight boundaries}

\author{\textsc{Timothy Budd} \, and \, \textsc{Bart Zonneveld} \\[5mm]
{\small IMAPP, Radboud University, Nijmegen, The Netherlands.}\\
{\small \texttt{\href{mailto:t.budd@science.ru.nl}{t.budd@science.ru.nl}, \href{mailto:b.zonneveld@science.ru.nl}{b.zonneveld@science.ru.nl}}}}
\date{\today}
\maketitle
\begin{abstract}
	The Weil--Petersson volumes of moduli spaces of hyperbolic surfaces with geodesic boundaries are known to be given by polynomials in the boundary lengths. 
	These polynomials satisfy Mirzakhani's recursion formula, which fits into the general framework of topological recursion. 
	We generalize the recursion to hyperbolic surfaces with any number of special geodesic boundaries that are required to be \emph{tight}.
	A special boundary is tight if it has minimal length among all curves that separate it from the other special boundaries.
	The Weil--Petersson volume of this restricted family of hyperbolic surfaces is shown again to be polynomial in the boundary lengths.
    This remains true when we allow conical defects in the surface with cone angles in $(0,\pi)$ in addition to geodesic boundaries. 
	Moreover, the generating function of Weil--Petersson volumes with fixed genus and a fixed number of special boundaries is polynomial as well, and satisfies a topological recursion that generalizes Mirzakhani's formula. 

	This work is largely inspired by recent works by Bouttier, Guitter \& Miermont on the enumeration of planar maps with tight boundaries. 
	Our proof relies on the equivalence of Mirzakhani's recursion formula to a sequence of partial differential equations (known as the Virasoro constraints) on the generating function of intersection numbers. 

	Finally, we discuss a connection with JT gravity.
    We show that the multi-boundary correlators of JT gravity with defects (cone points or FZZT branes) are expressible in the tight Weil--Petersson volume generating functions, using a tight generalization of the JT trumpet partition function.

    \vspace{-3mm}
\end{abstract}

\begin{figure}[h!]
    \centering 
    \includegraphics[width=.39\linewidth]{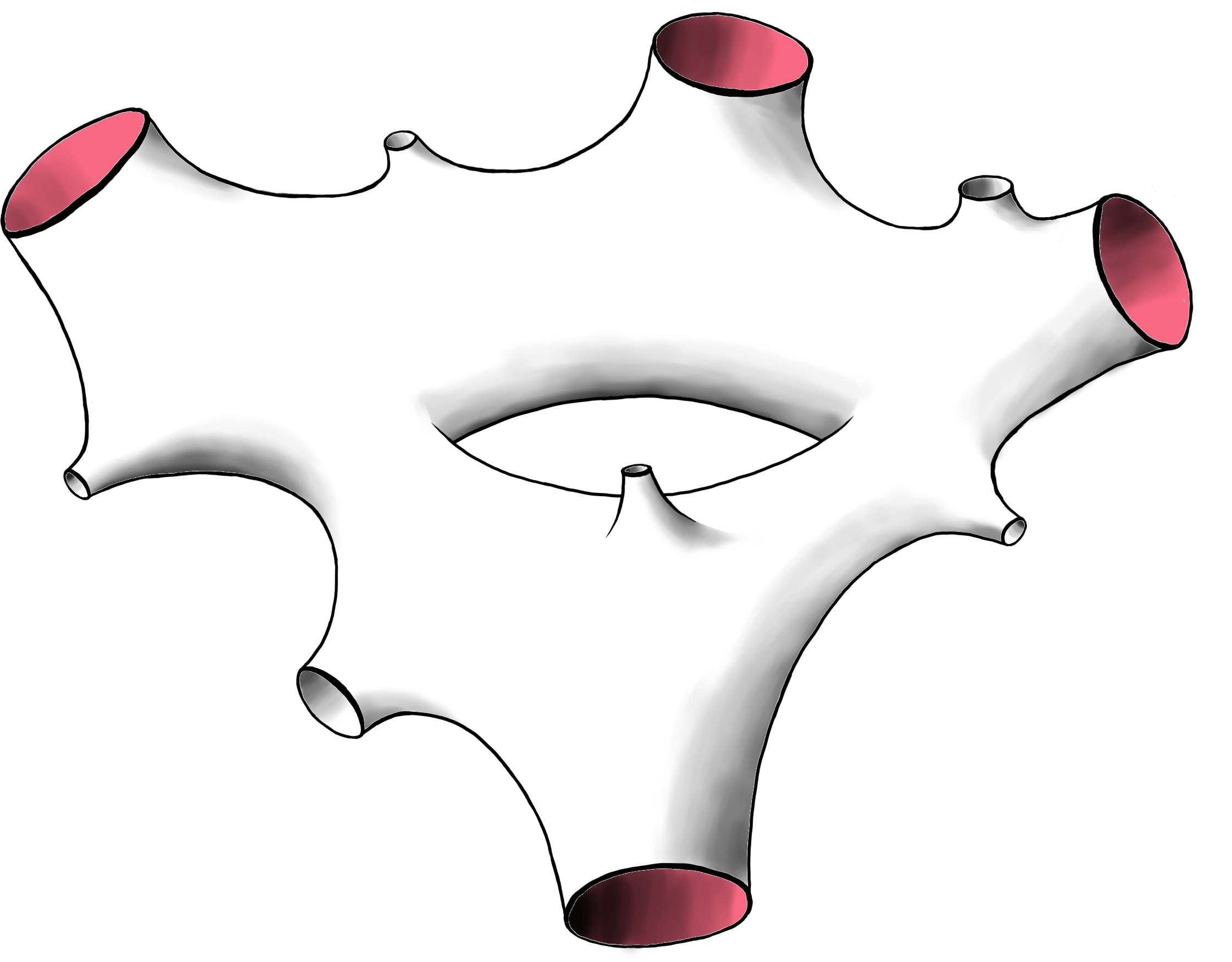}
    \caption{An example of a genus-$1$ hyperbolic surface with four tight boundaries (in pink).}
\end{figure}

\clearpage
\section{Introduction}

\subsection{Topological recursion of Weil--Petersson volumes}

In the celebrated work \cite{Mirzakhani2007} Mirzakhani established a recursion formula for the Weil--Petersson volume $V_{g,n}(\mathbf{L})$ of the moduli space of genus-$g$ hyperbolic surfaces with $n$ labeled boundaries of lengths $\mathbf{L} = (L_1, \ldots, L_n) \in \R_{>0}^n$.
Denoting $[n] = \{1,2,\ldots,n\}$ and using the notation $\mathbf{L}_I = (L_i)_{i\in I}$, $I\subset[n]$, for a subsequence of $\mathbf{L}$ and $\mathbf{L}_{\widehat{I}} = \mathbf{L}_{[n]\setminus I}$, the recursion can be expressed for $(g,n)\notin \{(0,3),(1,1)\}$ as
\begin{align}
V_{g,n}(\mathbf{L})&=
\frac{1}{2L_1} \int_0^{L_1}\dd{t}\int_0^\infty\dd{x}\int_0^\infty\dd{y}\,xy K_0(x+y,t) \Bigg[ V_{g-1,n+1}(x,y,\mathbf{L}_{\widehat{\{1\}}})\nonumber\\
&\mkern370mu +\!\!\!\!\sum_{\substack{g_1+g_2=g\\I\amalg J=\{2,\ldots,n\}}} V_{g_1,1+|I|}(x,\mathbf{L}_{I}) V_{g_2,1+|J|}(y,\mathbf{L}_{J})\Bigg]\nonumber\\
&+\frac{1}{2L_1}\int_0^{L_1}\dd{t}\int_0^\infty\dd{x}\sum_{j=2}^n\,x \left(K_0(x,t+L_j) + K_0(x,t-L_j)\right) V_{g,n-1}(x,\mathbf{L}_{\widehat{\{1,j\}}}),\label{eq:Mirzakhanirecursion}
\end{align}
where
\begin{align}
K_0(x,t)&=\frac{1}{1+\exp(\frac{x+t}{2})}+\frac{1}{1+\exp(\frac{x-t}{2})}.
\end{align}
Together with $V_{0,3}(\mathbf{L}) = 1$ and $V_{1,1}(\mathbf{L}) = \frac{L_1^2}{48} + \frac{\pi^2}{12}$ this completely determines $V_{g,n}$ as a symmetric polynomial in $L_1^2, \ldots, L_n^2$ of degree $3g-3+n$.

This recursion formula remains valid \cite{Tan_Generalizations_2006,Mirzakhani2007} when we replace one or more of the boundaries by cone points with cone angle $\alpha_i \in (0,\pi)$ if we assign to it an imaginary boundary length $L_i = i \alpha_i$.
Cone points with angles in $(0,\pi)$ are called \emph{sharp}, as opposed to blunt cone points that have angle in $(\pi,2\pi)$.
The Weil--Petersson volume of the moduli space of genus-$g$ surfaces with $n$ geodesic boundaries or sharp cone points is thus correctly computed by the polynomial $V_{g,n}(\mathbf{L})$. 

It was recognized by Eynard \& Orantin \cite{eynard2007weil} that Mirzakhani's recursion (in the case of geodesic boundaries) fits the general framework of topological recursion. 
To state this result explicitly one introduces for any $g,n\geq0$ satisfying $3g-3+n \geq 0$ the Laplace transformed\footnote{Note that, due to the extra factors $L_i$ in the integrand, $\mathcal{W}_{g,n}(\mathbf{z})$ is $(-1)^n$ times the partial derivative in each of the variables $z_1,\ldots,z_n$ of the Laplace transforms of $V_{g,n}(\mathbf{L})$, but we will refer to $\mathcal{W}_{g,n}(\mathbf{z})$ as the Laplace-transformed Weil--Petersson volumes nonetheless.} Weil--Petersson volumes
\begin{align}
    \omega_{g,n}^{(0)}(\mathbf{z})= \int_{0}^\infty \left[\prod_{i=1}^n\rmd L_i\, L_i e^{-z_i L_i} \right] V_{g,n}(\mathbf{L}),\label{eq:WPLaplace}
\end{align}
which are even polynomials in $z_1^{-1}, \ldots,z_n^{-1}$ of degree $6g-4+2n$, while setting 
\begin{equation}
    \omega_{0,1}^{(0)}(\mathbf{z})= 0, \qquad \omega_{0,2}^{(0)}(\mathbf{z}) = \frac{1}{(z_1-z_2)^2}.
\end{equation}
Then Mirzakhani's recursion \eqref{eq:Mirzakhanirecursion} translates into the recursion \cite[Theorem 2.1]{eynard2007weil} 
\begin{equation}
    \omega_{g,n}^{(0)}(\mathbf{z}) = \residue_{u\to0} \frac{\pi}{(z_1^2-u^2) \sin2\pi u} \Bigg[\omega^{(0)}_{g-1,n+1}(u,-u,\mathbf{z}_{\widehat{\{1\}}})+
\sum_{\substack{g_1+g_2=g\\I\amalg J=\{2,\ldots,n\}}}\omega^{(0)}_{g_1,1+|I|}(u,\mathbf{z}_{I})\omega^{(0)}_{g_2,1+|J|}(-u,\mathbf{z}_{J})\Bigg] 
\end{equation}
valid when $g,n\geq 0$ and $3g-3+n \geq 0$, which one may recognize as the recursion for the invariants $\omega^{(0)}_{g,n}(\mathbf{z})$ of the complex curve
\begin{equation}\label{eq:WPcurve}
\begin{dcases}
x(z) = z^2 \\
y(z) = \frac{1}{\pi}\sin(2\pi z).
\end{dcases}
\end{equation}
The main purpose of the current work is to generalize these recursion formulas to hyperbolic surfaces with so-called tight boundaries, which we introduce now.

\subsection{Hyperbolic surfaces with tight boundaries}

Let $S_{g,n}$ be a fixed topological surface of genus $g$ with $n$ boundaries and $\mathcal{T}_{g,n}(\mathbf{L})$ the Teichm\"uller space of hyperbolic structures on $S_{g,n}$ with geodesic boundaries of lengths $\mathbf{L} = (L_1,\ldots,L_n) \in \R_{>0}^n$.
Denote the boundary cycles\footnote{Our constructions will not rely on an orientation of the boundary cycles, but for definiteness we may take them clockwise (keeping the surface on the left-hand side when following the boundary).} by $\partial_1,\ldots,\partial_n$ and the free homotopy class of a cycle $\gamma$ in $S_{g,n}$ by $[\gamma]_{S_{g,n}}$.
For a hyperbolic surface $X \in \mathcal{T}_{g,n}(\mathbf{L})$ and a cycle $\gamma$, we denote by $\ell_\gamma(X)$ the length of $\gamma$, in particular $\ell_{\partial_i}(X) = L_i$.
The mapping class group of $S_{g,n}$ is denoted $\mathrm{Mod}_{g,n}$ and the quotient of $\mathcal{T}_{g,n}(\mathbf{L})$ by its action leads to the moduli space 
\begin{align*}
	\mathcal{M}_{g,n}(\mathbf{L}) = \mathcal{T}_{g,n}(\mathbf{L})/\mathrm{Mod}_{g,n}.
\end{align*}

\begin{figure}
	\centering
    \includegraphics[height=.35\linewidth]{images/Tight_surface_normal_color1b.png}%
	\includegraphics[height=.35\linewidth]{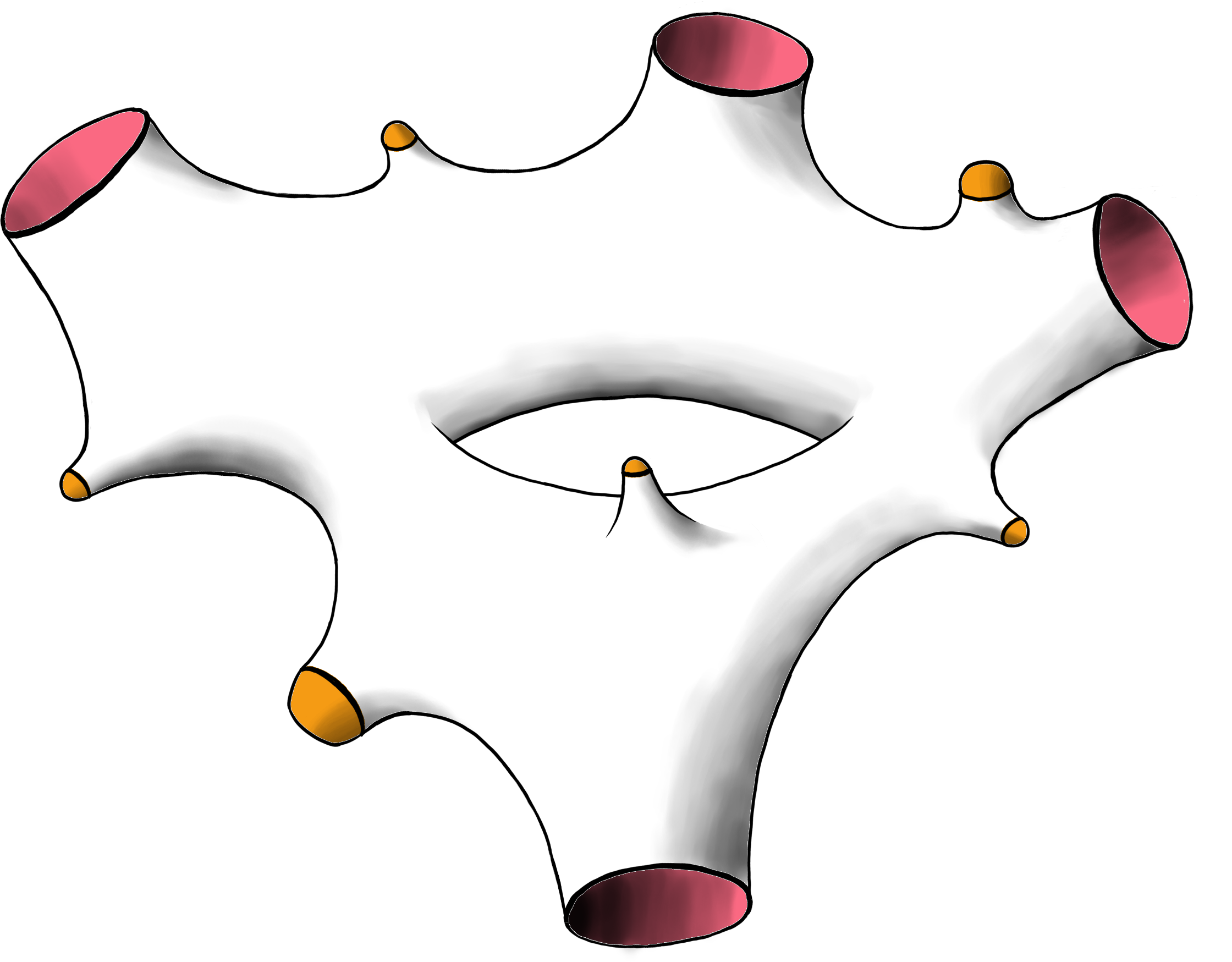}
	\caption{A surface $S_{g,n+p}$ (left) and the result $\invbreve{S}_{g,n,p}$ (right) of capping off the last $p$ boundaries, with in this case $g=1$, $n=4$ and $p=6$. If the original surface was equipped with a hyperbolic structure, it is natural to think of the resulting surface as the Riemannian manifold obtained by attaching constantly curved hemispheres of appropriate diameter to the $p$ geodesic boundaries.  \label{fig:capping}}
\end{figure}

Let us denote by $\invbreve{S}_{g,n,p} \supset S_{g,n+p}$ the topological surface obtained from $S_{g,n+p}$ by capping off the last $p$ boundaries with disks (Figure~\ref{fig:capping}). Note that the free homotopy classes $[\cdot]_{S_{g,n+p}}$ of $S_{g,n+p}$ are naturally partitioned into the free homotopy classes $[\cdot]_{\invbreve{S}_{g,n,p}}$ of $\invbreve{S}_{g,n,p}$. 
In particular, $[\partial_{j}]_{S_{g,n+p}}$ for $j=n+1,\ldots,n+p$ are all contained in the null-homotopy class of $\invbreve{S}_{g,n,p}$.
For $i=1,\ldots,n$ the boundary $\partial_i$ of $X \in \mathcal{M}_{g,n+p}(\mathbf{L})$ is said to be \emph{tight in }$\invbreve{S}_{g,n,p}$ if $\partial_i$ is the only simple cycle $\gamma$ in $[\partial_i]_{\invbreve{S}_{g,n,p}}$ of length $\ell_\gamma(X) \leq L_i$.
Remark that both $[\partial_i]_{S_{g,n+p}}$ and $[\partial_i]_{\invbreve{S}_{g,n,p}}$ for $i=1,\ldots,n$ are $\mathrm{Mod}_{g,n+p}$-invariant, so these classes are well-defined at the level of the moduli space.
This allows us to introduce the \emph{moduli space of tight hyperbolic surfaces} 
\begin{align}
	\mathcal{M}^{\text{tight}}_{g,n,p}(\mathbf{L}) &= \{ X \in \mathcal{M}_{g,n+p}(\mathbf{L}) : \partial_{1},\ldots,\partial_{n}\text{ are tight in }\invbreve{S}_{g,n,p}\} \subset \mathcal{M}_{g,n+p}(\mathbf{L}).
\end{align}
Note that $\mathcal{M}^{\text{tight}}_{g,n,0}(\mathbf{L}) = \mathcal{M}^{\text{tight}}_{g,0,n}(\mathbf{L}) = \mathcal{M}_{g,n}(\mathbf{L})$, while $\mathcal{M}^{\text{tight}}_{0,1,p}(\mathbf{L}) = \emptyset$ because $\partial_{1}$ is null-homotopic and $\mathcal{M}^{\text{tight}}_{0,2,p}(\mathbf{L}) = \emptyset$ because $[\partial_{1}]_{\invbreve{S}_{0,2,p}}=[\partial_{2}]_{\invbreve{S}_{0,2,p}}$ and therefore $\partial_{1}$ and $\partial_{2}$ can never both be the unique shortest cycle in their class.
In general, it is an open subset of $\mathcal{M}_{g,n+p}(\mathbf{L})$ and therefore it inherits the Weil--Petersson symplectic structure and Weil--Petersson measure $\rmd \mu_{\mathrm{WP}}$ from $\mathcal{M}_{g,n+p}(\mathbf{L})$.
The corresponding \emph{tight} Weil--Petersson volumes are denoted 
\begin{align}
	T_{g,n,p}(\mathbf{L}) = \int_{\mathcal{M}^{\text{tight}}_{g,n,p}(\mathbf{L})} \rmd \mu_{\mathrm{WP}} \quad\leq\quad V_{g,n+p}(\mathbf{L}),
\end{align}
such that $T_{g,n,0}(\mathbf{L}) = T_{g,0,n}(\mathbf{L}) = V_{g,n}(\mathbf{L})$ and $T_{0,1,p}(\mathbf{L})=T_{0,2,p}(\mathbf{L})=0$.

We can extend this definition to the case in which one or more of the boundaries $\partial_{n+1},\ldots,\partial_{n+p}$ is replaced by a sharp cone point with cone angle $\alpha_i \in (0,\pi)$.
In this case we make the usual identification $L_i = i \alpha_i$, and still denote the corresponding Weil--Petersson volume by $T_{g,n,p}(\mathbf{L})$.
Our first result is the following.

\begin{proposition}\label{prop:Tpolynomial}
    For $g,n,p\geq 0$ such that $3g-3 + n \geq 0$, the tight Weil--Petersson volume $T_{g,n,p}(\mathbf{L})$ of genus $g$ surfaces with $n$ tight boundaries and $p$ geodesic boundaries or sharp cone points is a polynomial in $L_1^2, \ldots, L_{n+p}^2$ of degree $3g-3+n+p$ that is symmetric in $L_1,\ldots, L_n$ and symmetric in $L_{n+1},\ldots,L_{n+p}$.
\end{proposition}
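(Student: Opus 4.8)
The plan is to express $V_{g,n+p}(\mathbf{L})$ in terms of $T_{g,n,p}(\mathbf{L})$ and strictly smaller tight volumes by a geometric decomposition of $\mathcal{M}_{g,n+p}(\mathbf{L})$ that ``tightens'' each of the $n$ special boundaries, and then to solve for $T_{g,n,p}$, using that the ordinary Weil--Petersson volumes $V_{g,m}$ are polynomial (also for sharp cone points, as recalled above). Fix $X\in\mathcal{M}_{g,n+p}(\mathbf{L})$ in general position, and for each $i\le n$ let $\gamma_i$ be a shortest simple closed curve in the class $[\partial_i]_{\invbreve{S}_{g,n,p}}$. Then $\gamma_i$ is realized by a simple closed geodesic, $\gamma_i=\partial_i$ precisely when $\partial_i$ is tight, and otherwise $\ell_{\gamma_i}(X)<L_i$; moreover the minimizer is unique outside a measure-zero set, and --- since the classes $[\partial_i]_{\invbreve{S}}$ admit pairwise disjoint representatives, namely $\partial_1,\dots,\partial_n$ themselves --- the geodesics $\gamma_1,\dots,\gamma_n$ may be taken pairwise disjoint.

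For each non-tight $i$ the disjoint essential simple closed curves $\gamma_i$ and $\partial_i$ are freely homotopic in $\invbreve{S}$ and hence cobound an annulus there; deleting the $p_i$ capping disks contained in that annulus produces, inside $X$, a genus-$0$ subsurface $\Sigma_i$ with exactly two geodesic boundaries $\partial_i,\gamma_i$ together with $p_i$ of the $p$ non-special boundaries, and $p_i\ge 1$ by Gauss--Bonnet. Cutting $X$ along the $\gamma_i\ne\partial_i$ therefore peels off these ``tight trumpets'' $\Sigma_i$ and leaves a ``core'' surface carrying all of the genus and the remaining non-special boundaries, now with the $\gamma_i$ as new boundaries. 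Since $\invbreve{S}_{g,n,p}$ is homotopy equivalent to the core with its non-special boundaries capped, minimality of $\gamma_i$ implies that the core lies in $\mathcal{M}^{\text{tight}}_{g,n,\,p-\sum_i p_i}$ and that $\gamma_i$ is the tight representative of its own class inside $\Sigma_i$ (after capping the $p_i$ boundaries). Conversely, a core surface glued to such trumpets along twist parameters $\tau_i$ reconstitutes a point of $\mathcal{M}_{g,n+p}(\mathbf{L})$ with exactly this tightening pattern (a short argument using $i(\gamma_i,\cdot)=0$ for competitors), and by Wolpert's theorem the Weil--Petersson measure factorizes, each cut $\gamma_i$ contributing $\ell_{\gamma_i}\,\rmd\ell_{\gamma_i}\,\rmd\tau_i$ with $\tau_i$ over a period. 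Integrating out twists and summing over the finitely many topological types of the decomposition yields an identity of the schematic form
\[
V_{g,n+p}(\mathbf{L})=T_{g,n,p}(\mathbf{L})+\sum_{\emptyset\ne A\subseteq[n]}\ \sum_{\text{types}}\ \Bigg(\prod_{i\in A}\int_0^{L_i}\!\!\ell_i\,\rmd\ell_i\ T^{\mathrm{tr}}_{p_i}(L_i,\ell_i;\cdot)\Bigg)\,T_{g,n,\,p-\sum_{i\in A}p_i}(\cdot),
\]
where $T^{\mathrm{tr}}_{p_i}$ is the genus-$0$ tight trumpet volume on $\mathcal{M}_{0,2+p_i}$ and the dots collect the appropriate lengths and cone angles.

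From here the proposition is a double induction. Running the same construction inside a trumpet (now tightening $\gamma_i$, again a genus-$0$ problem) gives $T^{\mathrm{tr}}_{p_i}=V_{0,2+p_i}-(\text{sums of products of }T^{\mathrm{tr}}_{p'}\text{ with }p'<p_i)$ with base case $T^{\mathrm{tr}}_{1}=V_{0,3}=1$; since $\int_0^{L}\ell^{2k+1}\,\rmd\ell=\tfrac{L^{2k+2}}{2k+2}$ turns a polynomial in $\ell^2$ into one in $L^2$, induction on $p_i$ shows each $T^{\mathrm{tr}}_{p_i}$ is a polynomial in the squared lengths and angles, symmetric in its $p_i$ absorbed arguments, of degree $p_i-1$. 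One then inducts on $p$, with base case $p=0$ ($T_{g,n,0}=V_{g,n}$): every trumpet on the right of the displayed identity absorbs at least one non-special boundary, so the core volumes there have fewer than $p$ of them and are polynomial by the inductive hypothesis, whence $T_{g,n,p}=V_{g,n+p}-(\text{corrections})$ is polynomial. The symmetry in $L_1,\dots,L_n$ and in $L_{n+1},\dots,L_{n+p}$ is immediate from the definition. Finally, degree bookkeeping through the recursion (using $\deg V_{0,2+p_i}=p_i-1$, that each $\ell_i$-integration raises the degree in $L_i^2$ by one, and $\deg V_{g,n+p}=3g-3+n+p$) gives $\deg T_{g,n,p}\le 3g-3+n+p$; equality holds because for a single non-special variable $L_j$ (if $p\ge 1$; the cases $n=0$ or $p=0$ reduce directly to $V$) the volume $V_{g,n+p}$ contains a term $\propto L_j^{2(3g-3+n+p)}$ with coefficient a positive intersection number (by the string equation), and this power of $L_j^2$ appears in no correction term, whose $L_j$-dependence sits either inside a genus-$0$ trumpet factor (degree $\le p_i-1$ in $L_j^2$) or inside a core volume that has shed a boundary.

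The step I expect to be the main obstacle is the geometric dictionary underlying the second paragraph: proving that the tightening geodesics are simple, generically unique and simultaneously disjoint; that cutting along them yields precisely the genus-$0$ trumpets and the core; and --- most delicately --- that the tightness conditions the pieces inherit match the definition of $\mathcal{M}^{\text{tight}}$ verbatim. That last point comes down to the homotopy equivalence between $\invbreve{S}_{g,n,p}$ and the capped core, which ensures a curve shortest in its class within the core (or within a trumpet) cannot be undercut once the trumpets are reglued; this is exactly where it matters that $\invbreve{S}_{g,n,p}$ caps only the non-special boundaries. Granting this, the measure splitting is Wolpert's theorem and the rest is the bookkeeping above. (An equivalent route, which we will use for the accompanying topological recursion, recasts the tightening as an explicit integral kernel acting on generating functions, so that polynomiality becomes the statement that this kernel preserves polynomials of the expected degree.)
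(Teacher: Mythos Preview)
Your approach is essentially the paper's: what you call $T^{\mathrm{tr}}_p$ is the paper's \emph{half-tight cylinder} volume $H_p$, the disjointness and uniqueness of the tightening geodesics is made precise there via an \emph{innermost shortest cycle} (avoiding your measure-zero caveat), and the same double induction on $p$ is run against the two decomposition identities. One phrasing wrinkle worth fixing: the trumpet recursion does not come from ``tightening $\gamma_i$'' (that boundary is already tight by construction) but from decomposing the full cylinder $V_{0,2+p}$ along \emph{its} innermost shortest cycle---since $\partial_1$ and $\partial_2$ lie in the same class of $\invbreve{S}_{0,2,p}$, this single cut yields one or two half-tight cylinders and exactly the relation $H_p=V_{0,2+p}-\sum\int H_{|I_1|}H_{|I_2|}$ you want.
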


For most of the upcoming results we maintain the intuitive picture that the tight boundaries are the ``real'' boundaries of the surface, whose number and lengths we specify, while we allow for an arbitrary number of other boundaries or cone points that we treat as defects in the surface.
To this end we would like to encode the volume polynomials in generating functions that sum over the number of defects with appropriate weights.
A priori it is not entirely clear what is the best way to organize such generating functions, so to motivate our definition we take a detour to a natural application of Weil--Petersson volumes in random hyperbolic surfaces.

\subsection{Intermezzo: Random (tight) hyperbolic surfaces}\label{sec:probintermezzo}

If we fix $g$, $n$ and $\mathbf{L} \in ([0,\infty) \cup i (0,\pi))^n$, then upon normalization by $1/V_{g,n}(\mathbf{L})$ the Weil--Petersson measure $\rmd\mu_{\mathrm{WP}}$ provides a well-studied probability measure on $\mathcal{M}_{g,n}(\mathbf{L})$ defining the \emph{Weil--Petersson random hyperbolic surface}, see e.g. \cite{Mirzakhani_Growth_2013,Guth_Pants_2011,Mirzakhani_Lengths_2019,Gilmore_Short_2021,Monk_Benjamini_2022}.
A natural way to extend the randomness to the boundary lengths or cone angles is by choosing a (Borel) measure $\mu$ on $[0,\infty) \cup i (0,\pi)$ and first sampling $\mathbf{L} \in ([0,\infty) \cup i (0,\pi))^n$ from the probability measure
\begin{equation}\label{eq:wplengthmeasure}
    \frac{1}{\mu^{\otimes n}(V_{g,n})} V_{g,n}(\mathbf{L})\rmd\mu(L_1)\cdots\rmd\mu(L_n), \quad \mu^{\otimes n}(V_{g,n}) \coloneqq \int V_{g,n}(\mathbf{L})\rmd\mu(L_1)\cdots\rmd\mu(L_n),
\end{equation}
and then sampling a Weil--Petersson random hyperbolic surface on $\mathcal{M}_{g,n}(\mathbf{L})$.
If the genus-$g$ \emph{partition function}\footnote{We use the physicists' convention of writing the argument $\mu$ in square brackets to signal it is a functional dependence (in the sense of calculus of variations).}
\begin{equation}\label{eq:wppartitionfunction}
    F_g[\mu] = \sum_{n\geq 0} \frac{\mu^{\otimes n}(V_{g,n})}{n!}
\end{equation}
converges, we can furthermore make the size $n\geq 0$ random by sampling it with the probability $\mu^{\otimes n}(V_{g,n})/(n! F_g[\mu])$.
The resulting random surface (of random size) is called the genus-$g$ \emph{Boltzmann hyperbolic surface} with weight $\mu$.
See the upcoming work \cite{Budd_Statistics_} for some of its statistical properties.

A natural extension is to consider the genus-$g$ Boltzmann hyperbolic surface with $n$ tight boundaries of length $\mathbf{L}=(L_1,\ldots,L_n)$, where the number $p$ of defects and their boundary lengths/cone angles $\mathbf{K}=(K_1,\ldots,K_p)$ are random.
The corresponding partition function is 
\begin{align}
    T_{g,n}(\mathbf{L};\mu] = \sum_{p\geq 0} \frac{\mu^{\otimes p}(T_{g,n,p})(\mathbf{L})}{p!}, \quad \mu^{\otimes p}(T_{g,n,p})(\mathbf{L}) \coloneqq \int T_{g,n,p}(\mathbf{L},\mathbf{K})\rmd\mu(K_1)\cdots\rmd\mu(K_p). 
\end{align}
If it is finite, we can sample $p$ with probability $\mu^{\otimes p}(T_{g,n,p})(\mathbf{L})/(p!T_{g,n}(\mathbf{L};\mu])$ and then $\mathbf{K}$ from the probability measure
\begin{align}\label{eq:tightwplengthmeasure}
    \frac{1}{\mu^{\otimes p}(T_{g,n,p})(\mathbf{L})} T_{g,n,p}(\mathbf{L},\mathbf{K})\rmd\mu(K_1)\cdots\rmd\mu(K_n)
\end{align}
and then finally a random tight hyperbolic surface from the probability measure $\rmd \mu_{\mathrm{WP}}/T_{g,n,p}(\mathbf{L},\mathbf{K})$ on $\mathcal{M}^{\text{tight}}_{g,n,p}(\mathbf{L},\mathbf{K})$.
Note that for $\mu=0$ the genus-$g$ Boltzmann hyperbolic surface with $n$ tight boundaries reduces to the Weil--Petersson random hyperbolic surface we started with.

The important observation for the current work is that the partition functions $F_g[\mu]$ and $T_{g,n}(\mathbf{L};\mu]$ of these random surfaces can be thought of as (multivariate, exponential) generating functions of the volumes $V_{g,n}(\mathbf{L})$ and $T_{g,n,p}(\mathbf{L},\mathbf{K})$ if we treat $\mu$ as a formal generating variable.
Since we will not be concerned with the details of the measures \eqref{eq:wplengthmeasure} and \eqref{eq:tightwplengthmeasure} and $F_g[\mu]$ and $T_{g,n}(\mathbf{L};\mu]$ only depend on the even moments $\int L^{2k} \rmd\mu(L)$, we can instead take these moments as the generating variables.

\subsection{Generating functions}

To be precise, we let a \emph{weight} $\mu$ be a real linear function on the ring of even, real polynomials (i.e. $\mu \in \R[K^2]^*$).
For an even real polynomial $f$ we use the suggestive notation 
\begin{align}
    \mu(f) = \int \rmd \mu(K) f(K),
\end{align}
making it clear that the notion of weight generalizes the Borel measure described in the intermezzo above.
For $L \in [0,\infty)\cup i(0,\pi)$, the Borel measure given by the delta measure $\delta_L$ at $L$ gives a simple example of a weight $\mu=\delta_L$ satisfying $\delta_L(f) = f(L)$.
The choice of weight $\mu$ is clearly equivalent to the choice of a sequence of \emph{times} $(t_0,t_1,\ldots) \in \R^{\Z_{\geq0}}$ recording the evaluations of $\mu$ on the even monomials, up to a conventional normalization, 
\begin{align}
    t_k[\mu]= \frac{2}{4^k k!} \mu(K^{2k}) = \int\rmd\mu(K)\,\frac{2K^{2k}}{4^kk!}.\label{eq:times}
\end{align}
Naturally we can interpret $\mu^{\otimes p} \in (\R[K^2]^*)^{\otimes p}$ as an element of $(\R[K^2]^{\otimes p})^* \cong \R[K_1^2,\ldots,K_p^2]^*$ by setting $\mu^{\otimes p}(f_1(K_1)\cdots f_p(K_p)) = \mu(f_1)\cdots\mu(f_p)$ for even polynomials $f_1,\ldots,f_p$ and extending by linearity.
More generally, we can view $\mu^{\otimes p}$ as a linear map $\R[L_1^2,\ldots,L_n^2,K_1^2,\ldots,K_p^2] \cong \R[L_1^2,\ldots,L_n^2][K_1^2,\ldots,K_p^2] \to \R[L_1^2,\ldots,L_n]$.
We use the notation
\begin{align}
    \mu^{\otimes p}(f) = \int f(\mathbf{K}) \rmd\mu(K_1)\cdots\rmd\mu(K_p), \qquad \mu^{\otimes p}(f)(\mathbf{L}) = \int f(\mathbf{L},\mathbf{K}) \rmd\mu(K_1)\cdots\rmd\mu(K_p).
\end{align} 
One can then naturally introduce the generating function $F[\mu]$ of a collection of symmetric, even polynomials $f_1(L_1),f_2(L_1,L_2),\ldots$ via $F[\mu] = \sum_{p\geq 0} \frac{\mu^{\otimes p}(f_p)}{p!}$.
Then the generating function of tight Weil--Petersson volumes is defined to be
\begin{align}
    T_{g,n}(\mathbf{L};\mu] &= \sum_{p=0}^\infty \frac{\mu^{\otimes p}(T_{g,n,p})(\mathbf{L})}{p!} = \sum_{p=0}^\infty \frac{1}{p!} \int \rmd\mu(K_{1})\cdots \int\rmd\mu(K_{p})T_{g,n,p}(\mathbf{L},\mathbf{K}),\label{eq:tightgenfun}
\end{align}
which we interpret in the sense of a formal power series, so we do not have to worry about convergence.
We could make this more precise by fixing a weight $\mu$ and considering $T_{g,n}(\mathbf{L};x \mu] \in \R[\![x]\!]$ as a univariate formal power series in $x$.
Or we could view $T_{g,n}(\mathbf{L};\mu]$ as a multivariate formal power series in the times $(t_0,t_1,\ldots)$ defined in \eqref{eq:times}.

What is important is that we can make sense of the \emph{functional derivative} $\frac{\delta}{\delta\mu(L)}$ on these types of series defined by
\begin{align}
    \frac{\delta}{\delta\mu(L)} P[\mu] = \frac{\partial}{\partial x} P[\mu + x \delta_L] \Big|_{x=0}.
\end{align}
In particular, if $f(\mathbf{L},\mathbf{K})$, with $\mathbf{L} = (L_1,\ldots,L_n)$ and $\mathbf{K} = (K_1,\ldots,K_p)$, is an even polynomial that is symmetric in $K_1,\ldots,K_p$ then
\begin{align}
    \frac{\delta}{\delta\mu(L)} \mu^{\otimes p}(f)(\mathbf{L}) = p\, \mu^{\otimes p-1}(f)(\mathbf{L},L).
\end{align}
At the level of the generating function we thus have
\begin{align}
    \frac{\delta}{\delta\mu(L)}T_{g,n}(\mathbf{L};\mu] = \sum_{p=0}^\infty \frac{1}{p!} \int \rmd \mu(K_{1})\cdots \int\rmd \mu(K_{p}) T_{g,n,p+1}(\mathbf{L},L,\mathbf{K}).
\end{align}
In terms of formal power series in the times we may instead identify the functional derivative in terms of the formal partial derivatives as
\begin{align}
    \frac{\delta}{\delta \mu(L)} = \sum_{k=0}^\infty \frac{2 L^{2k}}{4^{k}k!} \frac{\partial}{\partial t_k}, \qquad \frac{\delta}{\delta \mu(0)} = 2\frac{\partial}{\partial t_0}.
\end{align}

\subsection{Main results}

To state our main results about $T_{g,n}(\mathbf{L};\mu]$, we need to introduce the generating function $R[\mu]$ as the unique formal power series solution satisfying $R[\mu] = \int\rmd\mu(L) + O(\mu^2)$ to
\begin{equation}\label{eq:Zdef}
Z(R[\mu];\mu] = 0,\qquad Z(r;\mu]\coloneqq \frac{\sqrt{r}}{\sqrt{2}\pi}J_1(2\pi\sqrt{2r}) -  \int\rmd\mu(L)\,I_0(L\sqrt{2r})
\end{equation}
where $I_0$ and $J_1$ are (modified) Bessel functions. 
Let also the \emph{moments} $M_k[\mu]$ be the defined recursively via
\begin{equation}\label{eq:momentdef}
M_0[\mu] = \frac{1}{\,\fdv{R}{\mu(0)}\,},\quad M_{k}[\mu] = M_0[\mu] \,\fdv{M_{k-1}[\mu]}{\mu(0)}, \qquad k\geq 1,
\end{equation}
where the reciprocal in the first identity makes sense because $\fdv{R}{\mu(0)} = 1 + O(\mu)$.
Alternatively, for $k\geq0$ we may express $M_k$ as
\begin{equation}\label{eq:Mexplicit}
M_k[\mu] = Z^{(k+1)}(R[\mu];\mu]=\qty(\frac{-\sqrt{2}\pi}{\sqrt{R[\mu]}})^{k}J_k(2\pi\sqrt{2R[\mu]}) -  \int\rmd\mu(L)\,\qty(\frac{L}{\sqrt{2R[\mu]}})^{k+1}I_{k+1}(L\sqrt{2 R[\mu]}) ,
\end{equation}
where $Z^{(k+1)}(r;\mu]$ denotes the $(k+1)$th derivative of $Z(r;\mu]$ with respect to $r$.

We further consider the series
\begin{align}\label{eq:etadef}
\eta(u;\mu]=\sum_{p=0}^\infty  M_p[\mu] \frac{u^{2p}}{(2p+1)!!},\qquad \hat X(u;\mu] = \frac{ \sin(2\pi u)}{2\pi u \,\eta(u;\mu]},
\end{align}
which we both interpret as formal power series in $u$ with coefficients that are formal power series in $\mu$.
The reciprocal in the second definition is well-defined because $\eta(0) = M_0[\mu] = 1 + O(\mu)$.
We can now state our main result that generalizes Mirzakhani's recursion formula.

\begin{theorem} \label{the:tight_top_recursion}
    The tight Weil--Petersson volume generating functions $T_{g,n}(\mathbf{L})$ satisfy 
    \begin{align}
    T_{g,n}(\mathbf{L})&=
    \frac{1}{2L_1} \int_0^{L_1}\dd{t}\int_0^\infty\dd{x}\int_0^\infty\dd{y}\,xy K(x+y,t;\mu] \Bigg[ T_{g-1,n+1}(x,y,\mathbf{L}_{\widehat{\{1\}}})\nonumber\\
    &\mkern370mu +\!\!\!\!\sum_{\substack{g_1+g_2=g\\I\amalg J=\{2,\ldots,n\}}} T_{g_1,1+|I|}(x,\mathbf{L}_{I}) T_{g_2,1+|J|}(y,\mathbf{L}_{J})\Bigg]\nonumber\\
    &+\frac{1}{2L_1}\int_0^{L_1}\dd{t}\int_0^\infty\dd{x}\sum_{j=2}^n\,x \left(K(x,t+L_j;\mu] + K(x,t-L_j;\mu]\right) T_{g,n-1}(x,\mathbf{L}_{\widehat{\{1,j\}}}),
    \end{align}
    which is the same recursion formula \eqref{eq:Mirzakhanirecursion} as for the Weil--Petersson volumes $V_{g,n}(\mathbf{L})$ except that the kernel $K_0(x,t)$ is replaced by the ``convolution''
    \begin{align}
    K(x,t,\mu]&= \int_{-\infty}^\infty \rmd X(z)\, K_0(x+z,t),
    \end{align}
    where $X(z) = X(z;\mu]$ is a measure on $\R$ determined by its two-sided Laplace transform
    \begin{align}
    \int_{-\infty}^\infty\rmd X(z)\, e^{-uz} = \hat X(u;\mu] = \frac{ \sin(2\pi u)}{2\pi u \,\eta(u;\mu]}.\label{eq:Xhat}
    \end{align}
Furthermore, we have
\begin{align}
T_{0,3}(L_1,L_2,L_3;\mu]&=\frac{1}{M_0[\mu]},\label{eq:T03}\\
T_{1,1}(L;\mu]&=-\frac{M_1[\mu]}{24M_0[\mu]^2}+\frac{L^2}{48M_0[\mu]}.
\end{align}
\end{theorem}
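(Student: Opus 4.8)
The plan is to reduce Theorem~\ref{the:tight_top_recursion} to the known topological recursion for the ordinary Weil--Petersson volumes via a careful ``decapitation'' argument on the moduli of tight surfaces. The starting point is the observation that a tight surface $X \in \mathcal{M}^{\text{tight}}_{g,n,p}(\mathbf{L},\mathbf{K})$ is in particular an element of $\mathcal{M}_{g,n+p}(\mathbf{L},\mathbf{K})$, so $V_{g,n+p}$ and hence Mirzakhani's recursion~\eqref{eq:Mirzakhanirecursion} apply to it; the tightness conditions on $\partial_1,\dots,\partial_n$ must then be translated into a modification of the geometry near the first ``cut'' performed by Mirzakhani's identity. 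Concretely, after summing over the defects with weight $\mu$, I expect the effect of the tightness constraint to be captured entirely by replacing the geodesic pair-of-pants kernel $K_0$ by the convolved kernel $K(x,t;\mu]$: the measure $X(z;\mu]$ should be exactly the ``defect dressing'' one obtains by cutting a tight surface along the geodesic isotopic to $\partial_1$ in $\invbreve S_{g,n,p}$ and accounting for all hyperbolic pieces (with their defects) that sit between $\partial_1$ and this geodesic. This is precisely the structure familiar from Bouttier--Guitter--Miermont's tight-boundary maps, where the tight-boundary generating functions are obtained from the ordinary ones by a change of the boundary-building weight.

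The key steps, in order, are as follows. First I would establish the $(g,n)=(0,3)$ and $(1,1)$ base cases directly: $T_{0,3}(\mathbf{L};\mu]$ is the $\mu$-generating function of $T_{0,3,p}$, and since $\mathcal{M}^{\text{tight}}_{0,3,p}$ is an open subset of $\mathcal{M}_{0,3+p}$ cut out by three tightness conditions, its volume is computable from the polynomials of Proposition~\ref{prop:Tpolynomial}; summing the resulting series and using the definition~\eqref{eq:Zdef}--\eqref{eq:momentdef} of $R$ and $M_0$ should collapse it to $1/M_0[\mu]$, and similarly for $T_{1,1}$. Second, I would prove the operator identity relating the functional derivative $\delta/\delta\mu(L)$ to the generating functions: adding one more defect of length $L$ is what produces the factor $I_0(L\sqrt{2R})$ appearing in $Z$, so that $R[\mu]$, $M_k[\mu]$ and $\eta$ are the correct building blocks. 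Third, and most substantially, I would run Mirzakhani's integration-by-parts / McShane-type unfolding on $\mathcal{M}^{\text{tight}}_{g,n,p}$: the first boundary $\partial_1$ being tight means the McShane decomposition along $\partial_1$ only ``sees'' the embedded collars up to the competing isotopic curve, and one must resum over the $p$ defects that may be trapped in the region between $\partial_1$ and its tight representative. The claim is that this resummation is exactly the convolution $\int \rmd X(z)\,K_0(x+z,t)$, with $\hat X(u;\mu]=\sin(2\pi u)/(2\pi u\,\eta(u;\mu])$; the remaining $\mathbf{K}$-integrations factor through and reproduce $T_{g-1,n+1}$, $T_{g_1,1+|I|}T_{g_2,1+|J|}$ and $T_{g,n-1}$ on the right-hand side exactly as in~\eqref{eq:Mirzakhanirecursion}.

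An alternative, and perhaps cleaner, route—hinted at by the abstract's mention of the Virasoro constraints—is to work on the spectral-curve side: translate $T_{g,n}(\mathbf{L};\mu]$ into Laplace-transformed invariants $\omega_{g,n}(\mathbf{z};\mu]$ as in~\eqref{eq:WPLaplace}, show that summing Eynard--Orantin's recursion for the curve~\eqref{eq:WPcurve} over defects with weight $\mu$ amounts to deforming the curve to $x(z)=z^2$, $y(z)=\hat X(z;\mu]^{-1}\cdot(\text{something})$—more precisely deforming $y$ so that its odd part encodes $\eta(u;\mu]$—and then invoke the Eynard--Orantin-to-Mirzakhani dictionary in reverse to get the stated $K$-convolution form. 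In this approach the base cases are the values of $\omega_{0,1}$, $\omega_{0,2}$ and $\omega_{1,1}$ on the deformed curve, which are fixed by $R[\mu]$ and $M_1[\mu]$, matching~\eqref{eq:T03}. The hard part, in either approach, is the same: proving rigorously that the tightness constraint on $\partial_1,\dots,\partial_n$ decouples into an independent ``dressing'' of a single boundary—i.e. that trapping defects between a tight boundary and its isotopic competitor is a local operation whose generating function is precisely $\hat X(u;\mu]$. Establishing this, presumably by an explicit pair-of-pants-and-cylinder decomposition of the region between $\partial_1$ and its tight representative together with an induction on $p$, is where the bulk of the work lies; once it is in place, the recursion follows by the same integration steps that prove~\eqref{eq:Mirzakhanirecursion}, and the polynomiality in $\mathbf{L}$ is inherited from Proposition~\ref{prop:Tpolynomial}.
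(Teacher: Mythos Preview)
Your primary approach---running a McShane-type unfolding directly on $\mathcal{M}^{\text{tight}}_{g,n,p}$ and arguing that the tightness condition on $\partial_1$ dresses Mirzakhani's kernel into $K(x,t;\mu]$---faces a genuine obstacle. The paper itself flags this in its Discussion: ``Mirzakhani's ray shooting procedure does not generalize in an obvious way'' to the tight setting, and a bijective proof of Theorem~\ref{the:tight_top_recursion} is left as an open problem. The difficulty is that McShane's identity partitions $\partial_1$ according to where orthogonal geodesic rays terminate, whereas tightness is a \emph{global} minimality condition in $[\partial_1]_{\invbreve S_{g,n,p}}$; there is no evident reason why the resulting embedded pairs of pants respect it, nor why the defects trapped between $\partial_1$ and a competing isotopic curve should factor into an independent convolution. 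Your claim that this dressing is local and equals $\hat X$ is exactly the conjectural statement the paper does not know how to prove geometrically.

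Your alternative Virasoro route is the one the paper actually takes, but you are missing its central structural step. The paper does \emph{not} sum Eynard--Orantin over defects or deform the spectral curve directly. Instead it first proves (Proposition~\ref{prop:GB_link_to_tildeT}) that the bivariate generating function of tight volumes is the \emph{same} intersection-number generating function $G(0;\cdot)$ as for ordinary volumes, merely evaluated at shifted times $\tau_k[\nu,\mu]=t_k[\nu]+\delta_{k,1}-2^{1-k}M_{k-1}[\mu]$. This is where the half-tight cylinder enters: the tight decomposition (Proposition~\ref{prop:VHTrelation}) yields $\tilde F_g[H_\mu\rho,\mu]=F_g[\rho+\mu]$ up to a genus-$0$ correction, and then a $\kappa_1$-elimination identity together with the string equation (Lemma~\ref{lem:G_renorm_x0}, following Itzykson--Zuber) undoes the $H_\mu$-twist and produces the shift $\tilde\gamma_k$. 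Once this is in place, the standard Virasoro operators $\tilde V_p$ acting on $G$ at the shifted arguments, recombined linearly with the reverse moments $\beta_m$, give precisely the recursion of the Theorem; the kernel $K$ appears through the purely algebraic identity of Proposition~\ref{prop:tildeK}, not through any geometric dressing. The base cases $T_{0,3}=1/M_0$ and $T_{1,1}$ fall out of this same computation rather than from a direct resummation of the $T_{0,3,p}$ series you propose.
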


\noindent
We will not specify precisely what it means to have a measure $X(z;\mu]$ that itself is a formal power series in $\mu$.
Importantly its moments $\int_{-\infty}^\infty \rmd X(z) \,z^p = (-1)^p p![u^p]\hat{X}(u;\mu]$ are formal power series in $\mu$, so for any $x,t\in\R$
\begin{align}
    K(x,t;\mu] = \sum_{p=0}^\infty (-1)^p \frac{\partial^p}{\partial x^p}K_0(x,t) [u^p]\hat{X}(u;\mu]
\end{align}
is a formal power series in $\mu$ as well.

In the case $\mu=0$, it is easily verified that 
\begin{align}
    M_k[0] &= \frac{(-2\pi^2)^k}{k!},\\
    \eta(u,0] &= \frac{\sin(2\pi u)}{2\pi u},
\end{align}
so $\hat{X}(u;0] = 1$ and $X(z;0] = \delta_0(z)$ and therefore one retrieves Mirzakhani's kernel $K(x,t) = K_0(x,t)$.
Given that the form of Mirzakhani's recursion is unchanged except for the kernel, this strongly suggests that the Laplace transforms 
\begin{align}
    \omega_{g,n}(\mathbf{z})=\omega_{g,n}(\mathbf{z};\mu]\coloneqq \int_{0}^\infty \left[\prod_{i=1}^n\rmd L_i\, L_i e^{-z_i L_i} \right] T_{g,n}(\mathbf{L};\mu]\label{eq:tightlaplace}
\end{align}
of the tight Weil--Petersson volumes can be obtained as invariants in the framework of topological recursion as well.
When $\mu=0$ this reduces to the Laplace-transformed Weil--Petersson volumes $\omega_{g,n}(\mathbf{z};0] = \omega_{g,n}^{(0)}(\mathbf{z})$ defined in \eqref{eq:WPLaplace}.
The following theorem shows that this is the case in general.

\begin{theorem}\label{the:tight_spectral_curve}
    Setting $\omega_{0,2}(\mathbf{z})=(z_1-z_2)^{-2}$ and $\omega_{0,0}(\mathbf{z}) = \omega_{0,1}(\mathbf{z})=0$, the Laplace transforms \eqref{eq:tightlaplace} satisfy for every $g,n\geq 0$ such that $3g-3+n \geq 0$ the recursion
    \begin{align}
        \omega_{g,n}(\mathbf{z})
        &=\residue_{u\to0}\frac{1}{2u\qty(z_1^2-u^2)\eta(u;\mu]}\Bigg[\omega_{g-1,n+1}(u,-u,\mathbf{z}_{\widehat{\{1\}}})+
        \sum_{\substack{g_1+g_2=g\\I\amalg J=\{2,\ldots,n\}}}\omega_{g_1,|I|}(u,\mathbf{z}_{I})\omega_{g_2,|J|}(-u,\mathbf{z}_{J})\Bigg].
    \end{align}
    These correspond precisely to the invariants of the curve 
    \begin{align}
        \begin{dcases}
        x=z^2\\
        y=2z\,\eta(z;\mu]. 
        \end{dcases}
    \end{align}
\end{theorem}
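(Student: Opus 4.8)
The plan is to derive the recursion by Laplace-transforming the recursion of Theorem~\ref{the:tight_top_recursion} in all of the variables $L_1,\dots,L_n$, following verbatim the argument of Eynard--Orantin \cite{eynard2007weil} that turns Mirzakhani's recursion \eqref{eq:Mirzakhanirecursion} for $V_{g,n}(\mathbf L)$ into the residue recursion for $\omega^{(0)}_{g,n}(\mathbf z)$. Two preliminary remarks make this legitimate. First, the Laplace transform \eqref{eq:tightlaplace} is well defined: by Proposition~\ref{prop:Tpolynomial} each $T_{g,n,p}(\mathbf L,\mathbf K)$ is a polynomial, so $\omega_{g,n}(\mathbf z;\mu]$ is a formal power series in $\mu$ with coefficients even polynomials in $z_1^{-1},\dots,z_n^{-1}$, and the Laplace integrals commute with the $\mu^{\otimes p}$-pairings defining $T_{g,n}(\mathbf L;\mu]$; everything below can therefore be carried out order by order in $\mu$. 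Second, as in \cite{eynard2007weil} the last term of the Mirzakhani-type recursion (the one with $K(x,t\pm L_j)$) is absorbed, by a purely combinatorial repackaging using $\omega_{0,2}(\mathbf z)=(z_1-z_2)^{-2}$, into the $g_1=0$, $|I|=1$ contributions of the sum over splittings; this manipulation does not involve the kernel and goes through unchanged.

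The one genuinely new ingredient is the Laplace transform of the kernel $K(x,t;\mu]$. In the Eynard--Orantin computation $K_0$ contributes --- after the $t$-integral, the $L_1$-Laplace transform, and the residue-extraction that replaces the polynomial arguments $x,y$ of the $T$'s by the spectral variables --- a single recursion kernel whose $u$-dependence is $\frac{\pi}{(z_1^2-u^2)\sin2\pi u}$, its $u$-factor $\frac{\pi}{\sin2\pi u}=1/y^{(0)}(u)$ being the reciprocal of the $y$-function of \eqref{eq:WPcurve}. Since $K(x,t;\mu]=\int_{-\infty}^\infty\rmd X(z)\,K_0(x+z,t)$ is a convolution in the first argument of $K_0$ --- the argument that, under this transform, is conjugate to the spectral variable $u$ --- the expected effect is to multiply the recursion kernel by the two-sided Laplace transform $\int\rmd X(z)\,e^{-uz}=\hat X(u;\mu]$, which by \eqref{eq:Xhat} yields
\[
\frac{\pi\,\hat X(u;\mu]}{(z_1^2-u^2)\sin2\pi u}=\frac{1}{2u\,(z_1^2-u^2)\,\eta(u;\mu]},
\]
exactly the kernel in the statement, with $u$-factor $1/y(u)$ for $y(u)=2u\,\eta(u;\mu]$. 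The evenness $\hat X(u;\mu]=\hat X(-u;\mu]$ (immediate from \eqref{eq:etadef}) makes this insertion compatible with the symmetric roles of $u$ and $-u$ in $\omega_{g-1,n+1}(u,-u,\dots)$ and in the products $\omega_{g_1,1+|I|}(u,\mathbf z_I)\,\omega_{g_2,1+|J|}(-u,\mathbf z_J)$. I expect this to be the main obstacle: establishing rigorously that replacing $K_0$ by $X*K_0$ multiplies the recursion kernel by $\hat X(u;\mu]$ requires pushing the convolution through the one-sided integrals over $x,y,t$ in the Eynard--Orantin identities and verifying that the boundary terms it produces do not affect the residue at $u=0$ --- where the evenness of $\hat X$ and the pole structure of the $\omega$'s come into play --- all while manipulating $X(z;\mu]$ solely through its moments $\int\rmd X(z)\,z^p=(-1)^p p!\,[u^p]\hat X(u;\mu]$, i.e. as a formal power series in $\mu$ as the paper stipulates.

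It remains to treat the cases $(g,n)\in\{(0,3),(1,1)\}$, for which the recursion of Theorem~\ref{the:tight_top_recursion} does not apply. We set $\omega_{0,0}=\omega_{0,1}=0$ and $\omega_{0,2}(\mathbf z)=(z_1-z_2)^{-2}$ by convention, exactly as for $\mu=0$, and then check the asserted residue recursion directly by Laplace-transforming the explicit $T_{0,3}(\mathbf L;\mu]=1/M_0[\mu]$ and $T_{1,1}(L;\mu]=-M_1[\mu]/(24 M_0[\mu]^2)+L^2/(48 M_0[\mu])$ of Theorem~\ref{the:tight_top_recursion} and comparing with the output of the kernel $\frac{1}{2u(z_1^2-u^2)\eta(u;\mu]}$ applied to $\omega_{0,2}$ --- a short residue computation using $\eta(0;\mu]=M_0[\mu]$ and $\eta''(0;\mu]=\tfrac23 M_1[\mu]$. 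With these initial data the recursion determines all $\omega_{g,n}(\mathbf z;\mu]$ with $3g-3+n\ge0$, and, having the displayed residue form, it is by definition the Eynard--Orantin topological recursion for the curve $x=z^2$, $y=2z\,\eta(z;\mu]$: the dictionary between this residue form and the invariants of the curve is the same as in the $\mu=0$ case of \cite{eynard2007weil}, the curve being admissible since $x=z^2$ has a single simple ramification point at $z=0$ with local involution $z\mapsto-z$, while $y=2z\,\eta(z;\mu]$ is odd with a simple zero there because $\eta(0;\mu]=M_0[\mu]=1+O(\mu)$, order by order in $\mu$.
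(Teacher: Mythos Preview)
Your strategy is sound but differs from the paper's. The paper does \emph{not} Laplace-transform Theorem~\ref{the:tight_top_recursion}. Instead it goes back to the common source of both theorems, the Virasoro-shifted differential equation \eqref{eq:Gdiffeq} for the intersection number generating function $G$, introduces the operator $\Delta(z)=4\sum_{p\geq0}(2z^2)^{-1-p}(2p+1)!!\,\partial_{x_p}$ so that $\omega_{g,n}(\mathbf z)=2^{3g-3}\Delta(z_1)\cdots\Delta(z_n)G_g\big|_{x=0}$, and rewrites \eqref{eq:Gdiffeq} directly as a residue identity in an auxiliary variable $u$ via a handful of coefficient-extraction identities for $\Delta(u)$, $\Delta(u)\Delta(-u)$ and $1/\bigl(u(z_1^2-u^2)\eta(u)\bigr)$. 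Extracting the $\lambda^{2g-2}$ part and setting the $x_k$ to zero yields the recursion in one stroke; the $(0,3)$ and $(1,1)$ initial data and the absorption of the $\omega_{0,2}$ terms appear automatically rather than being checked separately.

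Your route via Theorem~\ref{the:tight_top_recursion} also works, and the obstacle you identify is exactly what the paper isolates as Lemmas~\ref{lem:hatK0} and~\ref{lem:hattildeK}: the odd part of the one-sided Laplace transform obeys
\[
\hat K(u,t)-\hat K(-u,t)=\hat X(u)\Bigl(\hat K_0(u,t)-\hat K_0(-u,t)-\tfrac{2}{u}\Bigr)+\tfrac{2\hat X(0)}{u}
= -\frac{2\cosh(tu)}{u\,\eta(u)}+\frac{2\hat X(0)}{u},
\]
so the convolution does multiply the relevant $\cosh(tu)/\sin(2\pi u)$ piece by $\hat X(u)$, the surviving boundary term $2\hat X(0)/u$ being a pure $u^{-1}$ contribution that drops out of the moment integrals and hence of the residue. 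Since those lemmas were already proved in the paper en route to Theorem~\ref{the:tight_top_recursion} (through Proposition~\ref{prop:tildeK}), your argument is logically independent but effectively crosses the same bridge in the opposite direction; the paper's direct passage from \eqref{eq:Gdiffeq} to the residue form avoids the round trip through the real-variable recursion and delivers the base cases for free.
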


Another consequence of Theorem~\ref{the:tight_top_recursion} is that the tight Weil--Petersson volumes $T_{g,n}(\mathbf{L};\mu]$ for all $g,n\geq 0$, such that $n\geq 3$ for $g=0$ and $n\geq 1$ for $g=1$, are expressible as a rational polynomial in $L_1^2,\ldots, L_n^2$ and $M_0^{-1}, M_1 , M_2, \ldots$.

Besides satisfying a recursion in the genus $g$ and the number of tight boundaries $n$, these also satisfy a recurrence relation in $n$ only.

\begin{theorem}\label{the:recursion_n} 
    For all $g,n\geq 0$, such that $n\geq 3$ for $g=0$ and $n\geq 1$ for $g=1$, we have that
    \begin{align}
        T_{g,n}(\mathbf{L};\mu] = \frac{1}{M_0^{2g-2+n}} \mathcal{P}_{g,n}\left(\mathbf{L},\frac{M_1}{M_0},\ldots,\frac{M_{3g-3+n}}{M_0}\right),
    \end{align}
    where $\mathcal{P}_{g,n}(\mathbf{L},\mathbf{m})$ is a rational polynomial in $L_1^2,\ldots,L_n^2,m_1,\ldots,m_{3g-3+n}$.
    This polynomial is symmetric and of degree $3g-3+n$ in $L_1^2,\ldots,L_n^2$, while $\mathcal{P}_{g,n}(\sqrt\sigma \mathbf{L},\sigma m_1,\sigma^2 m_2, \sigma^3 m_3, \ldots)$ is homogeneous of degree $3g-3+n$ in $\sigma$.
    For all $g\geq 0$, $n\geq 1$ such that $2g-3+n>0$ the polynomial $\mathcal{P}_{g,n}(\mathbf{L},\mathbf{m})$ can be obtained from $\mathcal{P}_{g,n-1}(\mathbf{L},\mathbf{m})$ via the recursion relation
    \begin{align}
        \mathcal{P}_{g,n}(\mathbf{L},\mathbf{m}) &= \sum_{p=1}^{3g-4+n} \left(m_{p+1} - \frac{L_1^{2p+2}}{2^{p+1}(p+1)!}-m_1 m_p + \frac{1}{2}L_1^2 m_p\right) \frac{\partial \mathcal{P}_{g,n-1}}{\partial m_p}(\mathbf{L}_{\widehat{\{1\}}},\mathbf{m}) \nonumber\\
        &\quad  + (2g-3+n)(-m_1+\tfrac12 L_1^2) \mathcal{P}_{g,n-1}(\mathbf{L}_{\widehat{\{1\}}},\mathbf{m})
        + \sum_{i=2}^n \int \dd{L_i} L_i \, \mathcal{P}_{g,n-1}(\mathbf{L}_{\widehat{\{1\}}},\mathbf{m}),\label{eq:Pgnrecursion}
    \end{align}
    where we use the shorthand notation $\int \dd{L} L f(L,\ldots) = \int_0^L \dd{x} x f(x,\ldots)$.
    Furthermore, we have 
    \begin{align}
    \mathcal{P}_{0,3}(\mathbf{L})&=1\\
    \mathcal{P}_{1,1}(L_1,m_1)&=\frac{1}{24}(-m_1+\tfrac12 L_1^2)
    \end{align}
    and $\mathcal{P}_{g,0}$ for $g\geq2$ is given by
    \begin{equation}
    \mathcal{P}_{g,0}(m_1,\ldots,m_{3g-3}) = \sum_{\substack{d_2,d_3,\ldots\geq 0\\ \sum_{k\geq 2} (k-1)d_k = 3g-3}}\!\!\!\! \langle \tau_2^{d_2}\tau_3^{d_3}\cdots \rangle_g \prod_{k\geq 2} \frac{(-m_{k-1})^{d_k}}{d_k!},
    \end{equation}
    where $\langle \tau_2^{d_2}\tau_3^{d_3}\cdots \rangle_g$ are the $\psi$-class intersection numbers on the moduli space $\mathcal{M}_{g,n}$ with $n = \sum_k d_k \leq 3g-3$ marked points.
\end{theorem}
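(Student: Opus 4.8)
The plan is to deduce the theorem from the spectral-curve description of Theorem~\ref{the:tight_spectral_curve}, together with the general structure of topological recursion for a curve with a single simple ramification point, and to produce the recursion in $n$ from the string and dilaton equations of intersection theory; the initial data will be read off directly from Theorem~\ref{the:tight_top_recursion}. First I would establish the closed form $T_{g,n}(\mathbf{L};\mu]=M_0^{-(2g-2+n)}\mathcal{P}_{g,n}(\mathbf{L},M_1/M_0,\ldots,M_{3g-3+n}/M_0)$. By Theorem~\ref{the:tight_spectral_curve} the Laplace transforms $\omega_{g,n}(\mathbf z;\mu]$ are the topological-recursion invariants of $x=z^2$, $y=2z\,\eta(z;\mu]=2\sum_{p\geq0}\tfrac{M_p}{(2p+1)!!}\,z^{2p+1}$, whose only ramification point is the simple one at $z=0$ (here $M_0=1+O(\mu)\neq0$). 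The standard structure theorem for such invariants writes $\omega_{g,n}(\mathbf z)=\sum_{\mathbf d}c_{g,n}(\mathbf d)\prod_i(2d_i+1)!!\,z_i^{-2d_i-2}$ as a finite sum whose coefficients are, up to an overall factor $M_0^{-(2g-2+n)}$, weighted-homogeneous polynomials of degree $3g-3+n-\sum_i d_i$ in the normalized Taylor coefficients $M_k/M_0$ of $y$; this is precisely the invariance under $z\mapsto\sqrt\sigma\,z$, $M_p\mapsto\sigma^p M_p$ stated in the theorem. Inverting \eqref{eq:tightlaplace} via $\int_0^\infty L^{2d+1}e^{-zL}\rmd L=(2d+1)!\,z^{-2d-2}$ turns this into $T_{g,n}(\mathbf L)=\sum_{\mathbf d}c_{g,n}(\mathbf d)\prod_i\tfrac{L_i^{2d_i}}{2^{d_i}d_i!}$, the claimed symmetric polynomial in $L_1^2,\ldots,L_n^2$. (Alternatively the same statement can be proved by induction on $2g-2+n$ directly from Theorem~\ref{the:tight_top_recursion}, the point being that the two-sided Laplace transform of the kernel is $\hat X(u;\mu]=\sin(2\pi u)/(2\pi u\,\eta(u;\mu])$, so the kernel operations send $x^{2a}$ to a polynomial in $L_1^2$ with coefficients in $M_0^{-1}\Q[M_1/M_0,M_2/M_0,\ldots]$.) The base cases $\mathcal{P}_{0,3}=1$ and $\mathcal{P}_{1,1}=\tfrac1{24}(-m_1+\tfrac12 L_1^2)$ follow at once from \eqref{eq:T03} and the displayed formula for $T_{1,1}$.

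I would then treat the free energies. Since $\omega_{g,0}=F_g$ is the genus-$g$ free energy of the curve above, the general closed formula for topological-recursion free energies of a one-ramification-point curve gives $F_g$, for $g\geq2$, as a generating function of $\psi$-class intersection numbers in which the descendant times are identified with $-M_{k-1}/M_0$ and the overall factor $M_0$ of $y$ plays the role of the dilaton shift, which removes all $\tau_0$ and $\tau_1$ insertions. This yields exactly $\mathcal{P}_{g,0}(\mathbf m)=\sum_{\mathbf d}\langle\tau_2^{d_2}\tau_3^{d_3}\cdots\rangle_g\prod_{k\geq2}\tfrac{(-m_{k-1})^{d_k}}{d_k!}$, the constraint $\sum_k(k-1)d_k=3g-3$ being the dimension equation $\sum_k kd_k=\dim\overline{\mathcal M}_{g,\sum_k d_k}$. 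The same circle of ideas (iterating the loop-insertion operator of topological recursion, or using the Virasoro/KdV identification already exploited in the proof of Theorem~\ref{the:tight_top_recursion}) represents every $\mathcal{P}_{g,n}$ as a generating function of intersection numbers carrying at each of the $n$ tight boundaries a descendant $\psi$-insertion summed over powers $d$ with weight $L_i^{2d}/(2^d d!)$, together with $\tau_{\geq2}$-insertions weighted by the $-m_{k-1}$.

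Finally, the recursion \eqref{eq:Pgnrecursion} expresses the effect of adjoining the tight boundary $L_1$, which amounts to inserting $\sum_{d\geq0}\tfrac{L_1^{2d}}{2^d d!}\tau_d$ at a new marked point of the $(g,n-1)$ geometry. I would evaluate this term by term: the $\tau_0$-part is governed by the string equation and, through $\omega_{0,2}$-type reattachments to the other tight boundaries, produces $\sum_{i=2}^n\int\dd{L_i}L_i\,\mathcal{P}_{g,n-1}(\mathbf{L}_{\widehat{\{1\}}},\mathbf{m})$; the $\tau_1$-part is governed by the dilaton equation and produces the term proportional to $(2g-3+n)=(2g-2+(n-1))$ times $\mathcal{P}_{g,n-1}$; and for $p\geq1$ the pushforward relation $\kappa_p=\pi_*\psi^{p+1}$ converts the insertion $\tau_{p+1}$ into a $\kappa_p$-insertion on the lower moduli space, i.e. into $\partial\mathcal{P}_{g,n-1}/\partial m_p$ since the $m_p$ are the $\kappa_p$-generating times. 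Collecting the $L_1$-dependence coming from the weights $L_1^{2d}/(2^d d!)$ and from the dilaton shift should then produce the coefficient $m_{p+1}-\tfrac{L_1^{2p+2}}{2^{p+1}(p+1)!}-m_1m_p+\tfrac12L_1^2m_p$ of $\partial\mathcal{P}_{g,n-1}/\partial m_p$. The main obstacle is exactly this last step: disentangling the dilaton shift of the curve through the change of variables between $R[\mu]$, $M_k[\mu]$ and the intersection-theory times so as to reproduce that precise combination, together with the normalization linking $\psi$-powers to powers of $L_1$. A route that bypasses intersection theory is to define $\widetilde{\mathcal{P}}_{g,n}$ for $n\geq1$ by \eqref{eq:Pgnrecursion} from the stated initial data and to verify by induction on $2g-2+n$ that $M_0^{-(2g-2+n)}\widetilde{\mathcal{P}}_{g,n}(\mathbf L,M_1/M_0,\ldots)$ satisfies the recursion of Theorem~\ref{the:tight_top_recursion}, whence $\widetilde{\mathcal{P}}_{g,n}=\mathcal{P}_{g,n}$ by uniqueness; this replaces the geometric input with a longer but elementary manipulation of the kernel integrals.
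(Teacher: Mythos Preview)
Your approach differs substantially from the paper's, and the gap you flag is real.

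The paper does not go through Theorem~\ref{the:tight_spectral_curve} or invoke general structure theorems for one-branch-point topological recursion. Instead it works directly from Proposition~\ref{prop:GB_link_to_tildeT}, which identifies $\tilde F_g[\nu,\mu]$ with $G_g(0;\boldsymbol\tau)$ at the shifted times $\tau_k=t_k[\nu]+\delta_{k,1}-2^{1-k}M_{k-1}[\mu]$. Writing $G_{g,n}$ for the part homogeneous of degree $n$ in the $x_k=t_k[\nu]$, the key observation is that for $k\geq1$ the variables $x_k$ and $\mathsf M_{k-1}$ enter only through $x_k-2^{1-k}\mathsf M_{k-1}$, so $\partial G_{g,n}/\partial x_k=-2^{k-1}\,\partial G_{g,n-1}/\partial\mathsf M_{k-1}$. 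The string equation (the $p=-1$ case of \eqref{eq:stringshift}) then supplies $\partial G_{g,n}/\partial x_0$, and Euler's identity $G_{g,n}=\tfrac1n\sum_k x_k\,\partial G_{g,n}/\partial x_k$ gives the recursion in $n$ in a few lines. The polynomial form \eqref{eq:Pbarpolynomial} and the exact coefficient $m_{p+1}-\tfrac{L_1^{2p+2}}{2^{p+1}(p+1)!}-m_1m_p+\tfrac12L_1^2m_p$ fall out mechanically once one expands $\partial/\partial\mathsf M_0$ via the chain rule on $m_k=\mathsf M_k/\mathsf M_0$ and the prefactor $\mathsf M_0^{-(2g-2+n)}$. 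No $\kappa$-classes, pushforward relations, or dilaton equation are invoked explicitly.

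Your route through $\kappa_p=\pi_*\psi^{p+1}$ and string/dilaton is morally the geometric content behind this algebra, but as written it is incomplete at exactly the point you identify: you have not shown how the dilaton shift and the $M_0$-normalization combine to produce the cross-term $-m_1m_p$, nor how the three contributions (string, dilaton, $\kappa$) assemble with the correct signs and powers of $2$. Your fallback proposal---define $\widetilde{\mathcal P}_{g,n}$ by \eqref{eq:Pgnrecursion} and check it satisfies the recursion of Theorem~\ref{the:tight_top_recursion}---would work in principle, but it is considerably heavier than the paper's argument and you would still need the $n=0$ initial data, for which the paper simply quotes \cite[Lemma~12]{budd2020irreducible} rather than a general free-energy formula for topological recursion.
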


\noindent For instance, the first few applications of the recursion yield
\begin{align*}
    \mathcal{P}_{0,4}(\mathbf{L},\mathbf{m}) &= \frac{1}{2} (L_1^2+\cdots+L_4^2) - m_1,\\
    \mathcal{P}_{0,5}(\mathbf{L},\mathbf{m}) &= \frac{1}{8} (L_1^4+\cdots+L_5^4) + \frac{1}{2}(L_1^2L_2^2+\cdots+L_4^2L_5^2) - \frac{3}{2}(L_1^2+\cdots+L_5^2)m_1 + 3 m_1^2 - m_2,\\
    \mathcal{P}_{1,2}(\mathbf{L},\mathbf{m}) &= \frac{1}{192}(L_1^4+L_2^4) + \frac{1}{96}L_1^2L_2^2 -\frac{1}{24}(L_1^2+L_2^2)m_1 + \frac{1}{12}m_1^2-\frac{1}{24}m_2^2.
\end{align*}
Note that this provides a relatively efficient way of calculating the Weil--Petersson volumes $V_{g,n}(\mathbf{L})$ from the polynomial $\mathcal{P}_{g,0}$, since
\begin{align*}
    V_{g,n}(\mathbf{L}) = T_{g,n}(\mathbf{L};0] = \mathcal{P}_{g,n}(\mathbf{L},\mathbf{m}) \Big|_{m_k = (-2\pi^2)^k/k!}.
\end{align*} 
A simple corollary of Theorem~\ref{the:recursion_n} is that the volumes satisfy \emph{string} and \emph{dilaton} equations generalizing those for the Weil--Petersson volumes derived by Do \& Norbury in \cite[Theorem~2]{Do_Weil_2009}.
\begin{corollary}\label{cor:stringdilaton}
    For all $g \geq 0$ and $n \geq 1$, such that $n\geq 4$ when $g=0$ and $n\geq 2$ when $g=1$, we have the identities 
    \begin{align}
        \sum_{p=0}^\infty 2^p p!\, M_p[\mu] \, [L_{1}^{2p}] T_{g,n}(\mathbf{L};\mu] &= \sum_{j=2}^n \int \rmd L_j \,L_j T_{g,n-1}(\mathbf{L}_{\widehat{\{1\}}};\mu]+\ind_{\{g=0,n=3\}},\\
        \sum_{p=1}^\infty 2^p p!\, M_{p-1}[\mu]\, [L_{1}^{2p}] T_{g,n}(\mathbf{L};\mu] &= (2g-3+n)T_{g,n-1}(\mathbf{L}_{\widehat{\{1\}}};\mu],
    \end{align}
    where the notation $[L_1^{2p}]T_{g,n}(\mathbf{L};\mu]$ refers to the coefficient of $L_1^{2p}$ in the polynomial $T_{g,n}(\mathbf{L};\mu]$.
\end{corollary}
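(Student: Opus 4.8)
The plan is to deduce both identities directly from Theorem~\ref{the:recursion_n}: from the structure $T_{g,n}(\mathbf{L};\mu]=M_0^{-(2g-2+n)}\mathcal{P}_{g,n}(\mathbf{L},\mathbf m)$ with $\mathbf m=(m_1,\ldots,m_{3g-3+n})$, $m_k=M_k/M_0$, together with the recursion \eqref{eq:Pgnrecursion}. Extracting the coefficient of $L_1^{2p}$ commutes with the substitution $m_k\mapsto M_k/M_0$, so it suffices to prove the corresponding polynomial identities for the $\mathcal{P}_{g,n}$ and then divide by $M_0^{2g-2+n}$. Note that the sums over $p$ in the statement are effectively finite because $T_{g,n}$ has $L_1^2$-degree $3g-3+n$, and that $2g-3+n\geq 1$ throughout the stated range, so \eqref{eq:Pgnrecursion} applies uniformly.

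The first step is to read off the $L_1^{2p}$-coefficients on the right-hand side of \eqref{eq:Pgnrecursion}. The only $L_1$-dependence there sits in the monomials $\tfrac12 L_1^2 m_p-L_1^{2p+2}/(2^{p+1}(p+1)!)$ multiplying $\partial_{m_p}\mathcal{P}_{g,n-1}$ and in $\tfrac12 L_1^2$ multiplying $(2g-3+n)\mathcal{P}_{g,n-1}$, whereas $\mathcal{P}_{g,n-1}$ itself and the integral term carry no $L_1$. Hence, with every $\mathcal{P}_{g,n-1}$ evaluated at $(\mathbf{L}_{\widehat{\{1\}}},\mathbf m)$,
\begin{align*}
[L_1^0]\mathcal{P}_{g,n}&=\sum_{p\geq 1}(m_{p+1}-m_1 m_p)\,\partial_{m_p}\mathcal{P}_{g,n-1}-(2g-3+n)\,m_1\,\mathcal{P}_{g,n-1}+\sum_{i=2}^n\int\dd{L_i}L_i\,\mathcal{P}_{g,n-1},\\
[L_1^2]\mathcal{P}_{g,n}&=\tfrac12\sum_{p\geq 1}m_p\,\partial_{m_p}\mathcal{P}_{g,n-1}+\tfrac12(2g-3+n)\,\mathcal{P}_{g,n-1},\qquad [L_1^{2k}]\mathcal{P}_{g,n}=-\tfrac{1}{2^k k!}\,\partial_{m_{k-1}}\mathcal{P}_{g,n-1}\quad(k\geq 2).
\end{align*}

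For the string equation one then forms $\sum_{p\geq 0}2^p p!\,M_p\,[L_1^{2p}]\mathcal{P}_{g,n}=M_0[L_1^0]\mathcal{P}_{g,n}+2M_1[L_1^2]\mathcal{P}_{g,n}+\sum_{k\geq 2}2^k k!\,M_k[L_1^{2k}]\mathcal{P}_{g,n}$; substituting the three formulas above and using $M_k=M_0 m_k$ for $k\geq 1$, the prefactors $2^k k!$ cancel the $1/(2^k k!)$, and the terms $\sum_p m_{p+1}\partial_{m_p}\mathcal{P}_{g,n-1}$, $m_1\sum_p m_p\partial_{m_p}\mathcal{P}_{g,n-1}$ and $(2g-3+n)m_1\mathcal{P}_{g,n-1}$ each occur twice with opposite signs and drop out, leaving $M_0\sum_{i=2}^n\int\dd{L_i}L_i\,\mathcal{P}_{g,n-1}$. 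Dividing by $M_0^{2g-2+n}$ and recognizing $M_0^{-(2g-3+n)}\mathcal{P}_{g,n-1}=T_{g,n-1}$ yields the string identity, the indicator $\ind_{\{g=0,n=3\}}$ being zero in the stated range (if one wished to also allow $(g,n)=(0,3)$ it would be supplied by $T_{0,3}=1/M_0$, cf.\ \eqref{eq:T03}). The dilaton equation is identical in spirit: forming $\sum_{p\geq 1}2^p p!\,M_{p-1}\,[L_1^{2p}]\mathcal{P}_{g,n}=2M_0[L_1^2]\mathcal{P}_{g,n}+\sum_{k\geq 2}2^k k!\,M_{k-1}[L_1^{2k}]\mathcal{P}_{g,n}$, this time the sums $\sum_p m_p\partial_{m_p}\mathcal{P}_{g,n-1}$ cancel and $M_0(2g-3+n)\mathcal{P}_{g,n-1}$ survives, which upon division by $M_0^{2g-2+n}$ is $(2g-3+n)T_{g,n-1}$.

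No genuine obstacle is expected: once Theorem~\ref{the:recursion_n} is available, this is a coefficient-extraction identity. The only care needed is bookkeeping — tracking which power of $L_1$ and which moment index each term of \eqref{eq:Pgnrecursion} feeds into, respecting the exact upper limits of the finite sums so that no boundary term is dropped, and noting that the $p=0$ term of the string sum (resp.\ the $p=1$ term of the dilaton sum) contributes the bare $M_0$-weighted $[L_1^0]\mathcal{P}_{g,n}$ (resp.\ $[L_1^2]\mathcal{P}_{g,n}$) rather than an $m$-weighted one — together with a quick sanity check that the excluded low-complexity cases $(0,3)$ and $(1,1)$ are consistent with the explicit values of $T_{0,3}$ and $T_{1,1}$.
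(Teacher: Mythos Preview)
Your proposal is correct and follows essentially the same route as the paper: extracting the $L_1^{2p}$-coefficients from the recursion \eqref{eq:Pgnrecursion} for $\mathcal{P}_{g,n}$, forming the weighted sums $\sum_p 2^p p!\,m_p[L_1^{2p}]\mathcal{P}_{g,n}$ and $\sum_p 2^p p!\,m_{p-1}[L_1^{2p}]\mathcal{P}_{g,n}$ (with the convention $m_0=1$), observing the telescoping cancellations, and then passing back to $T_{g,n}$ via \eqref{eq:TPrel}. The bookkeeping you describe matches the paper's computation line by line.
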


As explained in \cite{Do_Weil_2009}, the string and dilaton equations for symmetric polynomials, in particular for the Weil--Petersson volumes, give rise to a recursion in $n$ for genus $0$ and $1$. Using Theorem \ref{the:recursion_n}, we also get such a recursion for higher genera in the case of tight Weil--Petersson volumes.

\subsection{Idea of the proofs}

This work is largely inspired by the recent work \cite{Bouttier_Bijective_2022} of Bouttier, Guitter \& Miermont.
There the authors consider the enumeration of planar maps with three boundaries, i.e.\ graphs embedded in the triply punctured sphere, see the left side of Figure~\ref{fig:tightmap}.
Explicit expressions for the generating functions of such maps, also known as pairs of pants, with controlled face degrees were long known, but they show that these generating functions become even simpler when restricting to \emph{tight} pairs of pants, in which the three boundaries are required to have minimal length (in the sense of graph distances) in their homotopy classes.
They obtain their enumerative results on tight pairs of pants in a bijective manner by considering a canonical decomposition of a tight pair of pants into certain triangles and diangles, see Figure~\ref{fig:tightmap}. 

\begin{figure}
    \centering
    \includegraphics[width=.8\linewidth]{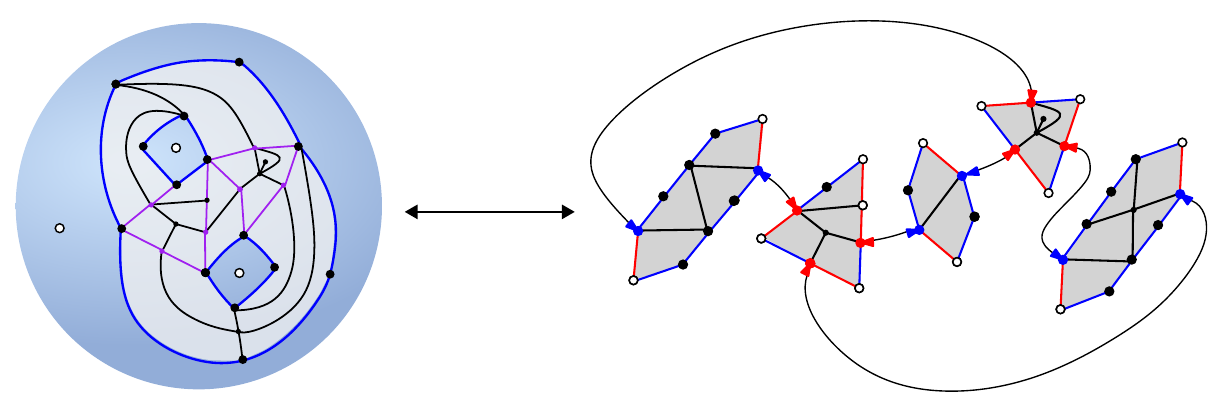}
    \caption{An example of a tight pair of pants, i.e.\ a planar map with three boundaries (in blue) of minimal length in their homotopy class in the triply punctured sphere. On the right the canonical decomposition described in \cite{Bouttier_Bijective_2022}. Figure adapted from \cite[Fig.~1.1]{Bouttier_Bijective_2022}.\label{fig:tightmap}}
\end{figure}

Our result \eqref{eq:T03} for the genus-$0$ tight Weil--Petersson volumes with three distinguished boundaries can be seen as the analogue of \cite[Theorem~1.1]{Bouttier_Bijective_2022}, although less powerful because our proof is not bijective.
Instead, we derive generating functions of tight Weil--Petersson volumes from known expressions in the case of ordinary Weil--Petersson volumes.
The general idea is that a genus-$0$ hyperbolic surface with two distinguished (but not necessarily tight) boundaries can unambiguously be cut along a shortest geodesic separating those two boundaries, resulting in a pair of certain half-tight cylinders (Figure~\ref{fig:cylinderdecomposition}).
Also a genus-$g$ surface with $n$ distinguished (not necessarily tight) boundaries can be shown to decompose into a tight hyperbolic surface and $n$ half-tight cylinders.
The first decomposition uniquely determines the Weil--Petersson volumes of the moduli spaces of half-tight cylinders, while the second determines the tight Weil--Petersson volumes.
This relation is at the basis of Proposition~\ref{prop:Tpolynomial}. 

To arrive at the recursion formula of Theorem~\ref{the:tight_top_recursion} we follow the line or reasoning of Mirzakhani's proof \cite{Mirzakhani2007a} of Witten's conjecture \cite{Witten_Two_1991} (proved first by Kontsevich \cite{Kontsevich1992}).
She observes that the recursion equation \eqref{eq:Mirzakhanirecursion} implies that the generating function of certain intersection numbers satisfies an infinite family of partial differential equations, the Virasoro constraints. 
Mulase \& Safnuk \cite{mulase2006mirzakhanis} have observed that the reverse implication is true as well.
We will demonstrate that the generating functions of tight Weil--Petersson volumes and ordinary Weil--Petersson volumes are related in a simple fashion when expressed in terms of the times \eqref{eq:times} and that the former obey a modified family of Virasoro constraints. 
These constraints in turn are equivalent to the generalized recursion of Theorem~\ref{the:tight_top_recursion}.

\subsection{Discussion}

Mirzakhani's recursion formula has a bijective interpretation \cite{Mirzakhani2007}.
Upon multiplication by $2L_1$ the left-hand side $2L_1 V_{g,n}(\mathbf{L})$ accounts for the volume of surfaces with a marked point on the first boundary.
Tracing a geodesic ray from this point, perpendicularly to the boundary, until it self-intersects or hits another boundary allows one to canonically decompose the surface into a hyperbolic pair of pants (3-holed sphere) and one or two smaller hyperbolic surfaces.
The terms on the right-hand side of \cite{Mirzakhani2007} precisely take into account the Weil--Petersson volumes associated to these parts and the way they are glued.

It is natural to expect that Theorem~\ref{the:tight_top_recursion} admits a similar bijective interpretation, in which the surface decomposes into a tight pair of pants (a sphere with $3+p$ boundaries, three of which are tight) and one or two smaller tight hyperbolic surfaces.
However, Mirzakhani's ray shooting procedure does not generalize in an obvious way.
Nevertheless, working under the assumption that a bijective decomposition exists, one is led to suspect that the generalized kernel $K(x,t,\mu]$ of Theorem~\ref{the:tight_top_recursion} contains important information about the geometry of tight pairs of pants.
Moreover, one would hope that this geometry can be further understood via a decomposition of the tight pairs of pants themselves analogous to the planar map case of \cite{Bouttier_Bijective_2022} described above.

Since a genus-$0$ surface $\mathsf{X}\in\mathcal{M}_{0,3+p}(0,0,0,\mathbf{L})$ with three distinguished cusps is always a tight pair of pants (since the zero length boundaries are obviously minimal), a consequence of a bijective interpretation of Theorem~\ref{the:tight_top_recursion} is a conjectural interpretation of the series $\hat{X}(u;\mu] = \sin(2\pi u)/(2\pi u\eta)$ in \eqref{eq:Xhat} in terms of the hyperbolic distances between the three cusps.
To be precise, let $c_1,c_2,c_3$ be unit-length horocycles around the three cusps and $\Delta(\mathsf{X}) = d_{\mathrm{hyp}}(c_1,c_2)-d_{\mathrm{hyp}}(c_1,c_3)$ the difference in hyperbolic distance between two pairs, then it is plausible that
\begin{align}
    \hat{X}(u;\mu] \stackrel{?}{=} \sum_{p\geq 0} \frac{1}{p!}\int \left(\int_{\mathcal{M}_{0,3+p}(0,0,0,\mathbf{L})} e^{2u \Delta(\mathsf{X})} \rmd \mu_{\mathrm{WP}}(\mathsf{X}) \right) \rmd \mu(L_1)\cdots\rmd \mu(L_p).
\end{align}
Or in the probabilistic terms of Section~\ref{sec:probintermezzo}, the measure $M_0[\mu] X(z;\mu]$ on $\mathbb{R}$, which integrates to $1$ due to \eqref{eq:T03}, is the probability distribution of the random variable $2\Delta(\mathsf{X})$ in a genus-$0$ Boltzmann hyperbolic surface $\mathsf{X}$ with weight $\mu$.
In upcoming work we shall address this conjecture using very different methods.

Another natural question to ask is whether the generalization of the spectral curve \eqref{eq:WPcurve} of Weil--Petersson volumes to the one of tight Weil--Petersson volumes in Theorem~\ref{the:tight_spectral_curve} can be understood in the general framework of deformations of spectral curves in topological recursion \cite{Eynard_Invariants_2007}.

\subsection{Outline}
The structure of the paper is as follows:

In section~\ref{sec:decomp} we introduce the half-tight cylinder, which allows us to do tight decomposition of surfaces, which relates the regular hyperbolic surfaces to the tight surfaces. Using the decomposition we prove Proposition~\ref{prop:Tpolynomial}. 

In section~\ref{sec:generating_func} consider the generating functions of (tight) Weil--Petersson volumes and their relations. Furthermore, we use the Virasoro constraints to prove Theorem \ref{the:tight_top_recursion}, Theorem~\ref{the:recursion_n} and Corollary~\ref{cor:stringdilaton}.

In section~\ref{sec:Laplace} we take the Laplace transform of the tight Weil--Petersson volumes and prove Theorem~\ref{the:tight_spectral_curve}. We also look at the relation between the disk function of the regular hyperbolic surfaces and the generating series of moment $\eta$.

Finally, in section \ref{sec:JT} we briefly discuss how our results may be of use in the study of JT gravity.

\paragraph{Acknowledgments} This work is supported by the START-UP 2018 programme with project
number 740.018.017 and the VIDI programme with project number VI.Vidi.193.048, which are
financed by the Dutch Research Council (NWO).

\newpage
\section{Decomposition of tight hyperbolic surfaces}\label{sec:decomp}

\subsection{Half-tight cylinder}

Recall that a boundary $\partial_i$ of $X \in \mathcal{M}_{g,n+p}(\mathbf{L})$ is said to be \emph{tight in }$\invbreve{S}_{g,n,p}$ if $\partial_i$ is the only simple cycle $\gamma$ in $[\partial_i]_{\invbreve{S}_{g,n,p}}$ of length $\ell_\gamma(X) \leq L_i$ and we defined the \emph{moduli space of tight hyperbolic surfaces} as 
\begin{align}
	\mathcal{M}^{\text{tight}}_{g,n,p}(\mathbf{L}) &= \{ X \in \mathcal{M}_{g,n+p}(\mathbf{L}) : \partial_i\text{ is tight in }\invbreve{S}_{g,n,p}\} \subset \mathcal{M}_{g,n+p}(\mathbf{L}).
\end{align}
We noted before that when $g=0$ and $n=2$ we have $\mathcal{M}^{\text{tight}}_{0,2,p}(\mathbf{L}) = \emptyset$ because $\partial_1$ and $\partial_2$ belong to the same free homotopy class of $\invbreve{S}_{0,2,p}$ and can therefore never both be the unique shortest cycle.
Instead, it is useful for any $p\geq 1$ to consider the \emph{moduli space of half-tight cylinders}
\begin{align}
	\mathcal{H}_{p}(\mathbf{L}) &= \{ X \in \mathcal{M}_{0,2+p}(\mathbf{L}) : \partial_2\text{ is tight in }\invbreve{S}_{0,2,p}\} \subset \mathcal{M}_{0,2+p}(\mathbf{L}),
\end{align}
which is non-empty whenever $L_1 > L_2 > 0$.
We will also consider 
\begin{align}\label{eq:weaklyhalftight}
	\overline{\mathcal{H}_p}(\mathbf{L}) &= \{ X \in \mathcal{M}_{0,2+p}(\mathbf{L}) : \partial_2\text{ has minimal length in }[\partial_2]_{\invbreve{S}_{0,2,p}}\} \subset \mathcal{M}_{0,2+p}(\mathbf{L})
\end{align}
and denote its Weil--Petersson volume by $H_p(\mathbf{L})$.
By construction, it is non-zero for $L_1 \geq L_2 > 0$ and $H_p(\mathbf{L}) \leq V_{0,2+p}(\mathbf{L})$.
\begin{lemma}
	$\mathcal{H}_p(\mathbf{L})$ is an open subset of $\mathcal{M}_{0,2+p}(\mathbf{L})$, and when it is non-empty ($L_1>L_2$) its closure is $\overline{\mathcal{H}_p}(\mathbf{L})$.
	In particular, both have the same finite Weil--Petersson volume $H_p(\mathbf{L})$ when $L_1 > L_2$, but $\mathcal{H}_p(\mathbf{L})$ has $0$ volume and $\overline{\mathcal{H}_p}(\mathbf{L})$ non-zero volume $H_p(\mathbf{L})$ when $L_1 = L_2$.
\end{lemma}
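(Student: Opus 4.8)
The plan is to express both $\mathcal{H}_p(\mathbf L)$ and $\overline{\mathcal H_p}(\mathbf L)$ as level sets of a single function on moduli space. Fix $\mathbf L$ and let $\mathcal C$ be the countable set of free homotopy classes in $S_{0,2+p}$ of simple cycles that lie in $[\partial_2]_{\invbreve{S}_{0,2,p}}$ but are not isotopic to $\partial_2$ inside $S_{0,2+p}$; concretely, $\gamma\in\mathcal C$ is a simple cycle separating $\partial_2$ from $\partial_1$, and either $\gamma$ is isotopic to $\partial_1$ or the component of $S_{0,2+p}\setminus\gamma$ containing $\partial_2$ also contains at least one of $\partial_3,\dots,\partial_{2+p}$. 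Put $F(X)=\inf_{\gamma\in\mathcal C}\ell_\gamma(X)$. Since $\ell_{\partial_2}\equiv L_2$ on $\mathcal M_{0,2+p}(\mathbf L)$, the definitions give directly
\[
\mathcal H_p(\mathbf L)=\{X: F(X)>L_2\},\qquad \overline{\mathcal H_p}(\mathbf L)=\{X:F(X)\ge L_2\}.
\]

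First I would record that each $\ell_\gamma$ is a positive real-analytic function on $\mathcal M_{0,2+p}(\mathbf L)$, and invoke the standard local finiteness of short geodesics: on any relatively compact subset and for any bound $B$, only finitely many $\gamma\in\mathcal C$ ever have length $\le B$. Hence near any point $F$ equals the minimum of finitely many of the $\ell_\gamma$, so $F$ is continuous and the infimum defining it is attained; in particular $\mathcal H_p(\mathbf L)=F^{-1}((L_2,\infty))$ is open and $\overline{\mathcal H_p}(\mathbf L)=F^{-1}([L_2,\infty))$ is closed. Next, for $L_1>L_2$, I would show the ``wall'' $\overline{\mathcal H_p}\setminus\mathcal H_p=\{F=L_2\}$ is Weil--Petersson-negligible: by attainment it is contained in $\bigcup_{\gamma\in\mathcal C}\{\ell_\gamma=L_2\}$, and each such level set is a proper real-analytic subvariety — for $\gamma$ isotopic to $\partial_1$ because then $\ell_\gamma\equiv L_1\ne L_2$, and otherwise because $\gamma$ cobounds with $\partial_2$ a sphere with at least three distinguished points, on which the hyperbolic structure can be deformed so as to change $\ell_\gamma$ while keeping all lengths in $\mathbf L$ fixed. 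A countable union of proper analytic subvarieties is nowhere dense and WP-null, so the two moduli spaces have equal WP volume $H_p(\mathbf L)$. For $L_1=L_2$ the picture degenerates: $\partial_1$ becomes a member of $\mathcal C$ of constant length $L_2$, so $F\le L_2$ everywhere, whence $\mathcal H_p(\mathbf L)=\emptyset$ has volume $0$; running the same argument with $\mathcal C$ replaced by $\mathcal C\setminus\{[\partial_1]\}$ shows $\overline{\mathcal H_p}(\mathbf L)$ still contains a non-empty open set and has volume equal to $\lim_{L_2\uparrow L_1}H_p(L_1,L_2,\mathbf L_{\widehat{\{1,2\}}})>0$, using continuity of $H_p$ in the boundary lengths.

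I expect the genuinely delicate step to be the density claim ``$\overline{\mathcal H_p}=\operatorname{closure}(\mathcal H_p)$'' when $L_1>L_2$: being nowhere dense and WP-null does not by itself force the open set $\mathcal H_p$ to be dense in the closed set $\overline{\mathcal H_p}$. What must be shown is that at a surface $X$ where several competitors $\gamma_1,\dots,\gamma_m$ simultaneously realise the minimal length $L_2$, there is a surface arbitrarily close to $X$ in $\mathcal M_{0,2+p}(\mathbf L)$ on which all of $\gamma_1,\dots,\gamma_m$ — and hence, by local finiteness, all competitors — are strictly longer than $L_2$. When the $\gamma_i$ are pairwise disjoint they are nested around $\partial_2$ in $\invbreve{S}_{0,2,p}$, and one can lengthen them one at a time by deforming the hyperbolic structure on the successive collars between consecutive curves, each deformation leaving the other $\gamma_j$ untouched; a sufficiently small such deformation keeps the finitely many remaining competitors long. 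The remaining configurations, where some $\gamma_i$ and $\gamma_j$ cross (necessarily enclosing a defect between crossings), I would treat using convexity of geodesic length functions along Weil--Petersson geodesics and earthquake paths (Wolpert, Kerckhoff): length functions have no local maxima, which rules out $X$ being a strict local maximum of $F$ at value $L_2$ and allows one to select a direction along which all active $\ell_{\gamma_i}$ are non-decreasing and not all locally constant. I regard this convexity argument as the crux; everything else — openness, the level-set estimate, and hence the volume equalities, which is all that the later sections use — is routine once local finiteness and the non-constancy of the competitor length functions are in hand.
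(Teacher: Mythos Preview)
Your approach matches the paper's in outline: both reduce openness to local finiteness of short geodesics, writing $\mathcal H_p(\mathbf L)$ as an intersection of the open sets $\{\ell_\gamma>L_2\}$ and invoking that only finitely many classes are relevant near any given surface. The paper's proof is, however, considerably terser than yours: it simply asserts that the topological closure of $\mathcal H_p(\mathbf L)$ equals the intersection of the closed sets $\{\ell_\gamma\ge L_2\}$, without supplying the density argument you correctly flag as the delicate step, and it says nothing explicit about why the boundary $\{F=L_2\}$ is Weil--Petersson null. Your treatment via proper analytic subvarieties (for the measure-zero claim) and convexity of length functions along Weil--Petersson geodesics and earthquake paths (for the density claim) is genuine additional content absent from the paper, and your observation that only the volume equality is actually used downstream is accurate. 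One caution on the convexity step: a minimum of convex functions need not itself be convex, so the sentence ``length functions have no local maxima, which rules out $X$ being a strict local maximum of $F$'' does not follow directly; you would need to exhibit, at a point where several competitors realise length $L_2$, a single tangent direction along which every active $\ell_{\gamma_i}$ has non-negative first variation and at least one is strictly increasing. Your nested-collar argument handles the disjoint case cleanly; the crossing case still needs that refinement.
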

\begin{proof}
	For $L_1 > L_2$, $\mathcal{H}_{p}(\mathbf{L})$ is the intersection of the open sets $\{ \ell_\gamma(X) > L_2 \}$ indexed by the countable set of free homotopy classes $\gamma$ in $[\partial_2]_{\invbreve{S}_{0,2,p}}$. It is not hard to see that in a neighbourhood of any $X \in \mathcal{H}_{p}(\mathbf{L})$ only finitely many of these are important, so the intersection is open. Its closure is given by the countable intersection of closed sets $\{ \ell_\gamma(X) \geq L_2 \}$, which is precisely $\overline{\mathcal{H}_p}(\mathbf{L})$. 
\end{proof}

\subsection{Tight decomposition}

We are now ready to state the main result of this section.

\begin{proposition}\label{prop:VHTrelation}
	The Weil--Petersson volumes $T_{g,n,p}(\mathbf{L})$ and $H_p(\mathbf{L})$ satisfy
	\begin{align*}
		V_{g,n+p}(\mathbf{L}) &= \sum_{I_0 \sqcup \cdots \sqcup I_n = \{n+1,\ldots,n+p\}} \int T_{g,n,|I_0|}(\mathbf{K},\mathbf{L}_{I_0})\prod_{\substack{1\leq i\leq n\\I_i \neq \emptyset}} H_{|I_i|}(L_i,K_i,\mathbf{L}_{I_i})\,K_i\rmd K_i,\qquad (g\geq 1\text{ or }n\geq 3)\\
		V_{0,2+p}(\mathbf{L}) &= H_{p}(\mathbf{L}) + \sum_{\substack{I_1 \sqcup I_2 = \{3,\ldots,2+p\}\\I_1,I_2\neq\emptyset}} \int_0^{L_2} H_{|I_1|}(L_1,K,\mathbf{L}_{I_1}) H_{|I_2|}(L_2,K,\mathbf{L}_{I_2}) K\rmd K, \qquad (L_1 \geq L_2)
	\end{align*}
	where in the first equation it is understood that $K_i = L_i$ whenever $I_i = \emptyset$.
\end{proposition}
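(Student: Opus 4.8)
The plan is to establish a canonical decomposition of an arbitrary (non-tight) bordered hyperbolic surface by cutting along shortest geodesics that separate a distinguished boundary from the designated ``defect'' boundaries, and then show that this decomposition induces a measure-preserving bijection (up to measure-zero loci) between the relevant moduli spaces. First I would set up the decomposition for a surface $X \in \mathcal{M}_{g,n+p}(\mathbf{L})$. For each $i \in \{1,\ldots,n\}$ consider the homotopy class $[\partial_i]_{\invbreve{S}_{g,n,p}}$ in the capped surface: among the simple cycles in this class there is (generically) a unique shortest one, call it $\gamma_i$, with $\ell_{\gamma_i}(X) \le L_i$. The cycles $\gamma_1,\ldots,\gamma_n$ are pairwise disjoint and non-homotopic (in $\invbreve{S}_{g,n,p}$ they separate the $i$-th distinguished boundary from all the others plus the defects), so cutting $X$ along all of them that are not already equal to $\partial_i$ produces: one piece containing all $n$ distinguished boundaries shrunk to the geodesics $\gamma_i$ — this is a surface of genus $g$ with $n$ boundaries (the $\gamma_i$) plus whichever defect boundaries ended up on that side, and by minimality of the $\gamma_i$ it lies in $\mathcal{M}^{\text{tight}}_{g,n,|I_0|}$ — together with, for each $i$ with $\gamma_i \ne \partial_i$, a genus-$0$ piece bounded by $\partial_i$ and $\gamma_i$ carrying the defects in $I_i$, which by construction lies in $\mathcal{H}_{|I_i|}(L_i, K_i, \mathbf{L}_{I_i})$ with $K_i = \ell_{\gamma_i}(X)$. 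Summing over the possible partitions $I_0 \sqcup \cdots \sqcup I_n$ of the defect labels and integrating $K_i$ against $K_i\,\rmd K_i$ (the Weil--Petersson symplectic form on Teichm\"uller space restricted to a gluing locus is $\rmd\ell \wedge \rmd\tau$, and integrating out the twist $\tau \in [0,K_i)$ gives exactly the factor $K_i\,\rmd K_i$, by Mirzakhani's symplectic reduction / Wolpert's formula) yields the first identity.

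Next I would treat the degenerate case $g=0$, $n=2$ separately, which accounts for the second identity. Here $\mathcal{M}^{\text{tight}}_{0,2,p} = \emptyset$: the two distinguished boundaries $\partial_1,\partial_2$ lie in the \emph{same} homotopy class of $\invbreve{S}_{0,2,p}$, so there is only one relevant shortest separating cycle $\gamma$. If $\gamma$ is homotopic to $\partial_2$ itself (equivalently $\partial_2$ already realizes the minimum, given $L_1 \ge L_2$), then $X$ is itself a half-tight cylinder and contributes $H_p(\mathbf{L})$; otherwise $X$ is cut by $\gamma$ into two half-tight cylinders, one with boundaries $\partial_1,\gamma$ and one with boundaries $\partial_2,\gamma$, the common boundary length $K = \ell_\gamma(X)$ satisfying $K < L_2 \le L_1$, giving the integral term with $K$ ranging over $(0, L_2)$ and the defects split as $I_1 \sqcup I_2$ with both parts nonempty (a genus-$0$ three-holed-sphere piece with only $\partial_i,\gamma$ and no defects would force $\partial_i$ homotopic to $\gamma$, a contradiction, so $I_1, I_2 \ne \emptyset$). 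The same $\rmd\ell\wedge\rmd\tau$ symplectic reduction supplies the $K\,\rmd K$ measure factor.

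The key technical points I would need to nail down are: (i) that the collection of shortest separating geodesics is well-defined, simple, mutually disjoint, and varies measurably, with non-uniqueness confined to a Weil--Petersson-measure-zero set (so we may safely pass between $\mathcal{H}_p$ and its closure $\overline{\mathcal{H}_p}$, as the preceding lemma licenses, and the ``weakly half-tight'' notion is exactly what makes the pieces fit together with shared boundary of equal length on both sides); (ii) that the cutting map is a bijection onto the disjoint union over partitions of the product moduli spaces, with inverse given by gluing along geodesics of matched length with an arbitrary twist — here one must check that after gluing back, the curves $\gamma_i$ really are the shortest in their class (this uses that the glued-on cylinders are half-tight, so no shortcut through them beats $\gamma_i$, and that inside the tight piece the $\gamma_i$ are already minimal by definition), so that the decomposition is genuinely canonical and no overcounting occurs; (iii) that the Weil--Petersson form decomposes as a sum of the forms on the pieces plus $\sum_i \rmd\ell_{\gamma_i} \wedge \rmd\tau_i$, which is Mirzakhani's symplectic gluing formula applied along the cutting curves. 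The main obstacle is point (ii): proving that the ``shoot and cut along the shortest separating geodesic'' operation and the ``glue with a twist'' operation are mutually inverse requires an honest argument that a shortest curve in a homotopy class in the capped surface cannot dip into a half-tight collar and come back out shorter, and that distinct twist parameters give genuinely distinct (non-mapping-class-equivalent) glued surfaces — i.e. a Dehn-twist bookkeeping argument showing the twist $\tau_i$ lives in $[0, K_i)$ rather than $\R$ after quotienting by the mapping class group. Everything else is a routine application of Wolpert's formula and Fubini.
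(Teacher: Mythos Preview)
Your proposal is correct and follows essentially the same route as the paper: cut along the shortest cycles in the capped homotopy classes $[\partial_i]_{\invbreve{S}_{g,n,p}}$ to obtain a tight core plus half-tight collars, then invoke Wolpert's formula / Mirzakhani's symplectic reduction for the $K_i\,\rmd K_i$ factors, with the cylinder case handled separately. The only refinement in the paper is that instead of arguing generic uniqueness of the shortest cycle, it selects the \emph{innermost} shortest cycle (Lemma~\ref{lem:innermostshortestcycles}) to make the decomposition defined everywhere and packages your points (ii)--(iii) via Mirzakhani's $\mathcal{M}_{g,n}^\Gamma/\mathrm{Stab}(\Gamma)$ formalism (Lemmas~\ref{lem:modulilift} and~\ref{lem:tightfactorization}); in particular this is what forces one collar in the cylinder case to lie in $\overline{\mathcal{H}}$ and the other in $\mathcal{H}$, but as you note the volumes agree.
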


The remainder of this section will be devoted to proving this result.
But let us first see how it implies Proposition~\ref{prop:Tpolynomial}.

\begin{proof}[Proof of Proposition~\ref{prop:Tpolynomial}]
Clearly $H_1(\mathbf{L}) = V_{0,3}(\mathbf{L}) = 1$ for $L_1 \geq L_2$ and $T_{g,n,0}(\mathbf{L}) = V_{g,n}(\mathbf{L})$.
Rewriting the equations as 
\begin{align*}
	T_{g,n,p}(\mathbf{L}) &= V_{g,n+p}(\mathbf{L}) - \sum_{\substack{I_0 \sqcup \cdots \sqcup I_n = \{n+1,\ldots,n+p\}\\|I_0| < p}} \int T_{g,n,|I_0|}(\mathbf{K},\mathbf{L}_{I_0})\prod_{\substack{1\leq i\leq n\\I_i \neq \emptyset}} H_{|I_i|}(L_i,K_i,\mathbf{L}_{I_i})\,K_i\rmd K_i,\\
	H_{p}(\mathbf{L}) &= V_{0,2+p}(\mathbf{L}) - \sum_{\substack{I_1 \sqcup I_2 = \{3,\ldots,2+p\}\\I_1,I_2\neq\emptyset}} \int_0^{L_2} H_{|I_1|}(L_1,K,\mathbf{L}_{I_1}) H_{|I_2|}(L_2,K,\mathbf{L}_{I_2}) K\rmd K,
\end{align*}
it is clear that they are uniquely determined recursively in terms of $V_{g,n}(\mathbf{L})$.
Moreover, by induction we easily verify that $H_p(\mathbf{L})$ in the region $L_1 \geq L_2$ is a polynomial in $L_1^2,\ldots,L_{2+p}^2$ of degree $p-1$ that is symmetric in $L_3,\ldots,L_{2+p}$, and $T_{g,n,p}$ is a polynomial in $L_1^2,\ldots,L_{n+p}^2$ of degree $3g-3+n+p$ that is symmetric in $L_1,\ldots,L_n$ and symmetric in $L_{n+1},\ldots,L_{n+p}$.
\end{proof}

\subsection{Tight decomposition in the stable case}

\subsubsection{Shortest cycles}

The following parallels the construction of shortest cycles in maps described in \cite[Section~6.1]{Bouttier_Bijective_2022}.

\begin{lemma}\label{lem:innermostshortestcycles}
	Given a hyperbolic surface $X \in \mathcal{M}_{g,n+p}$ for $g\geq 1$ or $n\geq 2$, then for each $i=1,\ldots,n$ there exists a unique \emph{innermost shortest cycle} $\sigma^i_{\invbreve{S}_{g,n,p}}(X)$ on $X$, meaning that it has minimal length in $[\partial_i]_{\invbreve{S}_{g,n,p}}$ and such that all other cycles of minimal length (if they exist) are contained in the region of $X$ delimited by $\partial_i$ and $\sigma^i_{\invbreve{S}_{g,n,p}}(X)$. 
	Moreover, if $g\geq 1$ or $n \geq 3$, the curves $\sigma^1_{\invbreve{S}_{g,n,p}}(X), \ldots,\sigma^n_{\invbreve{S}_{g,n,p}}(X)$ are disjoint.
\end{lemma}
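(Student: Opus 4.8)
The plan is to treat the two assertions separately: first the existence and uniqueness of the innermost shortest cycle in each class $[\partial_i]_{\invbreve{S}_{g,n,p}}$, then the disjointness of the family $\sigma^1,\ldots,\sigma^n$ under the stronger hypothesis $g\geq1$ or $n\geq3$. For the first part I would begin by showing that the infimum of $\ell_\gamma(X)$ over simple cycles $\gamma\in[\partial_i]_{\invbreve{S}_{g,n,p}}$ is attained. The class $[\partial_i]_{\invbreve{S}_{g,n,p}}$ is a union of countably many free homotopy classes of $S_{g,n+p}$, but crucially $\partial_i$ itself lies in it, so the infimum is at most $L_i$; since the set of closed geodesics of length $\leq L_i$ on a fixed hyperbolic surface of finite type is finite (a standard compactness fact), the minimum is realized by at least one simple closed geodesic. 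Next I would argue that any two distinct minimizers $\gamma_1,\gamma_2$ must be disjoint: if they crossed, a standard surgery (cut-and-paste at the intersection points, then smooth to geodesics) would produce a strictly shorter curve in the same class $[\partial_i]_{\invbreve{S}_{g,n,p}}$ — here one uses that $[\partial_i]_{\invbreve{S}_{g,n,p}}$ is a homology-type class that is closed under the relevant surgeries, i.e.\ that $\partial_i$ separates the capped surface — contradicting minimality. A family of pairwise disjoint simple closed geodesics all freely homotopic in $\invbreve{S}_{g,n,p}$ to $\partial_i$ cobounds, together with $\partial_i$, a collection of cylinders, hence is linearly ordered by ``distance from $\partial_i$'', and one defines $\sigma^i_{\invbreve{S}_{g,n,p}}(X)$ to be the one furthest from $\partial_i$ — equivalently, the unique minimizer such that all the others lie in the cylindrical region between it and $\partial_i$. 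Uniqueness of this choice is then immediate from the linear order.

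For the disjointness statement I would proceed by contradiction, supposing $\sigma^i$ and $\sigma^j$ intersect for some $i\neq j$, and distinguish whether they intersect transversally or share a component. If they have a common component $c$, then $c$ is freely homotopic in $\invbreve{S}_{g,n,p}$ both to $\partial_i$ and to $\partial_j$, so $[\partial_i]_{\invbreve{S}_{g,n,p}}=[\partial_j]_{\invbreve{S}_{g,n,p}}$; but in the capped surface $\invbreve{S}_{g,n,p}$ a simple cycle separating $\partial_i$ from the remaining special boundaries cannot simultaneously be homotopic to a cycle separating $\partial_j$ off, unless $n=2$ and $g=0$, which is excluded by hypothesis (this is exactly the obstruction that made $\mathcal{M}^{\text{tight}}_{0,2,p}=\emptyset$). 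If instead $\sigma^i$ and $\sigma^j$ meet transversally, I would again invoke the surgery argument: resolving an intersection point produces two new curves, one in $[\partial_i]_{\invbreve{S}_{g,n,p}}$ and one in $[\partial_j]_{\invbreve{S}_{g,n,p}}$, with total length unchanged, so at least one is not strictly shorter — but a curve-shortening / geodesic-representative argument shows that after smoothing one can in fact strictly decrease the length of the longer one while not increasing the other, contradicting that both $\sigma^i,\sigma^j$ are minimizers in their classes. One has to be slightly careful that the resolved curves genuinely land in the correct classes; this follows because removing an innermost bigon between $\sigma^i$ and $\sigma^j$ and tracking which boundary components end up on which side respects the separating nature of $\partial_i$ and $\partial_j$ in $\invbreve{S}_{g,n,p}$.

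The step I expect to be the main obstacle is making the surgery-and-smoothing arguments fully rigorous in the hyperbolic (as opposed to combinatorial) setting, in particular the claim that a transverse crossing of two length-minimizing geodesics can always be used to produce a strictly shorter competitor in the appropriate class. In the combinatorial map setting of \cite{Bouttier_Bijective_2022} this is clean because lengths are integers and one can argue by local moves; here one must handle the possibility that the resolved curves are only piecewise-geodesic and appeal to the fact that the geodesic representative of a nontrivial free homotopy class is unique and strictly shortens any non-geodesic representative, together with a careful case analysis ensuring the class is nontrivial (it is, since $[\partial_i]_{\invbreve{S}_{g,n,p}}$ is not the null class when $\partial_i$ is a genuine special boundary). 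A secondary technical point is verifying that the relevant homotopy classes are preserved under these operations, which ultimately reduces to the elementary topological fact that each $\partial_i$ ($i\leq n$) is a separating curve in $\invbreve{S}_{g,n,p}$ and that these $n$ separating curves can be realized disjointly in the topological surface — a purely combinatorial statement about pants-type decompositions of $\invbreve{S}_{g,n,p}$ that I would establish first as a warm-up.
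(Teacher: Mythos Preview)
Your proposal is correct and follows essentially the same approach as the paper: existence via finiteness of closed geodesics of length $\leq L_i$ (the paper cites Buser), then cut-and-paste surgery at a pair of intersection points to show that minimizers are pairwise disjoint --- both within a single class $[\partial_i]_{\invbreve{S}_{g,n,p}}$ and across classes $[\partial_i],[\partial_j]$ --- using that the surgered curve is piecewise-geodesic with a corner and hence strictly shortened by its geodesic representative. Your extra case distinction (shared component vs.\ transverse crossing) is unnecessary, since two distinct simple closed geodesics on a hyperbolic surface cannot share an arc; and note that the surgery should be performed at \emph{two} intersection points (not one), which is what guarantees the resulting curves land in the correct separating classes.
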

\begin{proof}
	First note that if a shortest cycle exists, it is a simple closed geodesic. 
    As a consequence of \cite[Theorem 1.6.11]{buser1992geometry}, there are only finitely many closed geodesics with length $\leq L_i$ in $[\partial_i]_{\invbreve{S}_{g,n,p}}$.
    Since $\partial_i\in [\partial_i]_{\invbreve{S}_{g,n,p}}$ has length $L_i$, this proves the existence of at least one cycle in $[\partial_i]_{\invbreve{S}_{g,n,p}}$ with minimal length.
	
    \begin{figure}
        \centering
        \includegraphics[height=.35\linewidth]{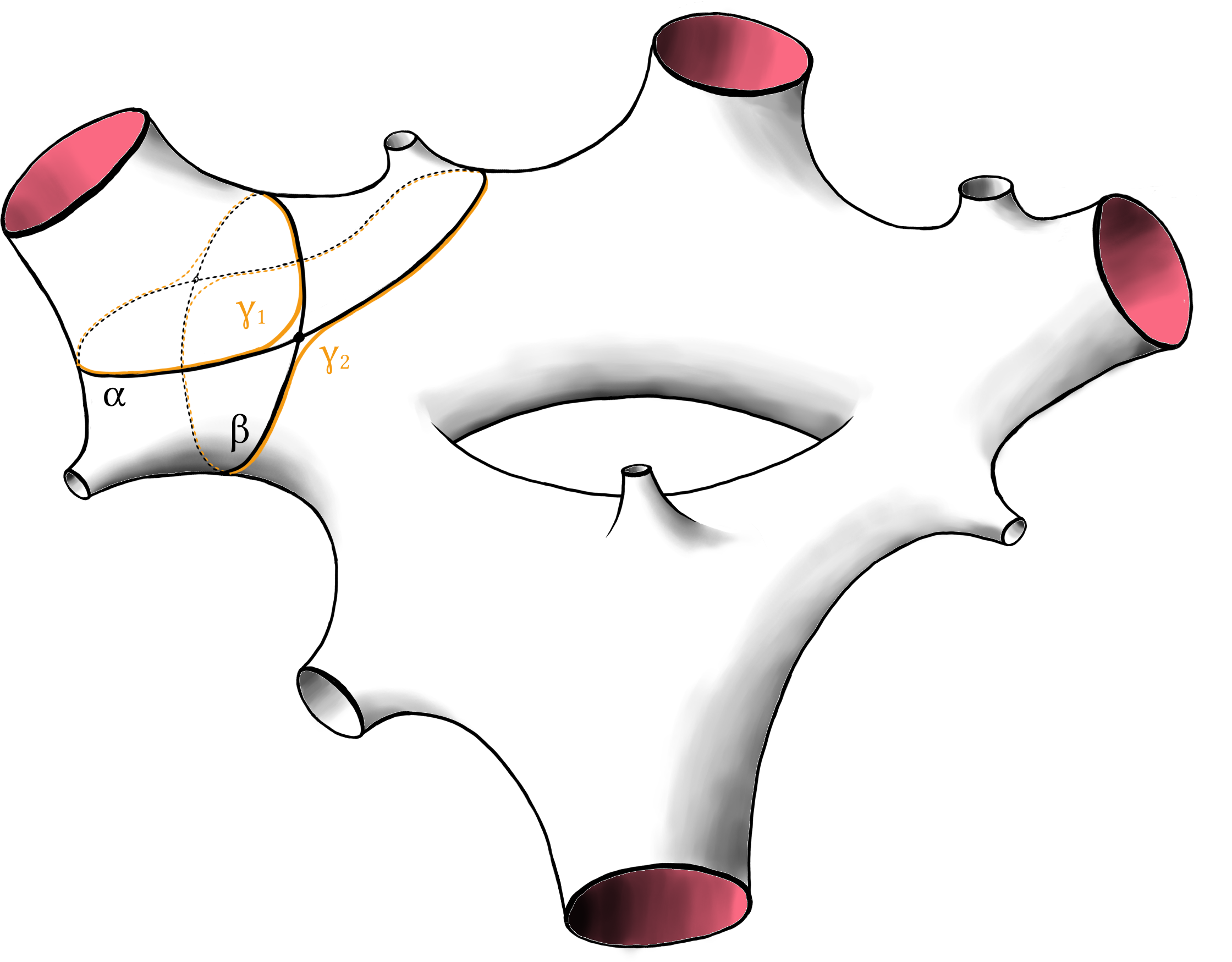}%
        \includegraphics[height=.35\linewidth]{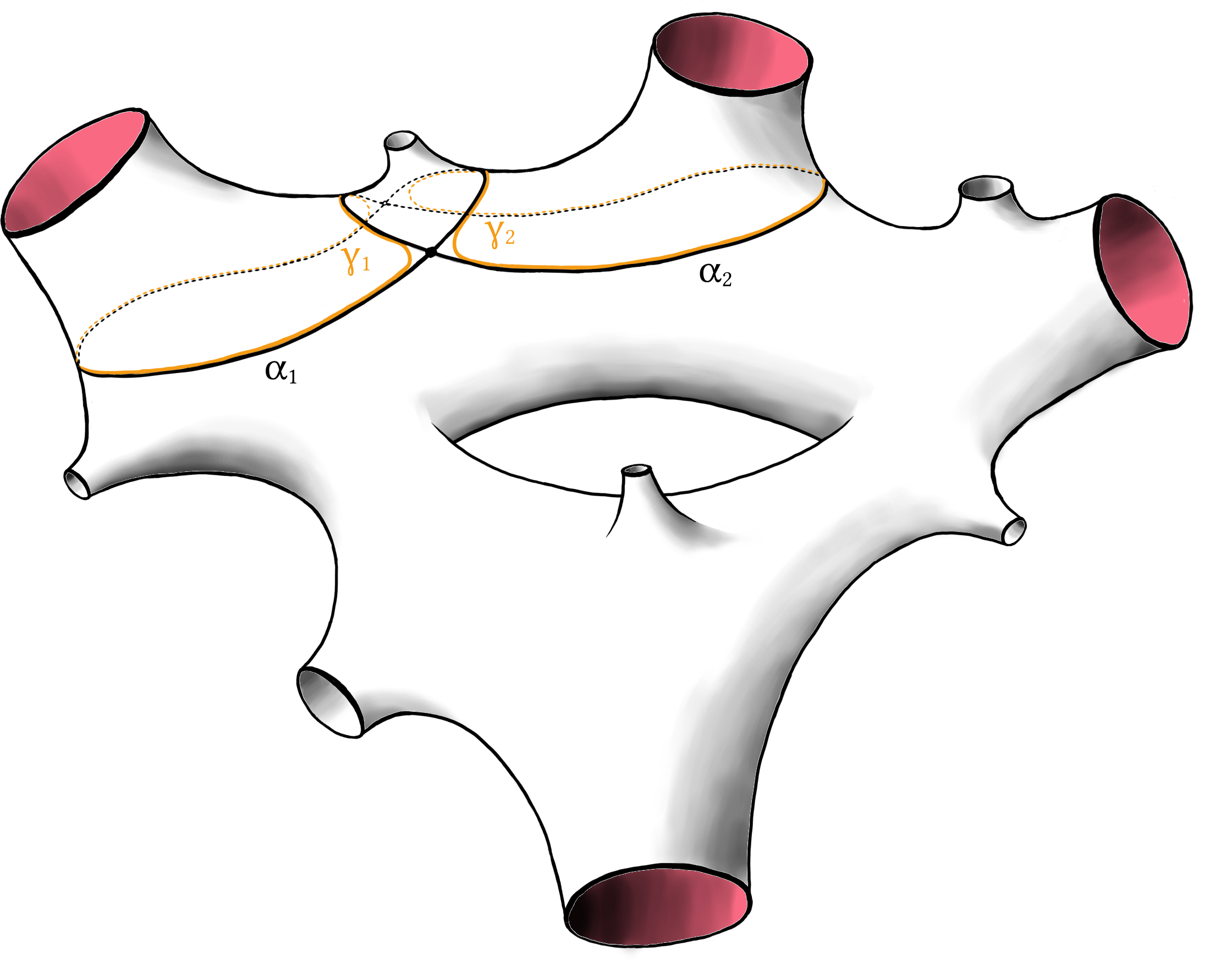}
        \caption{Illustrations of why two intersecting geodesics $\alpha$ and $\beta$ cannot both have minimal length in their free-homotopy classes on $\invbreve{S}_{g,n,p}$. \label{fig:intersectinggeodesics}}
    \end{figure}

	Regarding the existence and uniqueness of a well-defined innermost shortest cycle, suppose $\alpha,\beta \in [\partial_i]_{\invbreve{S}_{g,n,p}}$ are two distinct simple closed geodesics with minimal length $\ell$ (see left side of Figure~\ref{fig:intersectinggeodesics}).
    Since $\alpha \in [\partial_i]_{\invbreve{S}_{g,n,p}}$, cutting along $\alpha$ separates the surface in two disjoint parts.
    Therefore, $\alpha$ and $\beta$ can only have an even number of intersections. 
    If the number of intersections is greater than zero, we can choose two distinct intersections and combine $\alpha$ and $\beta$ to get two distinct cycles $\gamma_1$ and $\gamma_2$ by switching between $\alpha$ and $\beta$ at the chosen intersections, such that $\gamma_1$ and $\gamma_2$ are still in $[\partial_i]_{\invbreve{S}_{g,n,p}}$. 
    Since the total length is still $2\ell$, at least one of the new cycles has length $\leq \ell$. 
    This cycle is not geodesic, so there will be a closed cycle in $[\partial_i]_{\invbreve{S}_{g,n,p}}$ with length $<\ell$, which contradicts that $\alpha$ and $\beta$ have minimal length. 
    We conclude that $\alpha$ and $\beta$ are disjoint. 
    Since all cycles in $[\partial_i]_{\invbreve{S}_{g,n,p}}$ with minimal length are disjoint and separating, the notion of being innermost is well-defined.
	
	Consider $\alpha_i=\sigma^i_{\invbreve{S}_{g,n,p}}(X)$ and $\alpha_j=\sigma^j_{\invbreve{S}_{g,n,p}}(X)$ for $i\neq j$ (see right side of Figure~\ref{fig:intersectinggeodesics}). Just as before, since $\alpha_i$ is separating and $\alpha_i$ and $\alpha_j$ are simple, the number of intersections is even. If $\alpha_i$ and $\alpha_j$ are not disjoint, we can choose two distinct intersections and construct two distinct cycles $\gamma_i$ and $\gamma_j$ by switching between $\alpha_i$ and $\alpha_j$ at the chosen intersections, such that $\gamma_i$ and $\gamma_j$ are in $[\partial_i]_{\invbreve{S}_{g,n,p}}$ and $[\partial_j]_{\invbreve{S}_{g,n,p}}$ respectively. Since the total length of the cycles stays the same, there is at least one $a\in \{i,j\}$ such that $\gamma_a$ has length less or equal than $\alpha_a$. Since $\gamma_a$ is not geodesic, there is a closed cycle in $[\partial_a]_{\invbreve{S}_{g,n,p}}$ with length strictly smaller than $\alpha_a$, which is a contradiction, so the innermost shortest cycles are disjoint.
\end{proof}

In particular the proof implies the following criterions are equivalent: 
\begin{itemize}
    \item A simple closed geodesic $\alpha \in [\partial_i]_{\invbreve{S}_{g,n,p}}$ is the innermost shortest cycle $\sigma^i_{\invbreve{S}_{g,n,p}}(X)$;
    \item For a simple closed geodesic $\alpha \in [\partial_i]_{\invbreve{S}_{g,n,p}}$ we have $\ell(\alpha) \leq L_i$ and each simple closed geodesic $\beta\in\sigma^i_{\invbreve{S}_{g,n,p}}(X)$ that is disjoint from $\alpha$ has length $\ell(\beta)\geq \ell(\alpha)$ with equality only being allowed if $\beta$ is contained in the region between $\alpha$ and $\partial_i$.
\end{itemize}

\subsubsection{Integration on Moduli space}
Let us recap Mirzakhani's decomposition of moduli space integrals in the presence of distinguished cycles \cite[Section 8]{Mirzakhani2007}.
A \emph{multicurve} $\Gamma = (\gamma_1,\ldots,\gamma_k)$ is a collection of disjoint simple closed curves $\Gamma = (\gamma_1,\ldots,\gamma_k)$ in $S_{g,n}$ which are pairwise non-freely-homotopic.
Given a multicurve, in which each curve $\gamma_i$ may or may not be freely homotopic to a boundary $\partial_j$ of $S_{g,n}$, one can consider the stabilizer subgroup
\begin{align*}
	\operatorname{Stab}(\Gamma) = \{ h\in \mathrm{Mod}_{g,n} : h \cdot\gamma_i = \gamma_i \} \subset \mathrm{Mod}_{g,n}.
\end{align*}
Note that if $\gamma_i \in [\partial_j]_{S_{g,n}}$ is freely homotopic to one of the boundaries $\partial_j$ then $h \cdot\gamma_i = \gamma_i$ for any $h\in \mathrm{Mod}_{g,n}$.
The moduli space of hyperbolic surfaces with distinguished (free homotopy classes of) curves is the quotient
\begin{align*}
	\mathcal{M}_{g,n}(\mathbf{L})^\Gamma = \mathcal{T}_{g,n}(\mathbf{L})/\operatorname{Stab}(\Gamma).
\end{align*}
For a closed curve $\gamma$ in $S_{g,n}$ and $X\in \mathcal{M}_{g,n}$, let $\ell_\gamma(X)$ be the length of the geodesic representative in the free homotopy class of $\gamma$. 
For $\mathbf{K} = (K_1,\ldots,K_k) \subset \R_{>0}^k$ we can restrict the lengths of the geodesic representatives of curves in $\Gamma$ by setting 
\begin{align*}
	\mathcal{M}_{g,n}(\mathbf{L})^\Gamma(\mathbf{K}) = \{ X \in \mathcal{M}_{g,n}(\mathbf{L})^\Gamma : \ell_{\gamma_i}(X) = K_i, i=1,\ldots,k\} \subset \mathcal{M}_{g,n}(\mathbf{L})^\Gamma.
\end{align*}
If $\gamma_i \in [\partial_j]_{S_{g,n}}$ then this set is empty unless $K_i = L_j$.
Denote by $\pi^\Gamma : \mathcal{M}_{g,n}(\mathbf{L})^\Gamma \to \mathcal{M}_{g,n}(\mathbf{L})$ the projection.
If there are exactly $p$ cycles among $\Gamma$ that are not freely homotopic to a boundary, then this space admits a natural action of the $p$-dimensional torus $(S^1)^p$ obtained by twisting along each of these $p$ cycles proportional to their length.
The quotient space is denoted
\begin{align*}
	\mathcal{M}_{g,n}(\mathbf{L})^{\Gamma*}(\mathbf{K}) = \mathcal{M}_{g,n}(\mathbf{L})^\Gamma(\mathbf{K}) / (S^1)^p
\end{align*}
and is naturally equipped with a symplectic structure inherited from the Weil--Petersson symplectic structure on $\mathcal{M}_{g,n}(\mathbf{L})^\Gamma$.
If we denote by $S_{g,n}(\Gamma)$ the possibly disconnected surface obtained from $S_{g,n}$ by cutting along all $\gamma_i$ that are not freely homotopic to a boundary and by $\mathcal{M}(S_{g,n}(\Gamma),\mathbf{L},\mathbf{K})$ its moduli space, then according to \cite[Lemma 8.3]{Mirzakhani2007}, the canonical mapping
\begin{align}\label{eq:symplectomorphism}
	\mathcal{M}_{g,n}(\mathbf{L})^{\Gamma*}(\mathbf{K}) \to \mathcal{M}(S_{g,n}(\Gamma),\mathbf{L},\mathbf{K})
\end{align}
is a symplectomorphism. 
Given an integrable function $F : \mathcal{M}_{g,n}(\mathbf{L})^\Gamma \to \R$ that is invariant under the action of $(S^1)^p$, there exists a naturally associated function $\tilde{F} : \mathcal{M}(S_{g,n}(\Gamma),\mathbf{L},\mathbf{K}) \to \R$ such that (essentially \cite[Lemma 8.4]{Mirzakhani2007})
\begin{align}\label{eq:integrationfactorization}
	\int_{\mathcal{M}_{g,n}(\mathbf{L})^\Gamma} F(X) \rmd\mu_{WP}(X) = \int \prod_{\substack{1\leq i\leq n\\\gamma_i \notin [\partial_j]_{S_{g,n}}}} K_i \rmd K_i \int_{\mathcal{M}(S_{g,n}(\Gamma),\mathbf{L},\mathbf{K})}\tilde{F}(X) \rmd\mu_{WP}(X).
\end{align}

\subsubsection{Shortest multicurves} 
Suppose $g\geq 1$ or $n\geq 3$, meaning that we momentarily exclude the cylinder case ($g=0$, $n=2$).
We consider now a special family of multicurves $\Gamma = (\gamma_1,\ldots,\gamma_n)$ on $S_{g,n+p}$ for $n\geq 1, p\geq0$. 
Namely, we require that $\gamma_i \in [\partial_i]_{\invbreve{S}_{g,n,p}}$ is freely homotopic to the boundary $\partial_i$ in the capped-off surface $\invbreve{S}_{g,n,p}$ for $i=1,\ldots,n$. 
Then there exists a partition $I_0 \sqcup \cdots \sqcup I_n = \{n+1,\ldots,n+p\}$ such that $S_{g,n+p}(\Gamma)$ has $n+1$ connected components $s_0,\ldots,s_n$, where $s_0$ is of genus $g$ and is adjacent to all curves $\Gamma$ as well as the boundaries $(\partial_j)_{j\in I_0}$ while for each $i=1, \ldots,n$, $s_i$ is of genus $0$ and contains the $i$th boundary $\partial_i$ as well as $(\partial_j)_{j\in I_i}$ and is adjacent to $\gamma_i$.
Note that $I_i = \emptyset$ if and only if $\gamma_i \in [\partial_i]_{S_{g,n+p}}$.
Finally, we observe that mapping class group orbits $\{\text{Mod}_{g,n+p}\cdot [\Gamma]_{S_{g,n+p}}\}$ of these multicurves $\Gamma$ are in bijection with the set of partitions $\{I_0 \sqcup \cdots \sqcup I_n = \{n+1,\ldots,n+p\}\}$. 

With the help of Lemma~\ref{lem:innermostshortestcycles} we may introduce the restricted moduli space in which we require $\gamma_i$ to be (freely homotopic to) the innermost shortest cycle in $[\partial_i]_{\invbreve{S}_{g,n,p}}$,
\begin{align*}
	\hat{\mathcal{M}}_{g,n,p}(\mathbf{L})^\Gamma = \left\{ X \in \mathcal{M}_{g,n+p}(\mathbf{L})^\Gamma : \gamma_i\in [\sigma^i_{\invbreve{S}_{g,n,p}}(X)]\text{ for }i=1,\ldots,n\right\}.
\end{align*}

\begin{lemma}\label{lem:modulilift}
	The natural projection 
	\begin{align*}
		\bigsqcup_{\Gamma} \hat{\mathcal{M}}_{g,n,p}(\mathbf{L})^\Gamma \longrightarrow \mathcal{M}_{g,n+p}(\mathbf{L}),
	\end{align*}
	where the disjoint union is over (representatives of) the mapping class group orbits of multicurves $\Gamma$, is a bijection. 
\end{lemma}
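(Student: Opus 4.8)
The plan is to construct an explicit two-sided inverse to the projection $\pi$, with the key input being Lemma~\ref{lem:innermostshortestcycles}, which singles out exactly one multicurve on each surface. Conceptually this is Mirzakhani's unfolding correspondence between $\mathcal{M}_{g,n+p}(\mathbf{L})$ and moduli spaces carrying a distinguished multicurve, restricted to the sublocus on which the distinguished curves are required to be the innermost shortest cycles; it is precisely this requirement that collapses the generically infinite fibres of the unfolding map to single points.

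First I would package the output of Lemma~\ref{lem:innermostshortestcycles} as a map $\tilde X\mapsto\Sigma(\tilde X):=(\sigma^1_{\invbreve S_{g,n,p}}(\tilde X),\dots,\sigma^n_{\invbreve S_{g,n,p}}(\tilde X))$ on $\mathcal{T}_{g,n+p}(\mathbf{L})$ and check that $\Sigma(\tilde X)$ is a multicurve of one of the topological types in the special family described above. Its components are simple and pairwise disjoint by the lemma, and they are pairwise non-freely-homotopic in $S_{g,n+p}$ because $\sigma^i_{\invbreve S_{g,n,p}}(\tilde X)\in[\partial_i]_{\invbreve S_{g,n,p}}$ while the classes $[\partial_1]_{\invbreve S_{g,n,p}},\dots,[\partial_n]_{\invbreve S_{g,n,p}}$ are distinct: distinct boundary components of $\invbreve S_{g,n,p}=S_{g,n}$ are non-homotopic except in the cylinder case $g=0,n=2$, which is excluded by the standing assumption $g\geq1$ or $n\geq3$. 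Hence the orbit of $\Sigma(\tilde X)$ is one of those indexed by partitions $I_0\sqcup\cdots\sqcup I_n=\{n+1,\dots,n+p\}$; note that some components may be boundary-parallel (the case $I_i=\emptyset$), which the notion of multicurve allows.

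Second I would prove $\mathrm{Mod}_{g,n+p}$-equivariance, $\Sigma(h\cdot\tilde X)=h\cdot\Sigma(\tilde X)$ as isotopy classes, which holds because the innermost-shortest-cycle construction uses only the hyperbolic metric and the free homotopy class $[\partial_i]_{\invbreve S_{g,n,p}}$: a representative diffeomorphism of $h$ fixes each labeled boundary $\partial_i$, descends to the capped-off surface, and therefore preserves each $[\partial_i]_{\invbreve S_{g,n,p}}$. Consequently the orbit $\mathrm{Mod}_{g,n+p}\cdot\Sigma(\tilde X)$ depends only on the image $X\in\mathcal{M}_{g,n+p}(\mathbf{L})$ of $\tilde X$, so we may let $\Gamma(X)$ be its chosen orbit representative. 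I would then define the candidate inverse $\Psi$ by: choose any lift $\tilde X$ of $X$, pick $h\in\mathrm{Mod}_{g,n+p}$ with $\Sigma(\tilde X)=h\cdot\Gamma(X)$, and set $\Psi(X)$ to be the class of $h^{-1}\cdot\tilde X$ in the $\Gamma(X)$-component of the disjoint union; since $\Sigma(h^{-1}\cdot\tilde X)=\Gamma(X)$ this class does lie in $\hat{\mathcal{M}}_{g,n,p}(\mathbf{L})^{\Gamma(X)}\subset\mathcal{M}_{g,n+p}(\mathbf{L})^{\Gamma(X)}=\mathcal{T}_{g,n+p}(\mathbf{L})/\operatorname{Stab}(\Gamma(X))$. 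Independence of all choices follows from the same equivariance: if $\tilde X_1,\tilde X_2$ are lifts of $X$ with $\Sigma(\tilde X_1)=\Sigma(\tilde X_2)=\Gamma(X)$ then $\tilde X_2=g\cdot\tilde X_1$ with $\Gamma(X)=\Sigma(\tilde X_2)=g\cdot\Gamma(X)$, so $g\in\operatorname{Stab}(\Gamma(X))$ and the two give the same class.

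Finally I would check that $\Psi$ inverts $\pi$ on both sides. The identity $\pi\circ\Psi=\mathrm{id}$ is immediate. For $\Psi\circ\pi=\mathrm{id}$, take $[\tilde X]\in\hat{\mathcal{M}}_{g,n,p}(\mathbf{L})^{\Gamma}$ with $\pi([\tilde X])=X$; the defining condition of $\hat{\mathcal{M}}^{\Gamma}$ says $\Gamma=\Sigma(\tilde X)$ up to isotopy, so $\Gamma$ lies in the orbit $\mathrm{Mod}_{g,n+p}\cdot\Sigma(\tilde X)$ and hence equals its chosen representative $\Gamma(X)$, after which unwinding the definition of $\Psi$ gives $\Psi(X)=[\tilde X]$. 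I expect the only delicate points to be the two already isolated: (a) that $\Sigma(\tilde X)$ is a multicurve of an admissible topological type — this is where $g\geq1$ or $n\geq3$ is needed — and (b) that $\Sigma$ is honestly $\mathrm{Mod}_{g,n+p}$-equivariant on isotopy classes, which is what makes $\Gamma(X)$ well-defined and both compositions the identity; the geometric substance is all contained in Lemma~\ref{lem:innermostshortestcycles}.
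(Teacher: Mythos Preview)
Your proof is correct and follows essentially the same approach as the paper: both rely on Lemma~\ref{lem:innermostshortestcycles} to produce, for each hyperbolic surface, a canonical multicurve $\Sigma(\tilde X)$, and on the $\mathrm{Mod}_{g,n+p}$-equivariance of this assignment. The paper verifies injectivity and surjectivity of $\pi$ directly, while you package the same argument as the construction of an explicit two-sided inverse $\Psi$; the underlying content (equivariance forces $h\in\operatorname{Stab}(\Gamma)$, and $\Sigma(\tilde X)$ is a valid multicurve of the required type) is identical, and you are if anything slightly more careful in spelling out why the components of $\Sigma(\tilde X)$ are pairwise non-freely-homotopic in $S_{g,n+p}$.
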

\begin{proof}
	If $X,X'\in \mathcal{T}_{g,n+p}(\mathbf{L})$ are representatives of hyperbolic surfaces in $\hat{\mathcal{M}}_{g,n,p}(\mathbf{L})^\Gamma$ and $\hat{\mathcal{M}}_{g,n,p}(\mathbf{L})^{\Gamma'}$ respectively, then by definition $[\gamma_i] = [\sigma^i_{\invbreve{S}_{g,n,p}}(X)]$ and $[\gamma_i']=[\sigma^i_{\invbreve{S}_{g,n,p}}(X')]$. If $X$ and $X'$ represent the same surface in $\mathcal{M}_{g,n+p}(\mathbf{L})$, they are related by an element $h$ of the mapping class group, $X' = h\cdot X$, and therefore also $\sigma^i_{\invbreve{S}_{g,n,p}}(X) = h\cdot \sigma^i_{\invbreve{S}_{g,n,p}}(X')$ and $[\gamma_i] = h\cdot[\gamma_i']$.
	So $\Gamma$ and $\Gamma'$ belong to the same mapping class group orbit and, if $\Gamma$ and $\Gamma'$ are freely homotopic, we must have $h \in \operatorname{Stab}(\Gamma)$. Hence, $X$ and $X'$ represent the same element in the set on the left-hand side, and we conclude that the projection is injective.
	It is also surjective since any $X\in \mathcal{T}_{g,n+p}(\mathbf{L})$ is a representative of $\hat{\mathcal{M}}_{g,n,p}(\mathbf{L})^\Gamma$ if we take $\Gamma = (\sigma^1_{\invbreve{S}_{g,n,p}}(X),\ldots,\sigma^n_{\invbreve{S}_{g,n,p}}(X))$, which is a valid multicurve due to Lemma~\ref{lem:innermostshortestcycles}.
\end{proof}
We can introduce the length-restricted version $\hat{\mathcal{M}}_{g,n,p}(\mathbf{L})^\Gamma(\mathbf{K}) \subset \mathcal{M}_{g,n+p}(\mathbf{L})^\Gamma(\mathbf{K})$ as before.
\begin{lemma} \label{lem:tightfactorization}
The subset $\hat{\mathcal{M}}_{g,n,p}(\mathbf{L})^\Gamma(\mathbf{K})\subset \mathcal{M}_{g,n+p}(\mathbf{L})^\Gamma(\mathbf{K})$ is invariant under twisting (the torus-action on $\mathcal{M}_{g,n+p}(\mathbf{L})^\Gamma(\mathbf{K})$ described above).
The image of the quotient $\hat{\mathcal{M}}_{g,n,p}(\mathbf{L})^{\Gamma*}(\mathbf{K})$ under the symplectomorphism \eqref{eq:symplectomorphism} is precisely
\begin{align}\label{eq:tightmodulifactorization}
	\mathcal{M}^{\text{tight}}_{g,n,|I_0|}(\mathbf{K},\mathbf{L}_{I_0})\times \prod_{\substack{1\leq i\leq n\\I_i \neq \emptyset}} \overline{\mathcal{H}_{|I_i|}}(L_i,K_i,\mathbf{L}_{I_i}).
\end{align}
\end{lemma}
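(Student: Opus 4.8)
The plan is to unpack the definitions and check that the symplectomorphism of \eqref{eq:symplectomorphism} restricts to the claimed product, by tracking what the innermost-shortest-cycle condition becomes after cutting. First I would verify the twist-invariance: for a surface $X\in\mathcal{M}_{g,n+p}(\mathbf{L})^\Gamma(\mathbf{K})$, twisting along the $\gamma_i$ changes neither the length nor the free homotopy class (in $\invbreve{S}_{g,n,p}$) of any closed curve, so the defining condition $\gamma_i\in[\sigma^i_{\invbreve{S}_{g,n,p}}(X)]$ is preserved; hence $\hat{\mathcal{M}}_{g,n,p}(\mathbf{L})^\Gamma(\mathbf{K})$ is a union of torus orbits and the quotient $\hat{\mathcal{M}}_{g,n,p}(\mathbf{L})^{\Gamma*}(\mathbf{K})$ makes sense.

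Next I would use the equivalence of criteria stated right after Lemma~\ref{lem:innermostshortestcycles}: $\gamma_i$ is the innermost shortest cycle in $[\partial_i]_{\invbreve{S}_{g,n,p}}$ iff $\ell_{\gamma_i}(X)\leq L_i$ and every simple closed geodesic $\beta\in[\partial_i]_{\invbreve{S}_{g,n,p}}$ disjoint from $\gamma_i$ has $\ell_\beta(X)\geq\ell_{\gamma_i}(X)$, with equality only if $\beta$ lies between $\gamma_i$ and $\partial_i$. The key observation is that cutting $S_{g,n+p}$ along $\Gamma$ separates the homotopy class $[\partial_i]_{\invbreve{S}_{g,n,p}}$: a simple closed curve in this class, once isotoped off $\Gamma$, lies either in the component $s_i$ (the genus-$0$ piece carrying $\partial_i$ and $(\partial_j)_{j\in I_i}$, bounded by $\gamma_i$) or in the central component $s_0$ — and in $s_0$ it is freely homotopic to $\gamma_i$ itself, which is now a boundary. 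So the competitor curves $\beta$ that can threaten minimality of $\gamma_i$ are exactly those living in $s_i$ and homotopic to its $\gamma_i$-boundary. Translating the two-sided criterion: the condition that $\gamma_i$ is innermost-shortest in $\invbreve{S}_{g,n,p}$ is, after cutting, precisely the condition that in the piece $s_i$ — which is a surface of type $\mathcal{M}_{0,2+|I_i|}(L_i,K_i,\mathbf{L}_{I_i})$ with the $\gamma_i$-boundary of length $K_i$ — the $\gamma_i$-boundary has \emph{minimal} length in $[\gamma_i]_{\invbreve{S}_{0,2,|I_i|}}$ (allowing equality, since other minimizers between $\gamma_i$ and $\partial_i$ are permitted), i.e.\ $X|_{s_i}\in\overline{\mathcal{H}_{|I_i|}}(L_i,K_i,\mathbf{L}_{I_i})$; and simultaneously that on the central piece $s_0$ the curve $\partial_i$ (now equal to $\gamma_i$ as a boundary) is tight in $\invbreve{S}_{g,n,|I_0|}$ for every $i=1,\ldots,n$, i.e.\ $X|_{s_0}\in\mathcal{M}^{\text{tight}}_{g,n,|I_0|}(\mathbf{K},\mathbf{L}_{I_0})$. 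One has to be a little careful here: the ``$\ell_{\gamma_i}\leq L_i$'' half of the criterion is automatically encoded in $\overline{\mathcal{H}_{|I_i|}}$ when $I_i\neq\emptyset$ (since $L_i\geq K_i$ on that space), and when $I_i=\emptyset$ we have $K_i=L_i$ and $s_i$ is just an annulus contributing trivially, consistent with the convention in Proposition~\ref{prop:VHTrelation}. Conversely, any point of the product \eqref{eq:tightmodulifactorization} glues back to a surface satisfying the innermost-shortest-cycle conditions, because a short competitor on the glued surface would restrict to a short competitor on one of the pieces. This shows the image of $\hat{\mathcal{M}}_{g,n,p}(\mathbf{L})^{\Gamma*}(\mathbf{K})$ under \eqref{eq:symplectomorphism} is exactly \eqref{eq:tightmodulifactorization}.

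The main obstacle I expect is the bookkeeping at the boundary of the regions: making sure the ``equality allowed only between $\gamma_i$ and $\partial_i$'' clause in the innermost-shortest criterion matches up exactly with using the \emph{closed} (weakly half-tight) space $\overline{\mathcal{H}_{|I_i|}}$ rather than the open $\mathcal{H}_{|I_i|}$, and that no competitor curve is double-counted or missed when it wraps $\gamma_i$ from the $s_0$ side (where it is homotopically trivial to push it onto the $\gamma_i$ boundary and hence not a genuine competitor). A secondary subtlety is confirming that the decomposition of $[\partial_i]_{\invbreve{S}_{g,n,p}}$ into curves-in-$s_i$ versus curves-in-$s_0$ is exhaustive and disjoint, which follows from the fact that $\gamma_i$ is separating in $\invbreve{S}_{g,n,p}$ together with the structure of $S_{g,n+p}(\Gamma)$ recorded just before the lemma; once that is in hand, the rest is a routine translation of length conditions across the symplectomorphism.
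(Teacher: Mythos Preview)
Your proposal is correct and follows essentially the same approach as the paper: both establish twist-invariance via the criterion stated after Lemma~\ref{lem:innermostshortestcycles}, then translate the innermost-shortest-cycle condition across the cut into tightness on $s_0$ and weak half-tightness on each nonempty $s_i$. One small imprecision worth fixing: your claim that twisting ``changes neither the length nor the free homotopy class of any closed curve'' is false for curves that cross $\gamma_i$; the paper is more careful and observes only that lengths of geodesics \emph{disjoint} from $\Gamma$ are preserved under twisting, which is exactly what the disjoint-curve criterion requires.
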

\begin{proof}
    Let $X \in \mathcal{M}_{g,n+p}(\mathbf{L})^\Gamma(\mathbf{K})$ be a hyperbolic surface with distinguished multicurve $\Gamma$.
    The lengths of the geodesics associated to $\Gamma$ as well as the lengths of the geodesics that are disjoint from those geodesics are invariant under twisting $X$ along $\Gamma$.
    The criterion explained just below Lemma~\ref{lem:innermostshortestcycles} for $\gamma_i$ to be the innermost shortest cycle $\sigma^i_{\invbreve{S}_{g,n,p}}(X)$ is thus also preserved under twisting, showing that the subset $\hat{\mathcal{M}}_{g,n,p}(\mathbf{L})^\Gamma(\mathbf{K})$ is invariant.
	
    Let $X_0 \in \mathcal{M}_{g,n,|I_0|}(\mathbf{K},\mathbf{L}_{I_0})$ and $X_i \in \mathcal{M}_{0,2+|I_i|}(L_i,K_i,\mathbf{L}_{I_i})$ for those $i=1,\ldots, n$ for which $I_i\neq\emptyset$ be the hyperbolic structures on the connected components $s_0,\ldots,s_n$ of $S_{g,n+p}(\Gamma)$ obtained by cutting $X$ along the geodesics associated to $\Gamma$. 
    For each $i=1,\ldots,n$ the criterion for $\gamma_i$ to be the innermost shortest cycle $\sigma^i_{\invbreve{S}_{g,n,p}}(X)$ is equivalent to the following two conditions holding:
    \begin{itemize}
        \item the $i$th boundary of $X_0$ is tight in the capped-off surface associated to $s_0$;
        \item $I_i = \emptyset$ (meaning $\gamma_i = \partial_i$) or $X_i \in \overline{\mathcal{H}_{|I_i|}}(L_i,K_i,\mathbf{L}_{I_i})$ (recall the definition in \eqref{eq:weaklyhalftight}).
    \end{itemize}
    Hence, we have $X \in \hat{\mathcal{M}}_{g,n,p}(\mathbf{L})^\Gamma(\mathbf{K})$ precisely when $X_0 \in \mathcal{M}^{\text{tight}}_{g,n,|I_0|}(\mathbf{K},\mathbf{L}_{I_0})$ and $X_i \in \overline{\mathcal{H}_{|I_i|}}(L_i,K_i,\mathbf{L}_{I_i})$ when $I_i \neq \emptyset$.
    This proves the second statement of the lemma.
\end{proof}

It follows that the Weil--Petersson volume of $\hat{\mathcal{M}}_{g,n,p}(\mathbf{L})^{\Gamma*}(\mathbf{K})$ is equal to the product of the volumes of the spaces appearing in \eqref{eq:tightmodulifactorization}.
Combining with Lemma~\ref{lem:modulilift} and the integration formula \eqref{eq:integrationfactorization} this shows that
\begin{align*}
	V_{g,n+p}(\mathbf{L}) &= \sum_{\Gamma} \int_{\hat{\mathcal{M}}_{g,n,p}(\mathbf{L})^\Gamma} \rmd\mu_{WP}\\
	&= \sum_{I_0 \sqcup \cdots \sqcup I_n = \{n+1,\ldots,n+p\}} \int T_{g,n,|I_0|}(\mathbf{K},\mathbf{L}_{I_0})\prod_{\substack{1\leq i\leq n\\I_i \neq \emptyset}} H_{|I_i|}(L_i,K_i,\mathbf{L}_{I_i})\,K_i\rmd K_i,
\end{align*}
where it is understood that $K_i = L_i$ whenever $I_i = \emptyset$.
This proves the first relation of Proposition~\ref{prop:VHTrelation}.

\subsection{Tight decomposition of the cylinder}

\begin{figure}
    \centering
    \includegraphics[width=.4\linewidth]{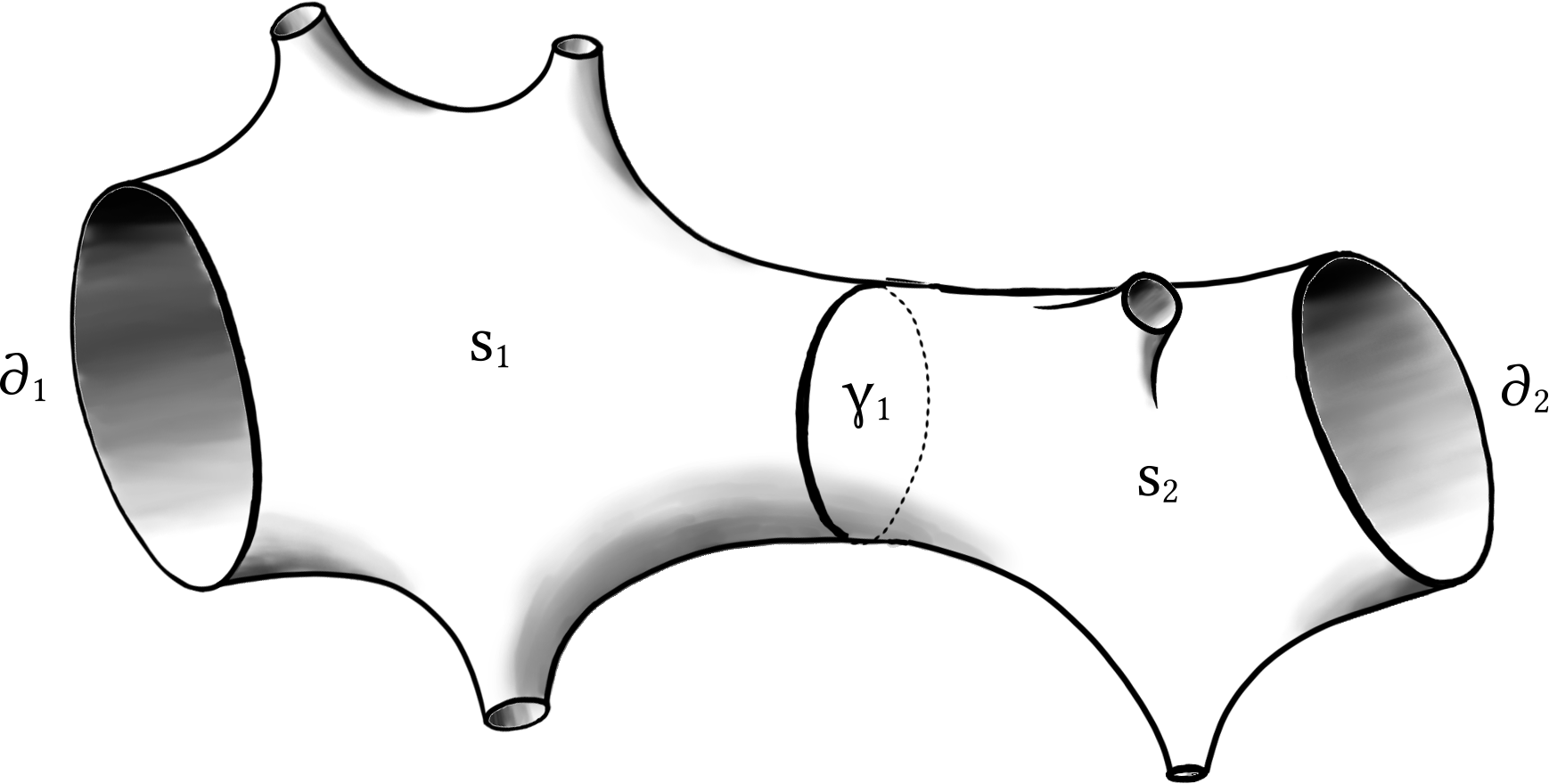}%
    \caption{A surface with two distinguished boundaries $\partial_1$ and $\partial_2$ naturally decomposes into a pair of half-tight cylinders.\label{fig:cylinderdecomposition}}
\end{figure}

The decomposition we have just described does not work well in the case $g=0$ and $n=2$, because $\partial_1$ and $\partial_2$ are in the same free homotopy class of the capped surface $\invbreve{S}_{0,2,p}$.
Instead, we should consider a multicurve $\Gamma=(\gamma_1)$ consisting of a single curve $\gamma_1$ on $S_{0,2+p}$ in the free homotopy class $[\partial_1]_{\invbreve{S}_{0,2,p}}=[\partial_2]_{\invbreve{S}_{0,2,p}}$, see Figure~\ref{fig:cylinderdecomposition}.
In this case there exists a partition $I_1 \sqcup I_2 = \{3,\ldots,p+2\}$ such that $S_{0,2+p}(\Gamma)$ has two connected components $s_1$ and $s_2$, with $s_i$ a genus-$0$ surface with $2+|I_i|$ boundaries corresponding to $\partial_i$, $\gamma_1$ and  $(\partial_{j})_{j\in I_i}$.
We consider the restricted moduli space
\begin{align*}
	\hat{\mathcal{M}}_{0,2,p}(\mathbf{L})^\Gamma = \left\{ X \in \mathcal{M}_{0,2+p}(\mathbf{L})^\Gamma : \gamma_1\in [\sigma^1_{\invbreve{S}_{0,2,p}}(X)]\right\},
\end{align*} 
which thus treats the two boundaries $\partial_1$ and $\partial_2$ asymmetrically, by requiring that $\gamma_1$ is the shortest curve farthest from $\partial_1$.
Lemma~\ref{lem:modulilift} goes through unchanged: the projection
\begin{align*}
	\bigsqcup_{\Gamma} \hat{\mathcal{M}}_{0,2,p}(\mathbf{L})^\Gamma \longrightarrow \mathcal{M}_{0,2+p}(\mathbf{L}),
\end{align*}
where the disjoint union is over the mapping class group orbits of $\Gamma = (\gamma_1)$, is a bijection.
Assuming $L_1 \geq L_2$, we cannot have $\gamma_1 \in [\partial_1]_{S_{0,2+p}}$ so $I_1 \neq \emptyset$. 
There are two cases to consider:
\begin{itemize}
	\item $\gamma_1 \in [\partial_2]_{S_{0,2+p}}$ and therefore $I_2 = \emptyset$: this means that $\partial_2$ has minimal length in $[\partial_2]_{\invbreve{S}_{0,2,p}}$, so $\hat{\mathcal{M}}_{0,2,p}(\mathbf{L})^\Gamma = \overline{\mathcal{H}_p}(\mathbf{L})$.
	\item $I_2 \neq \emptyset$: by reasoning analogous to that of Lemma~\ref{lem:tightfactorization} we have that $\hat{\mathcal{M}}_{0,2,p}(\mathbf{L})^{\Gamma*}(K)$ is symplectomorphic to 
	\begin{align*}
		\overline{\mathcal{H}_{|I_1|}}(L_1,K,\mathbf{L}_{I_1}) \times \mathcal{H}_{|I_2|}(L_2,K,\mathbf{L}_{I_2}).
	\end{align*}
\end{itemize}
Hence, when $L_1 \geq L_2$ we have
\begin{align*}
	V_{0,2+p}(\mathbf{L}) &= H_{p}(\mathbf{L}) + \sum_{\substack{I_1 \sqcup I_2 = \{3,\ldots,2+p\}\\I_1,I_2\neq\emptyset}} \int_0^{L_2} H_{|I_1|}(L_1,K,\mathbf{L}_{I_1}) H_{|I_2|}(L_2,K,\mathbf{L}_{I_2}) K\rmd K.
\end{align*}
This proves the second relation of Proposition~\ref{prop:VHTrelation}.

\section{Generating functions of tight Weil--Petersson volumes}\label{sec:generating_func}
\subsection{Definitions}

Let us define the following generating functions of the Weil--Petersson volumes, half-tight cylinder volumes and tight Weil--Petersson volumes: 
\begin{align}
	F_g[\mu] &= \sum_{n=0}^\infty \frac{1}{n!} \int \dd{\mu(L_1)}\cdots \int\dd{\mu(L_{n})} V_{g,n}(\mathbf{L}),\\
	H(L_1,L_2;\mu] &= \sum_{p=1}^\infty \frac{1}{p!} \int \dd{\mu(L_3)}\cdots \int \dd{\mu(L_{2+p})} H_p(\mathbf{L}),\label{eq:htcylindergenfun}\\
	\tilde F_g[\nu,\mu]&=\sum_{n=0}^\infty \frac{1}{n!} \int \dd{\nu(L_1)}\cdots \int \dd{\nu(L_n)} T_{g,n}(L_1,\ldots,L_n;\mu].\label{eq:bivariategenfun}
\end{align}

Furthermore, for $g\geq 2$ we recall the polynomial
\begin{equation}\label{eq:Pg0expr}
\mathcal{P}_{g,0}(m_1,\ldots,m_{3g-3}) = \sum_{\substack{d_2,d_3,\ldots\geq 0\\ \sum_{k\geq 2} (k-1)d_k = 3g-3}}\!\!\!\! \langle \tau_2^{d_2}\tau_3^{d_3}\cdots \rangle_g \prod_{k\geq 2} \frac{(-m_{k-1})^{d_k}}{d_k!},
\end{equation}
where $\langle \tau_2^{d_2}\tau_3^{d_3}\cdots \rangle_g$ are the $\psi$-class intersection numbers on the moduli space $\mathcal{M}_{g,n}$ with $n = \sum_k d_k \leq 3g-3$ marked points.
Then according to \cite[Theorem~3]{budd2020irreducible}\footnote{Note that there has been a shift in conventions, e.g. regarding factors of $2$. }
\begin{align}
F_0[\mu] &= \frac{1}{2}\int_{0}^{R} \!\!\!\rmd r\, Z(r;\mu]^2,\label{eq:F0expr}\\
F_1[\mu] &= - \frac{1}{24} \log M_0[\mu],\\
F_g[\mu] &= \frac{1}{\left(M_0[\mu]\right)^{2g-2}}\,\mathcal{P}_{g,0}\left(\frac{M_1[\mu]}{M_0[\mu]},\ldots, \frac{M_{3g-3}[\mu]}{M_0[\mu]}\right) \quad \text{for }g\geq 2.\label{eq:Fgexpr}
\end{align}
In the genus-$0$ case we can take successive derivatives to find useful formulas for one, two or three distinguished boundaries of prescribed lengths,
\begin{align}
\fdv{F_0[\mu]}{\mu(L_1)} &= -\int_0^{R[\mu]} I_0(L_1\sqrt{2r})\,Z(r;\mu]\rmd r,  \label{eq:diskfunction},\\
\frac{\delta^2F_0[\mu]}{\delta\mu(L_1)\delta\mu(L_2)} &=  \int_0^{R[\mu]} I_0(L_1\sqrt{2r})\, I_0(L_2\sqrt{2r})\rmd r,\label{eq:cylinderfunction}\\
\frac{\delta^3 F_0[\mu]}{\delta\mu(L_1)\delta\mu(L_2)\delta\mu(L_3)} &= \frac{1}{M_0[\mu]}\left[\prod_{i=1}^3 I_0(L_i\sqrt{2R[\mu]})\right].
\end{align}

\subsection{Volume of half-tight cylinder}

The equations of Proposition~\ref{prop:VHTrelation} turn into the equations
\begin{align}
	\frac{\delta^n F_g[\mu]}{\delta \mu(L_1)\cdots\delta \mu(L_n)} &= \int T_{g,n}(\mathbf{K};\mu]\prod_{i=1}^n (K_i H(L_i,K_i;\mu] + \delta(K_i - L_i)) \rmd K_i,\qquad  (g\geq 1\text{ or }n\geq 3)\label{eq:FTrelation}\\
	\frac{\delta^2 F_0[\mu]}{\delta \mu(L_1)\delta \mu(L_2)}&= H(L_1,L_2;\mu] + \int_0^{L_2} H(L_1,K;\mu] H(L_2,K;\mu] K\rmd K. \qquad (L_1 \geq L_2)\label{eq:Hgenfuneq}
\end{align}
Let us focus on the last equation, which should determine $H(L_1,L_2;\mu]$ uniquely. 
The left-hand side depends on $\mu$ only through the quantity $R[\mu]$, and the dependence on $R$ is analytic,
\begin{align*}
    \frac{\delta^2 F_0[\mu]}{\delta \mu(L_1)\delta \mu(L_2)} = R + \frac{1}{4}(L_1^2+L_2^2)R^2 + \frac{1}{48}(L_1^4 + 4 L_1^2L_2^2 + L_2^4) R^3 + \cdots.
\end{align*} 
Hence, the same is true for $H(L_1,L_2;\mu]$ and one may easily calculate order by order in $R$ that
\begin{align*}
    H(L_1,L_2;\mu] = R + \frac{1}{4}(L_1^2-L_2^2)R^2 + \frac{1}{48}(L_1^2-L_2^2)^2R^3+\cdots.
\end{align*} 
This suggests that $H(L_1,L_2;\mu]$ depends on $L_1$ and $L_2$ only through the combination $L_1^2 - L_2^2$.
Let's prove this. 

\begin{lemma}\label{lem:halftight}
    The half-tight cylinder generating function satisfies
    \begin{align}
        \left(\frac{1}{L_1}\frac{\partial}{\partial L_1}+\frac{1}{L_2}\frac{\partial}{\partial L_2}\right) H(L_1,L_2;\mu] = 0,\label{eq:Hdifferentialidentity}
    \end{align}
    and is therefore given by
    \begin{align}
        H(L_1,L_2;\mu] = \sum_{\ell=0}^\infty \frac{2^{-\ell} R[\mu]^{\ell+1}}{\ell!(\ell+1)!}(L_1^2-L_2^2)^{\ell} = \sqrt{\frac{2R[\mu]}{L_1^2-L_2^2}} I_1\left( \sqrt{L_1^2-L_2^2}\sqrt{2R[\mu]}\right) \qquad (L_1 \geq L_2).\label{eq:Hexpression}
    \end{align}
\end{lemma}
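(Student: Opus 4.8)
The plan is to establish the differential identity \eqref{eq:Hdifferentialidentity} first, since once we know that $H(L_1,L_2;\mu]$ depends on $L_1,L_2$ only through $s := L_1^2 - L_2^2$, the explicit form \eqref{eq:Hexpression} follows by solving the recursion \eqref{eq:Hgenfuneq} order by order in $R$ (equivalently, by matching the Bessel-function ansatz against the known low-order terms and checking it satisfies the recursion). So the heart of the matter is the identity \eqref{eq:Hdifferentialidentity}.

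To prove it I would differentiate the fixed-point equation \eqref{eq:Hgenfuneq} with respect to $L_1$ and $L_2$ after applying the operator $\tfrac1{L_1}\partial_{L_1} + \tfrac1{L_2}\partial_{L_2}$, and exploit that the left-hand side $\delta^2 F_0/\delta\mu(L_1)\delta\mu(L_2)$ has the closed form \eqref{eq:cylinderfunction}, namely $\int_0^{R} I_0(L_1\sqrt{2r})I_0(L_2\sqrt{2r})\,\rmd r$. The key analytic input is that $I_0(L\sqrt{2r})$ satisfies the Bessel-type ODE $\big(\tfrac1L \partial_L\big) I_0(L\sqrt{2r}) = 2\,\partial_r I_0(L\sqrt{2r})$ — differentiating under the integral sign turns the $\tfrac1{L_i}\partial_{L_i}$ action into an $r$-derivative, which then integrates to a boundary term at $r = R$. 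Concretely, $\big(\tfrac1{L_1}\partial_{L_1} + \tfrac1{L_2}\partial_{L_2}\big)$ applied to the cylinder function yields $2\int_0^R \partial_r\big(I_0(L_1\sqrt{2r})I_0(L_2\sqrt{2r})\big)\rmd r = 2\,I_0(L_1\sqrt{2R})I_0(L_2\sqrt{2R}) - 2$. One then has to show the same operator annihilates the right-hand side of \eqref{eq:Hgenfuneq} modulo exactly this boundary contribution — i.e. set up an inductive argument in $R$-order (or in the number of defects $p$) where the ansatz that $H$ depends only on $L_1^2 - L_2^2$ is propagated through the quadratic convolution term $\int_0^{L_2} H(L_1,K)H(L_2,K)\,K\rmd K$.

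The main obstacle is handling that convolution term under the combined differential operator, because the upper integration limit $L_2$ itself depends on $L_2$, producing a boundary term from the fundamental theorem of calculus in addition to the bulk contributions; at $K = L_2$ one has $H(L_1,L_2)H(L_2,L_2)L_2$, and $H(L_2,L_2)$ is the ``diagonal'' value which by the conjectured form equals $R$ (the $s=0$ value). I would need to track that these boundary terms conspire with the bulk terms — using the induction hypothesis $H(L_1,K;\mu] = h(L_1^2 - K^2;\mu]$ for lower orders — to reproduce exactly the $2\,I_0 I_0 - 2$ pattern coming from the left-hand side. An alternative, possibly cleaner, route is to bypass the differential identity and instead verify directly that the proposed closed form $H(L_1,L_2;\mu] = \sqrt{2R/(L_1^2-L_2^2)}\,I_1(\sqrt{L_1^2-L_2^2}\sqrt{2R})$ solves \eqref{eq:Hgenfuneq}: this reduces to a Bessel-function convolution identity of the shape $\int_0^{L_2} \tfrac{I_1(a\sqrt{L_1^2-K^2})}{\sqrt{L_1^2-K^2}}\tfrac{I_1(a\sqrt{L_2^2-K^2})}{\sqrt{L_2^2-K^2}} K\,\rmd K$ summing, together with the ``diagonal'' piece, to $\int_0^R I_0(L_1\sqrt{2r})I_0(L_2\sqrt{2r})\rmd r$ — which should follow from standard integral representations of products of Bessel functions (Graf/Gegenbauer-type addition formulas) after the substitution $K^2 \mapsto$ appropriate variable. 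I expect the series form in \eqref{eq:Hexpression} makes the bookkeeping most transparent, so I would carry out the induction at the level of the coefficients $\tfrac{2^{-\ell}R^{\ell+1}}{\ell!(\ell+1)!}$ and only convert to the closed Bessel form at the end by recognizing the generating series of $I_1$.
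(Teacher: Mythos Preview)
Your overall strategy matches the paper's: apply $\tfrac{1}{L_1}\partial_{L_1}+\tfrac{1}{L_2}\partial_{L_2}$ to the fixed-point equation \eqref{eq:Hgenfuneq}, convert the action on the cylinder function into a total $r$-derivative, and then induct on the $R$-order through the convolution term. However, the Bessel identity you invoke is wrong: $\tfrac{1}{L}\partial_L I_0(L\sqrt{2r}) = \tfrac{\sqrt{2r}}{L}I_1(L\sqrt{2r})$ whereas $2\,\partial_r I_0(L\sqrt{2r}) = \tfrac{2L}{\sqrt{2r}}I_1(L\sqrt{2r})$, and these are not equal (indeed $I_0(L\sqrt{2r})$ is a function of $L^2 r$, so $\tfrac{1}{L}\partial_L = \tfrac{2r}{L^2}\partial_r$, not $2\partial_r$). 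The correct identity, which the paper uses, is at the level of the \emph{product}: the operator applied to $I_0(L_1\sqrt{2r})\,I_0(L_2\sqrt{2r})$ equals $\partial_r\!\left[\tfrac{\sqrt{2r}}{L_1}I_1(L_1\sqrt{2r})\cdot\tfrac{\sqrt{2r}}{L_2}I_1(L_2\sqrt{2r})\right]$. Consequently the boundary term at $r=R$ is not $2I_0I_0-2$ but $H(L_1,0;\mu]\,H(L_2,0;\mu]$, where $H(L,0;\mu]=\tfrac{\sqrt{2R}}{L}I_1(L\sqrt{2R})$ is read off directly from \eqref{eq:Hgenfuneq} at $L_2=0$ (the convolution integral vanishes there). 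With this corrected boundary term the induction goes through exactly as you describe: the endpoint contribution at $K=L_2$ and the bulk terms combine so that $\big(\tfrac{1}{L_1}\partial_{L_1}+\tfrac{1}{L_2}\partial_{L_2}\big)H$ satisfies a homogeneous equation order by order in $R$ and hence vanishes.

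You also make the final step harder than necessary. Once \eqref{eq:Hdifferentialidentity} is known, $H(L_1,L_2;\mu] = H\big(\sqrt{L_1^2-L_2^2},\,0;\mu\big]$ immediately, and since $H(L,0;\mu]$ has already been computed above, the closed form \eqref{eq:Hexpression} follows without any further recursion-matching or Bessel convolution identities.
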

\begin{proof}
    By construction $H(L_1,0;\mu] = \frac{\delta^2F_0[\mu]}{\delta\mu(L_1)\delta\mu(0)}$ and the integral \eqref{eq:cylinderfunction} with $L_2=0$ evaluates to
    \begin{align}
        H(L_1,0;\mu] = \frac{\delta^2F_0[\mu]}{\delta\mu(L_1)\delta\mu(0)} &= \frac{\sqrt{2R}}{L_1} I_1(L_1\sqrt{2R}).\label{eq:Hat0}
    \end{align}
    The identity
    \begin{align*}
        \frac{\partial}{\partial r}\left( \frac{\sqrt{2r}}{L_1} I_1(L_1\sqrt{2r})\frac{\sqrt{2r}}{L_2} I_1(L_2\sqrt{2r})\right) = \left(\frac{1}{L_1}\frac{\partial}{\partial L_1}+\frac{1}{L_2}\frac{\partial}{\partial L_2}\right) I_0(L_1\sqrt{2r})\, I_0(L_2\sqrt{2r}),
    \end{align*}
    which can be easily checked by calculating the derivatives, implies that
    \begin{align*}
        H(L_1,0;\mu] H(L_2,0;\mu] = \left(\frac{1}{L_1}\frac{\partial}{\partial L_1}+\frac{1}{L_2}\frac{\partial}{\partial L_2}\right) \frac{\delta^2F_0[\mu]}{\delta\mu(L_1)\delta\mu(L_2)}. 
    \end{align*}
    Hence, by \eqref{eq:Hgenfuneq} we find that
    \begin{align*}
        \left(\frac{1}{L_1}\frac{\partial}{\partial L_1}+\frac{1}{L_2}\frac{\partial}{\partial L_2}\right) H(L_1,L_2;\mu] &= H(L_1,0;\mu] H(L_2,0;\mu] - \left(\frac{1}{L_1}\frac{\partial}{\partial L_1}+\frac{1}{L_2}\frac{\partial}{\partial L_2}\right) \int_0^{L_2} H(L_1,K;\mu] H(L_2,K;\mu] K\rmd K\\
        &= H(L_1,0;\mu] H(L_2,0;\mu] - H(L_1,L_2;\mu] H(L_2,L_2;\mu] \\
        &\quad - \int_0^{L_2} \left(\frac{1}{L_1}\frac{\partial}{\partial L_1}+\frac{1}{L_2}\frac{\partial}{\partial L_2}\right)H(L_1,K;\mu] H(L_2,K;\mu] K\rmd K\\
        &= -\int_0^{L_2} \frac{\partial}{\partial K}\left(H(L_1,K;\mu] H(L_2,K;\mu]\right)\rmd K \\
        &\quad - \int_0^{L_2} \left(\frac{1}{L_1}\frac{\partial}{\partial L_1}+\frac{1}{L_2}\frac{\partial}{\partial L_2}\right)H(L_1,K;\mu] H(L_2,K;\mu] K\rmd K\\
        &= - \int_0^{L_2} \Bigg[H(L_1,K;\mu]\left(\frac{1}{L_2}\frac{\partial}{\partial L_2}+\frac{1}{K}\frac{\partial}{\partial K}\right) H(L_2,K;\mu] \\
        &\qquad\qquad + H(L_2,K;\mu]\left(\frac{1}{L_1}\frac{\partial}{\partial L_1}+\frac{1}{K}\frac{\partial}{\partial K}\right) H(L_1,K;\mu]\Bigg] K\rmd K.
    \end{align*}
    Since the leading coefficient in $R$ of $H(L_1,L_2;\mu]$ satisfies \eqref{eq:Hdifferentialidentity}, it follows that the same is true for the higher-order coefficients in $R$.

    As a consequence of \eqref{eq:Hdifferentialidentity}, $H(L_1,L_2;\mu] = H\left(\sqrt{L_1^2-L_2^2},0;\mu\right]$ and the claimed expression \eqref{eq:Hexpression} follows from \eqref{eq:Hat0}.
\end{proof}

\subsection{Rewriting generating functions}

Since the work of Mirzakhani \cite{Mirzakhani2007a} it is known that the Weil--Petersson volumes $V_{g,n}(\mathbf{L})$ are expressible in terms of intersection numbers as follows.
The compactified moduli space $\overline{\mathcal{M}}_{g,n}$ of genus-$g$ curves with $n$ marked points comes naturally equipped with the Chern classes $\psi_1,\ldots,\psi_n$ associated with its $n$ tautological line bundles, as well as the cohomology class $\kappa_1$ of the Weil--Petersson symplectic structure (up to a factor $2\pi^2$). 
The corresponding intersection numbers are given by the integrals
\begin{align*}
    \left<\kappa_1^m\tau_{d_1}\cdots\tau_{d_n}\right>_{g,n} = \int_{\overline{\mathcal{M}}_{g,n}} \kappa_1^m \psi_1^{d_1}\cdots \psi_n^{d_n},
\end{align*}
where $d_1,\ldots,d_n \geq 0$ and $n = d_1+\cdots d_n + m + 3 - 3g$.
For $g \geq 0$ we denote the generating function of these intersection numbers by 
\begin{align}
    G_g(s;x_0,x_1,\ldots)=\sum_{n\geq 0}\frac{1}{n!}\sum_{\substack{m,d_1,\ldots,d_n \geq 0\\d_1+\cdots+d_n+m=3g-3+n}}\left<\kappa_1^m\tau_{d_1}\cdots\tau_{d_n}\right>_{g,n} \, \frac{s^m}{m!}\,x_{d_1}\cdots x_{d_n}.\label{eq:Ggdef}
\end{align}
We may sum over all genera to arrive at the generating function 
\begin{align}
    G(s;x_0,x_1,\ldots)=\sum_{g=0}^\infty \lambda^{2g-2} G_g(s;x_0,x_1,\ldots).
\end{align}
In order to lighten the notation we do not write the dependence on $\lambda$ explicitly here, which only serves as a formal generating variable.
Note that $\lambda$ is actually redundant for organizing the series, since any monomial appears in at most one of the $G_g$ as can be seen from \eqref{eq:Ggdef}.
Then the generating function of Weil--Petersson volumes can be expressed as
\begin{align}
    \sum_{g=0}^\infty \lambda^{2g-2}2^{3-3g}F_g[\mu]=G(\pi^2;t_0[\mu],t_1[\mu],\ldots),\label{eq:wpintersectionnum}
\end{align}
where the \emph{times} $t_k[\mu]$ are defined by
\begin{align}
    t_k[\mu]=\int \dd{\mu(L)}\,\frac{2L^{2k}}{4^kk!}.
\end{align}
See \cite[Lemma 11]{budd2020irreducible} based on \cite{Mirzakhani2007a}, where one should be careful that some conventions differ by some factors of two compared to the current work.

We will show that the (bivariate) generating function $\tilde{F}_g[\nu,\mu]$ of \emph{tight} Weil--Petersson volumes, defined in \eqref{eq:bivariategenfun}, is also related to the intersection numbers, but with different times. 

\begin{proposition}\label{prop:GB_link_to_tildeT}
The generating function of the volumes $T_{g,n}$ is related to the generating function of intersection numbers via
\begin{align}
    \tilde{F}_g[\nu,\mu] = 2^{3g-3} G_g(0;\tau_0[\nu,\mu],\tau_1[\nu,\mu],\ldots),
\end{align}
where the shifted times $\tau_k[\nu,\mu]$ are defined by
\begin{align}\label{eq:taushiftedtimes}
    \tau_k[\nu,\mu]=t_k[\nu]+\delta_{k,1}-2^{1-k}M_{k-1}[\mu].
\end{align}
\end{proposition}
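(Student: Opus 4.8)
The plan is to reduce the claimed identity to the already-known relation \eqref{eq:wpintersectionnum} between ordinary Weil--Petersson volumes and intersection numbers, by exploiting the decomposition of surfaces into a tight piece and half-tight cylinders established in Proposition~\ref{prop:VHTrelation}, in its generating-function form \eqref{eq:FTrelation}. First I would introduce, alongside the weight $\mu$ controlling the defects, a second weight $\nu$ controlling the tight boundaries, and assemble the relation \eqref{eq:FTrelation} into a clean statement at the level of $\tilde F_g[\nu,\mu]$ and $F_g[\mu]$. Concretely, summing \eqref{eq:FTrelation} against $\rmd\nu(L_1)\cdots\rmd\nu(L_n)$ and over $n$ shows that $F_g[\mu]$ is obtained from $\tilde F_g[\,\cdot\,,\mu]$ by substituting for the tight-weight argument the ``dressed'' weight whose action on a test function $f$ is $L \mapsto \nu(L)$ replaced by the linear functional $f \mapsto \mu$ composed with the half-tight kernel $H$; since $H(L_1,L_2;\mu]$ depends on $L_1,L_2$ only through $L_1^2-L_2^2$ by Lemma~\ref{lem:halftight}, this substitution acts diagonally on the monomial basis $\{L^{2k}\}$ and hence can be tracked explicitly on the times $t_k$.

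The key computational step is therefore to work out how the operation ``glue a half-tight cylinder governed by $\mu$'' acts on the times. Using \eqref{eq:Hexpression}, $\int_0^\infty \rmd K\, K\, I_0(K\sqrt{2r})\, H(K,L;\mu]$-type integrals produce Bessel-function identities that, after expanding in $L^2$, give a linear relation expressing the ``dressed'' times as $t_k[\nu]$ plus a correction built from the coefficients of $H(\,\cdot\,,0;\mu]$, i.e.\ from powers of $R[\mu]$. Matching this against the definition \eqref{eq:momentdef}--\eqref{eq:Mexplicit} of the moments $M_k[\mu]$ — recalling that $M_k$ is the $(k+1)$st $r$-derivative of $Z(r;\mu]$ evaluated at $R[\mu]$, and that $Z$ itself is the generating series \eqref{eq:Zdef} combining the $J_1$-Bessel term with $\int I_0(L\sqrt{2r})\rmd\mu(L)$ — should reproduce precisely the shift $\tau_k = t_k[\nu] + \delta_{k,1} - 2^{1-k}M_{k-1}[\mu]$. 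The Kronecker delta at $k=1$ is the contribution of the ``bare'' $J_1$ term (equivalently the $r^1$ coefficient, corresponding to the pair of pants $V_{0,3}=1$), while the $M_{k-1}[\mu]$ terms encode the defects threaded through the cylinders. I would verify the normalization constant $2^{1-k}$ by comparing against the low-order expansions already displayed in the excerpt (e.g.\ $H = R + \tfrac14(L_1^2-L_2^2)R^2 + \cdots$, $T_{0,3} = 1/M_0$, $T_{1,1}$).

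An alternative, and perhaps cleaner, route for the bookkeeping is to invert the relation: since \eqref{eq:FTrelation} expresses $\delta^n F_g/\delta\mu^n$ as an integral transform of $T_{g,n}$ with an explicitly invertible kernel (the operator $f(K)\mapsto f(L) + \int f(K)\, K\, H(L,K;\mu]\rmd K$ has a Neumann-series inverse because $H = O(\mu^0)$ only through $R$, which is $O(\mu)$ — wait, rather because the $p=0$ term is the identity), one can solve for $T_{g,n}$, hence for $\tilde F_g[\nu,\mu]$, directly in terms of $F_g[\mu]$ with shifted arguments, and then feed in \eqref{eq:wpintersectionnum}. Either way the logical skeleton is: (i) generating-function form of the tight decomposition; (ii) explicit action of the half-tight gluing on the times via the Bessel identities of Lemma~\ref{lem:halftight}; (iii) identification of the resulting shift with the moments $M_k[\mu]$ through the definition of $Z$ and $R$; (iv) substitute into the known formula \eqref{eq:wpintersectionnum}, noting that the genus grading and the power of $2$ (namely $2^{3g-3} = \lambda^{2g-2} 2^{3-3g}|_{\text{per genus}}$, up to the $\lambda$ bookkeeping explained after \eqref{eq:Ggdef}) match. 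The main obstacle is step (ii)--(iii): carrying the $\mu$-dependence through the Bessel integrals and recognizing the combination that appears as exactly $2^{1-k}M_{k-1}[\mu]$, rather than some other reparametrization of $R[\mu]$; getting the powers of $2$ and the index shift $k\leftrightarrow k-1$ right is where the conventions (flagged repeatedly in the excerpt as differing by factors of two) will bite, so I would pin them down against the explicit low-genus data before asserting the general formula.
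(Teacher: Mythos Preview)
Your outline has the right opening move — pass to the generating-function form of the tight decomposition \eqref{eq:FTrelation} and track what the half-tight gluing does to the times — but it is missing two essential ingredients that the paper uses and without which the argument cannot close.

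First, the known relation \eqref{eq:wpintersectionnum} lands you in $G(\pi^2;\,\cdot\,)$, not $G(0;\,\cdot\,)$. Eliminating the $\kappa_1$-variable requires the Kaufmann--Manin--Zagier identity $G(s;x_0,x_1,\ldots)=G(0;x_0,x_1,x_2+\gamma_2(s),\ldots)$, which introduces the shifts $\gamma_k(\pi^2)=(-1)^k\pi^{2k-2}/(k-1)!$ for $k\geq 2$. You do not mention this, and without it there is no way to reach the target $G_g(0;\boldsymbol{\tau})$.

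Second, and more seriously, the half-tight gluing does \emph{not} act diagonally on the times: the paper computes $t_k[H_\mu\rho]=\sum_{p\geq 0}\frac{(2R[\mu])^p}{p!}\,t_{p+k}[\rho]$, an upper-triangular mixing. To undo this mixing inside $G$ and obtain times of the additive form $\tau_k=t_k[\nu]+\text{(something depending only on }\mu\text{)}$, the paper invokes a further identity (their Lemma derived from the string equation, following Itzykson--Zuber): $G(0;x_0,x_1,\ldots)$ is invariant, up to an explicit genus-0 correction, under the simultaneous shift $x_i\mapsto \delta_{i,0}r+\sum_k\frac{(-r)^k}{k!}x_{k+i}$. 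Setting $r=2R[\mu]$ unwinds exactly the triangular action of $H_\mu$. Your steps (ii)--(iii) overlook this entirely; in particular, note that $H(L_1,L_2;\mu]$ depends on $\mu$ only through $R[\mu]$, so the higher moments $M_k[\mu]$ for $k\geq 1$ \emph{cannot} be produced by the half-tight integrals alone. They arise only after combining the $t_k[\mu]$-contribution from $F_g[\rho+\mu]$, the $\gamma_k(\pi^2)$ shifts, and the string-equation reorganization — it is precisely this combination that the paper identifies with $\delta_{k,1}-2^{1-k}M_{k-1}[\mu]$ via the Taylor expansion of $Z(r;\mu]$ around $r=R[\mu]$.

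Finally, for $g=0$ and $n\leq 2$ the tight volumes vanish, so the generating-function form of \eqref{eq:FTrelation} carries a correction $F_{\text{corr}}$ that must be shown to cancel against the extra term produced by the string-equation shift; the paper checks this separately. Your proposal does not address this boundary case.
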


\noindent
This proposition will be proved in the remainder of this subsection, relying on an appropriate substitution of the weight $\nu$.
To this end, we informally introduce a linear mapping $H_\mu$ on measures on the half line $[0,\infty)$ as follows.
If $\rho$ is a measure on $[0,\infty)$ we let $H_\mu \rho$ be the measure given by
\begin{align}
    \rho +  \left(\int H(L,K;\mu] \dd{\rho(L)}\right) K \dd{K}.
\end{align}
The effect of $H_\mu$ on the times can be computed using the series expansion \eqref{eq:Hexpression},  
\begin{align}
    t_k[H_\mu\rho] 
    &= t_k[\rho] + \frac{2}{4^k k!} \int_0^\infty  \left(\int H(L,K;\mu] \dd{\rho(L)}\right) \,K^{2k+1} \dd{K}\nonumber\\
    &= t_k[\rho] + \sum_{p=1}^\infty \frac{2(2R[\mu])^p}{p!4^{p+k}(p+k)!} \int  \dd{\rho(L)}\,L^{2p+2k} \nonumber\\
    & = \sum_{p=0}^\infty \frac{(2R[\mu])^p}{p!} t_{p+k}[\rho].
\end{align}
We observe that $H_\mu$ acts as an infinite upper-triangular matrix on the times.
This matrix is easily inverted to give
\begin{align}
    t_q[\rho] 
    &= \sum_{p=0}^\infty \frac{(-2R[\mu])^p}{p!} t_{p+q}[H_\mu \rho].\label{eq:inversetimesubs}
\end{align}
This means that knowledge of the generating function $\tilde{F}_g[H_\mu \rho,\mu]$ with substituted weight $H_\mu \rho$ is sufficient to recover the original generating function $\tilde{F}_g[\nu,\mu]$.
Luckily the former is within close reach.

\begin{lemma}\label{lem:halftight_F_rel} 
The generating functions for tight Weil--Petersson volumes and regular Weil--Petersson volumes are related by
\begin{align}
    \tilde F_g[H_\mu \rho,\mu]=F_g[\rho+\mu] - \delta_{g,0}F_{\textrm{corr}}[\rho,\mu],
\end{align}
where the correction term 
\begin{align}
    F_{\textrm{corr}}[\rho,\mu]=\sum_{n=0}^2\frac{1}{n!}\sum_{p=0}^\infty \frac{1}{p!}\int \rmd\rho(L_1)\cdots\rmd\rho(L_n)\,\rmd\mu(L_{n+1})\cdots\rmd\mu(L_{n+p})\,V_{0,n+p}(\mathbf{L}) 
\end{align}
is necessary to subtract the constant, linear and quadratic dependence on $\rho$ in the genus-$0$ case.
\end{lemma}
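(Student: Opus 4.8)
The plan is to unpack both sides as formal power series in $\rho$ and $\mu$ and match them term by term using the first relation of Proposition~\ref{prop:VHTrelation} (equivalently, equation \eqref{eq:FTrelation}). First I would expand the left-hand side: by definition \eqref{eq:bivariategenfun}, $\tilde F_g[H_\mu\rho,\mu] = \sum_{n\ge 0}\frac{1}{n!}\int \dd{(H_\mu\rho)(L_1)}\cdots\dd{(H_\mu\rho)(L_n)}\,T_{g,n}(\mathbf{L};\mu]$, where each occurrence of the measure $H_\mu\rho$ should be read, following the informal definition given just above the lemma, as $\rho + \bigl(\int H(L,K;\mu]\dd{\rho(L)}\bigr)K\dd{K}$. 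Substituting this and also expanding $T_{g,n}(\mathbf{L};\mu]$ via \eqref{eq:tightgenfun} as a sum over the number $p$ of $\mu$-distributed defects, the left-hand side becomes a sum over $n$, over a subset of the $n$ tight slots that get ``dressed'' by a half-tight cylinder, and over $p$.

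The second step is to recognize that this is exactly the sum obtained by expanding $F_g[\rho+\mu]$ using the first identity of Proposition~\ref{prop:VHTrelation}. Concretely, $F_g[\rho+\mu] = \sum_{N\ge 0}\frac{1}{N!}\int \prod_{a=1}^N \dd{(\rho+\mu)(L_a)}\,V_{g,N}(\mathbf{L})$; expanding each $\dd{(\rho+\mu)(L_a)}$ into its $\rho$-part and $\mu$-part splits the $N$ boundaries into those carrying $\rho$ and those carrying $\mu$, and then applying Proposition~\ref{prop:VHTrelation} to $V_{g,N}$ rewrites $V_{g,N}$ as a sum over partitions $I_0\sqcup\cdots\sqcup I_n$ of the $\mu$-boundaries among the $n$ tight boundaries (these being exactly the $\rho$-boundaries) and the ``central'' tight piece. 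After relabelling — the $\mu$-boundaries in $I_i$ ($i\ge 1$) together with $\rho$-boundary $i$ assemble, via $H_{|I_i|}(L_i,K_i,\mathbf{L}_{I_i})K_i\dd{K_i}$, into precisely the measure-level action of $H_\mu$ on the $i$th $\rho$-slot (using \eqref{eq:htcylindergenfun} to resum over $|I_i|$ with the $1/p!$ weights), while the $\mu$-boundaries in $I_0$ feed into $T_{g,n,|I_0|}$ and resum into $T_{g,n}(\mathbf{L};\mu]$ — one sees the two expansions coincide. Combinatorial bookkeeping of the factorials is the routine part: the $1/N!$ on the $V$ side distributes correctly onto $1/n!$ for the tight slots times $\prod 1/|I_i|!$ for the cylinder defects times $1/|I_0|!$ for the central defects, matching the $1/n!$ and $1/p!$ factors produced on the $\tilde F$ side.

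The only subtlety, and the reason for the correction term, is that Proposition~\ref{prop:VHTrelation} provides the first (``stable'') decomposition formula only for $g\ge 1$ or $n\ge 3$; for $(g,n)\in\{(0,0),(0,1),(0,2)\}$ there is no such identity (indeed $T_{0,1,p}=T_{0,2,p}=0$, and the cylinder case is governed by the \emph{second}, different relation of Proposition~\ref{prop:VHTrelation}, which was already used to pin down $H$). Hence when we expand $F_0[\rho+\mu]$ and collect the terms with $0$, $1$, or $2$ boundaries carrying $\rho$, those terms are \emph{not} reproduced by the tight decomposition and must be subtracted by hand; this is exactly $F_{\mathrm{corr}}[\rho,\mu] = \sum_{n=0}^{2}\frac{1}{n!}\sum_{p\ge0}\frac{1}{p!}\int\dd{\rho(L_1)}\cdots\dd{\rho(L_n)}\dd{\mu(L_{n+1})}\cdots\dd{\mu(L_{n+p})}\,V_{0,n+p}(\mathbf{L})$, i.e. the part of $F_0[\rho+\mu]$ that is constant, linear, or quadratic in $\rho$. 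For $g\ge 1$, and for the $n\ge 3$ terms in genus $0$, the matching is exact and no correction is needed, which is what the $\delta_{g,0}$ encodes.

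I expect the main obstacle to be purely organizational rather than conceptual: carefully verifying that the symmetrization over labelled boundaries, the distribution of the multinomial factors $N!/(\,|I_0|!\,\prod_i(1+|I_i|)!\,)$ coming from Proposition~\ref{prop:VHTrelation}, and the exponential generating function resummations over $|I_0|$ (into $T_{g,n}(\,\cdot\,;\mu]$) and over each $|I_i|$ (into the $H_\mu$-dressing via \eqref{eq:htcylindergenfun}) all line up to give clean factors $1/n!$ and $1/p!$ with no leftover constants. A clean way to avoid index gymnastics is to work at the level of the functional-derivative identities: differentiate both sides $n$ times with respect to $\rho$ at $\rho=0$, use \eqref{eq:FTrelation} on the right (with $\nu=0$, i.e. $F_g[\rho+\mu]$'s $\rho$-derivatives equalling $\delta^n F_g[\mu]/\delta\mu(L_1)\cdots$ evaluated appropriately), and recognize the half-tight kernels as the measure-level $H_\mu$, reducing the claim to the identity already recorded in \eqref{eq:FTrelation} together with the bookkeeping of the unstable $(0,\le 2)$ terms.
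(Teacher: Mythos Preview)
Your proposal is correct and follows essentially the same route as the paper's proof: expand $F_g[\rho+\mu]$ by splitting the $N$ boundaries into $n$ $\rho$-boundaries and $p$ $\mu$-boundaries, apply the first identity of Proposition~\ref{prop:VHTrelation} to the resulting $V_{g,n+p}$, resum the $\mu$-integrals over each block $I_i$ into the half-tight cylinder generating function $H(L_i,K_i;\mu]$ (i.e.\ the $H_\mu$-dressing) and over $I_0$ into $T_{g,n}(\mathbf{K};\mu]$, and finally recognise the sum over $n$ as $\tilde F_g[H_\mu\rho,\mu]$; the unstable terms $(g,n)=(0,0),(0,1),(0,2)$ are excised exactly as you say, using $T_{0,n}=0$ for $n\le 2$. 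The only cosmetic difference is that the paper writes the argument ``right-to-left'', first establishing the intermediate identity \eqref{eq:tightdecomp} and then inserting it into the expansion \eqref{eq:Fgexpansion} of $F_g[\rho+\mu]$, whereas you start from the left-hand side; the combinatorics is the same, and your remark that the multinomial $1/N!$ splits as $\tfrac{1}{n!}\prod_i \tfrac{1}{|I_i|!}$ is exactly what the paper uses.
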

\begin{proof}
If $g,n\geq 0$ (such that $n\geq 3$ if $g=0$) and $L_1,\ldots,L_n \in [0,\infty)\cup i (0,\pi)$, then Proposition~\ref{prop:VHTrelation} allows us to compute
\begin{align}
    &\sum_{p=0}^\infty \frac{1}{p!} \int \dd{\mu(L_{n+1})}\cdots\dd{\mu(L_{n+p})} V_{g,n+p}(\mathbf{L}) \nonumber\\
    &= \sum_{p=0}^\infty \frac{1}{p!}\int \dd{\mu(L_{n+1})}\cdots\dd{\mu(L_{n+p})} \mkern-20mu
    \sum_{I_0 \sqcup \cdots \sqcup I_n = \{n+1,\ldots,n+p\}} \int T_{g,n,|I_0|}(\mathbf{K},\mathbf{L}_{I_0})\prod_{\substack{1\leq i\leq n\\I_i \neq \emptyset}} H_{|I_i|}(L_i,K_i,\mathbf{L}_{I_i})\,K_i\rmd K_i\nonumber\\
    &= \sum_{p_0,\ldots,p_n=0}^\infty \frac{1}{p_0!\dots p_n!}\int \dd{\mu(L_{n+1})}\cdots\dd{\mu(L_{n+p_0+\cdots+p_n})} T_{g,n,p_0}(\mathbf{K},\mathbf{L}^{(0)})\prod_{\substack{1\leq i\leq n\\p_i \neq 0}} H_{p_i}(L_i,K_i,\mathbf{L}^{(i)})\,K_i\rmd K_i,
\end{align}
where we use the notation $\mathbf{L}^{(j)} = (L_{n+p_0+\cdots+p_{j-1}+1},\,\ldots\,,L_{n+p_0+\cdots+p_{j}})$.
In terms of the tight Weil--Petersson volume generating function \eqref{eq:tightgenfun} and the half-tight cylinder generating function \eqref{eq:htcylindergenfun} this evaluates to
\begin{align}
    \sum_{p=0}^\infty \frac{1}{p!} \int \dd{\mu(L_{n+1})}\cdots\dd{\mu(L_{n+p})} V_{g,n+p}(\mathbf{L})=\sum_{J \subset \{1,\ldots,n\}} \int T_{g,n}(\mathbf{K};\mu] \prod_{i\in J} K_i\,H(L_i,K_i;\mu]\rmd K_i, \label{eq:tightdecomp}
\end{align}
where it is understood that in the argument of $T_{g,n}(\mathbf{K};\mu]$ we take $K_i = L_i$ for $i\notin J$.

Expanding $F_g[\rho+\mu]$ from its definition \eqref{eq:wppartitionfunction} we find 
\begin{align}
    F_g[\rho+\mu] -F_{\textrm{corr}}[\rho,\mu]\delta_{g,0} &= 
    \sum_{n,p=0}^\infty \ind_{g\geq1 \textrm{ or } n\geq3} \frac{1}{n!p!}\int \rmd\rho(L_1)\cdots\rmd\rho(L_n)\,\rmd\mu(L_{n+1})\cdots\rmd\mu(L_{n+p})V_{g,n+p}(\mathbf{L})\nonumber\\
    &= \sum_{n=0}^\infty \frac{1}{n!} \ind_{g\geq1 \textrm{ or } n\geq3} \int \left(\prod_{i=1}^n\dd{\rho(L_i)}\right)\sum_{p=0}^\infty \frac{1}{p!} \int \left(\prod_{i=n+1}^{n+p}\dd{\mu(L_i)}\right)V_{g,n+p}(\mathbf{L})\label{eq:Fgexpansion}
\end{align}
Plugging in \eqref{eq:tightdecomp} and using that $T_{0,n}(\mathbf{K};\mu]=0$ for $n<3$, yields
\begin{align*}
    F_g[\rho+\mu] -F_{\textrm{corr}}[\rho,\mu]\delta_{g,0} &= \sum_{n=0}^\infty \frac{1}{n!} \int \rmd\rho(L_1)\cdots\rmd\rho(L_n)\sum_{J \subset \{1,\ldots,n\}} \int T_{g,n}(\mathbf{K};\mu] \prod_{i\in J} K_i\,H(L_i,K_i;\mu]\rmd K_i \\
    &= \sum_{n=0}^\infty \frac{1}{n!} \int \rmd(H_\mu\rho)(L_1)\cdots\rmd(H_\mu\rho)(L_n)\, T_{g,n}(\mathbf{L};\mu] \\
    &= \tilde{F}_g[H_\mu \rho , \mu]
\end{align*}
as claimed.
\end{proof}

\noindent
Lemma~\ref{lem:halftight_F_rel} and \eqref{eq:wpintersectionnum} together lead to the relation
\begin{align}
    \sum_{g=0}^\infty \lambda^{2g-2}2^{3-3g}\tilde F_g[H_\mu\rho,\mu]=G(\pi^2;t_0[\rho+\mu],  t_1[\rho+\mu],\ldots)-8\lambda^{-2}F_{\textrm{corr}}.
\end{align}
The right-hand side can be specialized, making use of a variety of identities between intersection numbers.
Firstly, a relation between intersection numbers involving $\kappa_1$ and pure $\psi$-class intersection numbers \cite{Witten_Two_1991,Faber_conjectural_1999} leads to the identity \cite{Kaufmann_Higher_1996}
\begin{align}
    G(s;x_0,x_1,\dots)=G(0;x_0,x_1,x_2+\gamma_2(s),\ldots), 
\end{align}
where the shifts are
\begin{align}
    \gamma_k(s)=\frac{(-1)^k}{(k-1)!}s^{k-1}\ind_{k\geq2}.
\end{align}
For us this gives
\begin{align}
    \sum_{g=0}^\infty \lambda^{2g-2}2^{3-3g}\tilde F_g[H_\mu\rho,\mu]=G(0;\mathbf{t}[\rho+\mu]+\boldsymbol{\gamma}(\pi^2))-8\lambda^{-2}F_{\textrm{corr}},\label{eq:Ftildeshift}
 \end{align}
where we use the notation $G(0;\mathbf{x})=G(0;x_0,x_1,x_2,\dots)$.

This can be further refined using Witten's observation \cite{Witten_Two_1991}, proved by Kontsevich \cite{Kontsevich1992}, that $G(0;\mathbf{x})$ satisfies the \emph{string equation}
\begin{align}
    \left(- \pdv{x_0} + \sum_{p=0}^\infty x_{p+1}\pdv{x_p} + \frac{x_0^2}{2\lambda^2}\right) e^{G(0;\mathbf{x})}=0.    \label{eq:stringequation}
\end{align}
Following a computation of Itzykson and Zuber \cite{Itzykson_Combinatorics_1992}, it implies the following identity.

\begin{lemma}\label{lem:G_renorm_x0} 
The solution to the string equation \eqref{eq:stringequation} satisfies a formal power series identity in the parameter $r$,
\begin{align}
    G(0;x_0,x_1,\ldots) &= G\left(0; 
    r + \sum_{k=0}^\infty \frac{(-r)^k}{k!}x_k,
    \sum_{k=0}^\infty \frac{(-r)^k}{k!}x_{k+1},
    \sum_{k=0}^\infty \frac{(-r)^k}{k!}x_{k+2}, \ldots \right)\breakline
    - \frac{1}{2\lambda^2} \int_0^r \dd{s} \qty(s + \sum_{k=0}^\infty \frac{(-s)^k}{k!} x_k)^2 .
\end{align}
\end{lemma}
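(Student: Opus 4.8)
\textbf{Proof strategy for Lemma~\ref{lem:G_renorm_x0}.}
The plan is to prove the stated identity as an equality of formal power series in the renormalization parameter $r$, following the approach of Itzykson and Zuber \cite{Itzykson_Combinatorics_1992}: differentiate both sides in $r$, show the derivatives agree by means of the string equation \eqref{eq:stringequation}, and verify that the two sides coincide at $r=0$. First I would fix notation by writing, for $j\ge 0$, the shifted time $\hat x_j(r) = r\,\delta_{j,0} + \sum_{k\ge 0}\frac{(-r)^k}{k!}x_{j+k}$, so that the right-hand side of the lemma is $G(0;\hat x_0(r),\hat x_1(r),\ldots) - \frac{1}{2\lambda^2}\int_0^r \qty(\hat x_0(s))^2\,\dd{s}$. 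A short term-by-term computation then gives the two key relations $\frac{\rmd}{\rmd r}\hat x_0(r) = 1 - \hat x_1(r)$ and $\frac{\rmd}{\rmd r}\hat x_j(r) = -\hat x_{j+1}(r)$ for $j\ge 1$, together with $\hat x_j(0) = x_j$ for all $j\ge 0$.

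Next I would check that the substitution $x_j\mapsto\hat x_j(r)$ is a well-defined operation on $G(0;x_0,x_1,\ldots)$: for a fixed target monomial $r^m x_{e_1}\cdots x_{e_k}$, only the finitely many degree-$k$ monomials $x_{d_1}\cdots x_{d_k}$ appearing in a fixed $G_g$ can contribute, since the total $\psi$-degree $d_1+\cdots+d_k = 3g-3+k$ is then fixed, and each of them contributes in only finitely many ways (distributing $m$ powers of $r$). The same finiteness shows that $\pdv{G}{x_j}$ evaluated at $\hat x(r)$ is a well-defined formal power series and that the chain rule applies term by term. Differentiating $\Phi(r) \coloneqq G(0;\hat x_0(r),\hat x_1(r),\ldots)$ and inserting the relations above yields $\frac{\rmd}{\rmd r}\Phi(r) = \pdv{G}{x_0}(0;\hat x(r)) - \sum_{j\ge 0}\hat x_{j+1}(r)\,\pdv{G}{x_j}(0;\hat x(r))$.

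The core input is the string equation. Dividing \eqref{eq:stringequation} by the formal exponential $e^{G(0;\mathbf{x})}$ --- most safely read off order by order in $\lambda$, so that for $g\ge 1$ it becomes $\pdv{G_g}{x_0} = \sum_{p\ge 0}x_{p+1}\pdv{G_g}{x_p}$ while the inhomogeneity $x_0^2/(2\lambda^2)$ enters only at genus $0$ --- gives $\pdv{G}{x_0}(0;\mathbf{x}) - \sum_{p\ge 0}x_{p+1}\pdv{G}{x_p}(0;\mathbf{x}) = \frac{x_0^2}{2\lambda^2}$. Substituting $x_j\mapsto\hat x_j(r)$ into this identity and comparing with the expression for $\frac{\rmd}{\rmd r}\Phi(r)$ above yields $\frac{\rmd}{\rmd r}\Phi(r) = \frac{1}{2\lambda^2}\qty(\hat x_0(r))^2$, which is exactly the $r$-derivative of the correction integral in the lemma. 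Integrating from $0$ to $r$ and using $\hat x_j(0)=x_j$, hence $\Phi(0) = G(0;x_0,x_1,\ldots)$, then produces the claimed identity after rearranging.

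I expect the only genuinely delicate points to be the formal bookkeeping: justifying that the infinite substitution and the termwise chain rule are legitimate as operations on formal power series, and being careful about the meaning of ``$e^G$'' and the $\lambda^{-2}$ term in the string equation, which is cleanest to handle genus by genus as indicated above. The differentiation identities for $\hat x_j(r)$, the comparison of the two derivatives, and the final integration are routine.
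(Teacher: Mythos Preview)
Your proposal is correct and follows essentially the same approach as the paper's proof: both introduce the shifted times $\hat x_j(r)$ (the paper calls them $y_j(s)$), compute their $r$-derivatives, apply the chain rule together with the string equation to obtain $\frac{\rmd}{\rmd r}G(0;\hat{\mathbf{x}}(r)) = \frac{1}{2\lambda^2}\hat x_0(r)^2$, and then integrate from $0$ to $r$. Your additional remarks on the formal well-definedness of the substitution and the genus-by-genus reading of the string equation are welcome elaborations but do not change the argument.
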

\begin{proof}
For $x_0,x_1,\ldots$ fixed, let us consider the sequence of functions
\begin{align}
    \mathbf{y}(s) = (y_0(s),y_1(s),\ldots), \qquad y_i(s)=\delta_{i,0}\,s+\sum_{k=0}^\infty \frac{(-s)^k}{k!}x_{k+i},
\end{align}
such that $y_p'(s) = \delta_{p,0} - y_{p+1}(s)$.
The string equation \eqref{eq:stringequation} then implies 
\begin{align}
    \dv{s} G\left(0;\mathbf{y}(s)\right) &= \frac{\partial G}{\partial x_0} (0;\mathbf{y}(s)) - \sum_{p=0}^\infty y_{p+1}(s)\,\frac{\partial G}{\partial x_p}(\mathbf{y}(s))= \frac{y_0(s)^2}{2\lambda^2}.
\end{align}
Integrating from $s=0$ to $s=r$ gives the claimed identity.
\end{proof}

\noindent
Before we can use this lemma, we establish a relation between $\tau_k[H_\mu\rho,\mu]$ and $t_k[\rho+\mu]$.
\begin{lemma} \label{lem:ttilde_to_tau}
We can rewrite 
\begin{align}\label{eq:ttilde_to_tau}
    t_q[\rho+\mu]+\gamma_q(\pi^2)=\delta_{q,0}\,2R[\mu]+\sum_{p=0}^\infty\frac{(-2R[\mu])^p}{p!}\tau_{p+q}[H_\mu\rho,\mu] ,
\end{align}
where the shifted times $\tau_k[\nu,\mu]$ are defined in \eqref{eq:taushiftedtimes}.
\end{lemma}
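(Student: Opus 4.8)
The plan is to reduce the claimed identity \eqref{eq:ttilde_to_tau} to a purely combinatorial statement about the inversion of the upper-triangular matrix $H_\mu$ acting on times, which was already worked out in \eqref{eq:inversetimesubs}, together with the definition \eqref{eq:taushiftedtimes} of the shifted times $\tau_k$. First I would write out the right-hand side of \eqref{eq:ttilde_to_tau} by substituting $\tau_{p+q}[H_\mu\rho,\mu] = t_{p+q}[H_\mu\rho]+\delta_{p+q,1}-2^{1-(p+q)}M_{p+q-1}[\mu]$. The term $\sum_{p\geq 0}\frac{(-2R)^p}{p!}t_{p+q}[H_\mu\rho]$ is exactly $t_q[\rho]$ by the inversion formula \eqref{eq:inversetimesubs} (note $H_\mu\rho$ in \eqref{eq:inversetimesubs} plays the role of $\nu=H_\mu\rho$, and $t_q[\rho]$ is recovered). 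So the identity to prove collapses to
\begin{align*}
t_q[\mu]+\gamma_q(\pi^2) = \delta_{q,0}\,2R[\mu] + \sum_{p=0}^\infty \frac{(-2R[\mu])^p}{p!}\left(\delta_{p+q,1} - 2^{1-(p+q)}M_{p+q-1}[\mu]\right),
\end{align*}
using $t_q[\rho+\mu]-t_q[\rho]=t_q[\mu]$ (additivity of the times in the weight). The $\delta_{p+q,1}$ contribution is nonzero only for $(p,q)\in\{(1,0),(0,1)\}$, contributing $-2R[\mu]$ when $q=0$ (cancelling the explicit $\delta_{q,0}2R[\mu]$) and $+1$ when $q=1$ (which should account for the $\delta_{q,1}$ hidden in $\gamma_q(\pi^2)$... actually $\gamma_1=0$, so I must be careful: the $+1$ for $q=1$ must be matched on the left, and indeed $\gamma_2(\pi^2)=-\pi^2$, $\gamma_k$ for $k\geq 3$ nonzero too). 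So the real content is the identity, for every $q\geq 0$,
\begin{align*}
t_q[\mu] + \gamma_q(\pi^2) - \delta_{q,1} = -\sum_{p=0}^\infty \frac{(-2R[\mu])^p}{p!}\,2^{1-p-q}M_{p+q-1}[\mu],
\end{align*}
where the $\delta_{q,1}$ has been moved to the left coming from the $(p,q)=(0,1)$ term of $\delta_{p+q,1}$.

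Next I would evaluate the right-hand sum using the explicit expression \eqref{eq:Mexplicit} for $M_k[\mu]$ together with the defining equation $Z(R[\mu];\mu]=0$ from \eqref{eq:Zdef}. Writing $R=R[\mu]$ and substituting $M_{p+q-1}=Z^{(p+q)}(R;\mu]$ (valid for $p+q\geq 1$; the case $p=q=0$ needs separate attention and corresponds to $M_{-1}$, which does not appear since the sum for $q=0$ starts effectively at $p=1$ after the $\delta$ cancellation), the sum $\sum_p \frac{(-2R)^p}{p!}2^{1-p-q}Z^{(p+q)}(R;\mu]$ is, up to the factor $2^{1-q}$, a Taylor-type series $\sum_p \frac{(-R)^p}{p!}Z^{(p+q)}(R;\mu]$. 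I would recognize this as $2^{1-q}$ times the $q$-th derivative of the function $r\mapsto \sum_p \frac{(-R)^p}{p!}Z^{(p)}(r;\mu]$ evaluated at $r=R$, but more usefully: since $Z(r;\mu]$ is an explicit combination of a Bessel-type function and $\int I_0(L\sqrt{2r})\rmd\mu(L)$, I would use the Taylor expansion of $Z$ around $R$ evaluated at $0$, i.e. $\sum_p \frac{(-R)^p}{p!}Z^{(p)}(R;\mu] = Z(0;\mu]$. Differentiating the relation $Z(r-R) := \sum_p \frac{(r-R)^p}{p!}Z^{(p)}(R)$ — which just says $Z$ equals its own Taylor series — $q$ times and setting $r=0$ gives $\sum_p \frac{(-R)^p}{p!}Z^{(p+q)}(R;\mu] = Z^{(q)}(0;\mu]$. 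So the right-hand side equals $-2^{1-q}Z^{(q)}(0;\mu]$.

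Then the lemma reduces to the identity $t_q[\mu]+\gamma_q(\pi^2)-\delta_{q,1} = -2^{1-q}Z^{(q)}(0;\mu]$, i.e. a statement about the $q$-th Taylor coefficient of $Z(r;\mu]$ at $r=0$. I would compute $Z^{(q)}(0;\mu]$ directly from \eqref{eq:Zdef}: expanding $\frac{\sqrt r}{\sqrt 2\,\pi}J_1(2\pi\sqrt{2r})$ and $I_0(L\sqrt{2r})$ as power series in $r$ (both are entire in $r$), the coefficient of $r^q$ in $\frac{\sqrt r}{\sqrt 2\pi}J_1(2\pi\sqrt{2r})$ produces the $\gamma_q(\pi^2)-\delta_{q,1}$ terms (these are the "pure intersection number/Bessel" constants, with the $-\delta_{q,1}$ absorbing the shift $\tau_1$ carries via $\delta_{k,1}$ and $\gamma_2(\pi^2)=-\pi^2$ giving the correct sign pattern), while the coefficient of $r^q$ in $-\int I_0(L\sqrt{2r})\rmd\mu(L)$ is $-\int \frac{(2r)^q L^{2q}}{4^q (q!)^2}\big|_{\text{coeff}}\rmd\mu(L)\cdot(\text{normalization}) = -\frac{2^q}{(q!)^2\,4^q}\cdot\frac{(q!)^2}{\cdots}$... precisely $-\frac{1}{2^{q-1}}\cdot\frac{2^{\,}}{4^q q!}\mu(L^{2q})$, which is $-2^{1-q}\cdot\frac12 t_q[\mu]$ — matching the $t_q[\mu]$ term after the factor $-2^{1-q}$. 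The main obstacle, and the step requiring the most care, is bookkeeping the numerical constants and the shift terms: getting the powers of $2$, the factorials in the Bessel series $J_1(2\pi\sqrt{2r}) = \sum_m \frac{(-1)^m (\pi\sqrt{2r})^{2m+1}}{m!(m+1)!}$, and the definition \eqref{eq:times} of $t_k$ all to line up so that the $\gamma_q(\pi^2)$ shifts and the $\delta_{q,1}$ from \eqref{eq:taushiftedtimes} emerge exactly. Everything else is formal manipulation of triangular matrices and Taylor series, which is routine once the normalizations are pinned down.
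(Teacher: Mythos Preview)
Your proposal is correct and follows essentially the same route as the paper: both reduce the claim, via the inversion formula \eqref{eq:inversetimesubs} and the decomposition of $\tau_{p+q}$, to the identity
\[
t_q[\mu]+\gamma_q(\pi^2)=\delta_{q,1}-\sum_{p\ge 0}\frac{(-2R)^p}{p!}\,2^{1-p-q}M_{p+q-1}[\mu],
\]
and then verify this by relating the Taylor coefficients of $Z(\,\cdot\,;\mu]$ at $0$ and at $R$ (your Taylor argument $\sum_p\frac{(-R)^p}{p!}Z^{(p+q)}(R)=Z^{(q)}(0)$ is just the inverse of the paper's step of differentiating $Z(u;\mu]=u-\tfrac12\sum_k\frac{(2u)^k}{k!}(t_k[\mu]+\gamma_k(\pi^2))$ and evaluating at $u=R$). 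Your worry about $M_{-1}$ is harmless once you note $M_{-1}=Z(R;\mu]=0$, and your hesitation about the $\delta_{q,1}$ is resolved exactly as you wrote: it comes from the $(p,q)=(0,1)$ term of $\delta_{p+q,1}$ and is absorbed on the left, matching the linear-in-$u$ term of $Z$.
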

\begin{proof}
We first relate the moments $M_i[\mu]$ defined in \eqref{eq:momentdef} to the times $t_i[\mu]$. 
Note that $Z(u;\mu]$ defined in \eqref{eq:Zdef} can be expressed in the times as
\begin{align}
    Z(u;\mu]&=u-\sum_{k=0}^\infty \frac{(2u)^k}{2k!}(t_k[\mu]+\gamma_k(\pi^2))
\end{align}
By taking $p$ derivatives with respect to $u$, we get
\begin{align}
    \sum_{k=0}^\infty \frac{(2R[\mu])^k}{k!}(t_{k+p}[\mu]+\gamma_{k+p}(\pi^2))&=
    \begin{dcases}
    2R[\mu] &\qq{if} p=0\\
    1-M_0[\mu] &\qq{if} p=1\\
    -2^{1-p}M_{p-1}[\mu] &\qq{if} p\geq 2
    \end{dcases}.
\end{align}
Just like before in obtaining \eqref{eq:inversetimesubs}, this can be inverted to
\begin{align}
    t_q[\mu]+\gamma_q(\pi^2)&=\delta_{q,1}-\sum_{p=0}^\infty\frac{(-2R[\mu])^p}{p!}2^{1-p-q}M_{p+q-1}[\mu]
\end{align}
The right-hand side of \eqref{eq:ttilde_to_tau} can thus be expressed as
\begin{align}
    \delta_{q,0}\,2R[\mu]+\sum_{p=0}^\infty\frac{(-2R[\mu])^p}{p!}\tau_{p+q}[H_\mu\rho,\mu]&= \delta_{q,1}+\sum_{p=0}^\infty\frac{(-2R[\mu])^p}{p!}t_{p+q}[H_\mu\rho] -\sum_{p=0}^\infty\frac{(-2R[\mu])^p}{p!}2^{1-p-q}M_{p+q-1}[\mu] \nonumber\\
&=t_q[\mu]+\gamma_q(\pi^2)+\sum_{p=0}^\infty \frac{(-2R[\mu])^p}{p!} t_{p+q}[H_\mu\rho].
\end{align}
From \eqref{eq:inversetimesubs} the last term is just $t_q[\rho]$, so we have reproduced the left-hand side of \eqref{eq:ttilde_to_tau}, since $t_q[\rho+\mu] = t_q[\rho]+t_q[\mu]$.
\end{proof}

\noindent
The last two lemmas allow us to express \eqref{eq:Ftildeshift} as
\begin{align}
    \sum_{g=0}^\infty \lambda^{2g-2}2^{3-3g}\tilde F_g[H_\mu\rho,\mu]
    &=G(0;\boldsymbol{\tau}[H_\mu\rho,\mu])\nonumber\\
    &\quad+\frac{1}{2\lambda^2} \int_0^{2R[\mu]} \dd{s} \qty(s + \sum_{k=0}^\infty \frac{(-s)^k}{k!} \tau_k[H_\mu\rho,\mu])^2 -\frac{8}{\lambda^{2}}F_{\textrm{corr}}[\rho,\mu].
\end{align}
To finish the proof of Proposition~\ref{prop:GB_link_to_tildeT} we thus only need to check that the last two terms cancel.

\begin{lemma}\label{lem:Q_analysis} We have
\begin{align}
    F_{\textrm{corr}}[\rho,\mu]= \frac{1}{16}\int_0^{2R[\mu]} \dd{s} \qty(s + \sum_{k=0}^\infty \frac{(-s)^k}{k!} \tau_k[H_\mu\rho,\mu])^2.
\end{align}
\end{lemma}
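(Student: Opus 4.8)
The plan is to compute $F_{\textrm{corr}}[\rho,\mu]$ explicitly in terms of $R[\mu]$ and the substituted weight, and to recognize the answer as the claimed integral. First I would unpack $F_{\textrm{corr}}[\rho,\mu]$ into its three pieces $n=0,1,2$ using $V_{0,p}$ for $p\geq 3$, together with the known generating functions from \eqref{eq:F0expr}--\eqref{eq:cylinderfunction}: the $n=0$ term is the part of $F_0[\mu]$ that survives, the $n=1$ term is $\int \rmd\rho(L_1)\,\fdv{F_0[\mu]}{\mu(L_1)}$, and the $n=2$ term is $\tfrac12\int\rmd\rho(L_1)\rmd\rho(L_2)\,\frac{\delta^2 F_0[\mu]}{\delta\mu(L_1)\delta\mu(L_2)}$, each evaluated with the arguments restricted to those few marked points but with all the remaining ones carrying weight $\mu$. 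Using \eqref{eq:F0expr}, \eqref{eq:diskfunction} and \eqref{eq:cylinderfunction} these become, respectively, $\tfrac12\int_0^R Z(r;\mu]^2\rmd r$, $-\int\rmd\rho(L_1)\int_0^R I_0(L_1\sqrt{2r})Z(r;\mu]\rmd r$, and $\tfrac12\int\rmd\rho(L_1)\rmd\rho(L_2)\int_0^R I_0(L_1\sqrt{2r})I_0(L_2\sqrt{2r})\rmd r$. Collecting,
\begin{align*}
F_{\textrm{corr}}[\rho,\mu] = \frac12\int_0^{R[\mu]}\rmd r\,\left(Z(r;\mu] - \int\rmd\rho(L)\,I_0(L\sqrt{2r})\right)^{2}.
\end{align*}

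Next I would change variables $r = s/2$, so that $F_{\textrm{corr}} = \tfrac14\int_0^{2R[\mu]}\rmd s\,\big(Z(s/2;\mu] - \int\rmd\rho(L)\,I_0(L\sqrt{s})\big)^2$, and the target is to show that the integrand equals $\tfrac14\big(s + \sum_{k\geq 0}\tfrac{(-s)^k}{k!}\tau_k[H_\mu\rho,\mu]\big)^2$, i.e. that
\begin{align*}
s + \sum_{k=0}^\infty \frac{(-s)^k}{k!}\,\tau_k[H_\mu\rho,\mu] = 2\left(Z(s/2;\mu] - \int\rmd\rho(L)\,I_0(L\sqrt{s})\right).
\end{align*}
To verify this I would expand both sides in powers of $s$. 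On the right, using the power series of $Z$ from the proof of Lemma~\ref{lem:ttilde_to_tau}, namely $Z(u;\mu] = u - \sum_{k\geq 0}\tfrac{(2u)^k}{2\,k!}(t_k[\mu]+\gamma_k(\pi^2))$ with $u=s/2$, together with $I_0(L\sqrt s) = \sum_k \tfrac{s^k L^{2k}}{4^k (k!)^2}$ and the definition \eqref{eq:times} of $t_k[\rho]$, the right-hand side becomes $s - \sum_{k\geq 0}\tfrac{s^k}{k!}(t_k[\mu]+\gamma_k(\pi^2)) - \sum_{k\geq 0}\tfrac{s^k}{k!}t_k[\rho] = s - \sum_{k\geq 0}\tfrac{s^k}{k!}(t_k[\rho+\mu]+\gamma_k(\pi^2))$. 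So the identity to check is exactly $s + \sum_k \tfrac{(-s)^k}{k!}\tau_k[H_\mu\rho,\mu] = s - \sum_k\tfrac{(-s)^k}{k!}(t_k[\rho+\mu]+\gamma_k(\pi^2))$ after absorbing the sign $(-1)^k$ by relabeling; that is precisely the content of Lemma~\ref{lem:ttilde_to_tau} with $q=0$ evaluated at $s = 2R[\mu]$ — and, since it holds as a formal power series identity in the free parameter (the argument of the $\delta_{q,0}$-term was an arbitrary $2R$), it holds with $2R$ replaced by the running variable $s$. Concretely, Lemma~\ref{lem:ttilde_to_tau} gives $t_0[\rho+\mu] + \gamma_0(\pi^2) = 2R[\mu] + \sum_p \tfrac{(-2R[\mu])^p}{p!}\tau_p[H_\mu\rho,\mu]$, and the same manipulation with $2R[\mu]$ kept symbolic yields the displayed identity for general $s$.

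The only genuine subtlety — and the step I expect to be the main obstacle — is bookkeeping the $\gamma_k(\pi^2)$ shifts and the constant/linear/quadratic truncation cleanly: one must be sure that the $n=0,1,2$ truncation in $F_{\textrm{corr}}$ corresponds exactly to ``the part of $G(0;\cdot)$ that is constant, linear or quadratic in $\rho$'', which is what makes the generating-function substitutions above legitimate term by term, and that no $\gamma$-shift is double-counted when passing between $t_k[\rho+\mu]$ and the $\tau_k[H_\mu\rho,\mu]$. Once the square is identified, the final step is the trivial substitution $s = 2r$ to match normalizations, giving the factor $\tfrac14 \cdot 2 \cdot \tfrac12 = \tfrac14$... which must be reconciled with the stated $\tfrac{1}{16}$; I would track this by noting that $\tau_k[H_\mu\rho,\mu]$ carries an extra factor relative to $t_k$, so that the bracketed expression in the statement is in fact $\tfrac12\big(s - \sum_k\tfrac{s^k}{k!}(t_k[\rho+\mu]+\gamma_k)\big)$, not its double, producing the $\tfrac{1}{16} = \tfrac14\cdot\tfrac14$. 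Verifying this normalization carefully against the conventions in \eqref{eq:times} and \eqref{eq:taushiftedtimes} is where I would be most careful, but it is a routine check rather than a conceptual difficulty.
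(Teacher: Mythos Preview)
Your first half is fine and matches the paper: using \eqref{eq:F0expr}--\eqref{eq:cylinderfunction} you correctly reach
\[
F_{\textrm{corr}}[\rho,\mu]=\tfrac12\int_0^{R[\mu]}\rmd r\,\Big(Z(r;\mu]-\int\rmd\rho(L)\,I_0(L\sqrt{2r})\Big)^2 .
\]
The gap is in the second half. Your change of variables $r=s/2$ is the wrong one, and the pointwise identity you then try to verify,
\[
s+\sum_{k\geq 0}\frac{(-s)^k}{k!}\,\tau_k[H_\mu\rho,\mu]\;\stackrel{?}{=}\;2\Big(Z(s/2;\mu]-\int\rmd\rho(L)\,I_0(L\sqrt{s})\Big),
\]
is simply false. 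Expand the left side using $\tau_k=t_k[H_\mu\rho]+\delta_{k,1}-2^{1-k}M_{k-1}[\mu]$ and $M_{k-1}=Z^{(k)}(R[\mu];\mu]$: the $M$-part is a Taylor series of $Z$ \emph{centered at $R[\mu]$}, so it sums to $-2Z(R[\mu]-s/2;\mu]$, not $-2Z(s/2;\mu]$. Likewise the $t_k[H_\mu\rho]$-part, via the shift formula preceding \eqref{eq:inversetimesubs}, sums to $2\int\rmd\rho(L)\,I_0(L\sqrt{2R[\mu]-s})$. Hence the correct identity is
\[
s+\sum_{k\geq 0}\frac{(-s)^k}{k!}\,\tau_k[H_\mu\rho,\mu]\;=\;-2\Big(Z(R[\mu]-s/2;\mu]-\int\rmd\rho(L)\,I_0(L\sqrt{2R[\mu]-s})\Big),
\]
which is why the paper uses the substitution $r=R[\mu]-s/2$. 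Your appeal to Lemma~\ref{lem:ttilde_to_tau} ``with $2R[\mu]$ kept symbolic'' is exactly where this goes wrong: that lemma is not a formal identity in a free parameter, because the objects $\tau_k[H_\mu\rho,\mu]$, $M_{k-1}[\mu]$ and $H_\mu$ are all built from the \emph{specific} value $R[\mu]$.

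With the correct substitution the normalization also falls out without any ad hoc fix: squaring the factor $2$ gives $4$, and the Jacobian $\rmd s=2\,\rmd r$ turns $\tfrac{1}{16}\int_0^{2R}\rmd s$ into $\tfrac18\int_0^{R}\rmd r$, so one lands directly on $\tfrac12\int_0^{R}\rmd r(\ldots)^2=F_{\textrm{corr}}$. Your worry about reconciling $\tfrac14$ with $\tfrac{1}{16}$ was a symptom of the wrong change of variables, not a separate bookkeeping issue.
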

\begin{proof}
    Let us denote the right-hand side by $G_{\textrm{corr}}$.
    By the definition \eqref{eq:taushiftedtimes},
\begin{align}
    G_{\textrm{corr}}=\frac{1}{16}\int_0^{2R[\mu]} \dd{s} \qty(\sum_{k=0}^\infty \frac{(-s)^k}{k!} (t_k[H_\mu\rho]-2^{1-k}M_{k-1}[\mu]))^2
\end{align}
Changing integration variables to $r=R[\mu]-s/2$ gives
\begin{align}
    G_{\textrm{corr}}&=\frac{1}{8} \int_0^{R[\mu]} \dd{r} \left(\sum_{k=0}^\infty \frac{(2r-2R[\mu])^k}{k!} (t_k[H_\mu\rho]-2^{1-k}M_{k-1}[\mu])\right)^2\nonumber\\
    &=\frac{1}{8} \int_0^{R[\mu]} \dd{r} \left(\sum_{k=0}^\infty \frac{(2r-2R[\mu])^k}{k!} t_k[H_\mu\rho] -2Z(r;\mu]\right)^2\nonumber\\
    &=\frac{1}{8} \int_0^{R[\mu]} \dd{r} \left(\sum_{k=0}^\infty \frac{(2r)^k}{k!} t_k[\rho] -2Z(r)\right)^2,
\end{align}
where in the second equality we use the series expansion of $Z(r)=Z(r;\mu]$ around $r=R[\mu]$ (recall from \eqref{eq:Mexplicit} that $M_k[\mu] = Z^{(k+1)}(R[\mu];\mu]$), and in the third equality we expanded $(2r-2R[\mu])^k$ as a polynomial in $r$ and made use of \eqref{eq:inversetimesubs}.

In terms of the weight $\rho$ this can be written as
\begin{align}
    G_{\textrm{corr}} &= \frac{1}{2} \int_0^{R[\mu]} \dd{r} \left(Z(r)-\int \dd{\rho(L)} I_0(L\sqrt{2r}) \right)^2.
\end{align}
In the constant, linear and quadratic term in $\rho$ we then recognize exactly the expressions \eqref{eq:F0expr}, \eqref{eq:diskfunction} and \eqref{eq:cylinderfunction},
\begin{align}
    G_{\textrm{corr}} &= F_0[\mu] + \int \rmd\rho(L_1) \frac{\delta F_0[\mu]}{\delta\mu(L_1)}+ \frac{1}{2}\int \rmd\rho(L_1)\rmd\rho(L_2) \frac{\delta F_0[\mu]}{\delta\mu(L_1)\delta\mu(L_2)}\nonumber\\
    &= F_{\textrm{corr}}[\rho,\mu].
\end{align}
\end{proof}

\subsection{Properties of the new kernel}

Recall that the new kernel is given by
\begin{align}
    K(x,t,\mu]&= \int_{-\infty}^\infty K_0(x+z,t) \dd{X(z)},\qquad
K_0(x,t)=\frac{1}{1+\exp(\frac{x+t}{2})}+\frac{1}{1+\exp(\frac{x-t}{2})},
\end{align}
where $X(z)=X(z;\mu]$ is determined by its two-sided Laplace transform $\hat X(u;\mu]$,
\begin{align}
\hat X(u;\mu]=\int_{-\infty}^\infty\dd{X(z)} e^{-uz} = \frac{ \sin(2\pi u)}{2\pi u \,\eta(u;\mu]}
\end{align}
and 
\begin{align}
    \eta(u;\mu]=\sum_{m=0}^\infty \frac{M_m[\mu]}{(2m+1)!!}u^{2m}.
\end{align}
To prove Theorem~\ref{the:tight_top_recursion}, we need to relate  $K(x,t;\mu]$ to the moments $M_k[\mu]$, since they appear in the shifted times.
We define the \emph{reverse moments} $\beta_m[\mu]$ as the coefficients of the reciprocal series
\begin{align} \label{eq:beta_def}
\frac{1}{\eta(u;\mu]} =\sum_{m=0}^\infty \beta_m[\mu] u^{2m}.   
\end{align}
Multiplying both series shows that the moments and reverse moments obey
\begin{align}\label{eq:betaMrel}
&\sum_{m=0}^p \frac{M_m[\mu]}{(2m+1)!!} \beta_{p-m}[\mu]=\delta_{p,0}
\end{align}
for each $p\geq 0$.
Note in particular that
\begin{align}
    \beta_0[\mu] = \frac{1}{M_0[\mu]}, \qquad \beta_1[\mu] = - \frac{M_1[\mu]}{3M_0[\mu]^2}.\label{eq:betaexamples}
\end{align}

\begin{proposition}
\label{prop:tildeK}
For $i,j\geq1$, the new kernel satisfies
\begin{align} 
    \int_0^\infty \dd{x}\, \frac{x^{2i-1}}{(2i-1)!}\,K(x,t;\mu]=\sum_{m=0}^{i} \beta_m[\mu]\,\frac{t^{2i-2m}}{(2i-2m)!}  
\end{align}
and
\begin{align}
    \int_0^\infty \dd{x}\int_0^\infty \dd{y}\, \frac{x^{2i-1}y^{2j-1}}{(2i-1)!(2j-1)!}\, K(x+y,t;\mu]=\sum_{m=0}^{i+j} \beta_m[\mu]\,\frac{t^{2i+2j-2m}}{(2i+2j-2m)!}.
\end{align}
\end{proposition}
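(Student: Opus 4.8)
The plan is to reduce both identities to a single known Laplace-transform computation for Mirzakhani's original kernel $K_0$ and then convolve with the measure $X(z;\mu]$. The key auxiliary fact is the classical identity, implicit in Mirzakhani's and Mulase--Safnuk's work, that for $i\geq 1$
\begin{align*}
\int_0^\infty \dd{x}\,\frac{x^{2i-1}}{(2i-1)!}\,K_0(x,t) = \frac{t^{2i}}{(2i)!} + \sum_{m\geq 1} \frac{(2^{2m}-2)}{(2m)!}\,\zeta(2m)\,\frac{t^{2i-2m}}{(2i-2m)!},
\end{align*}
together with the two-variable version obtained by first doing the $x$-integral against $K_0(x+y,t)$ and then the $y$-integral: since $\int_0^\infty \dd{x}\,\frac{x^{2i-1}}{(2i-1)!}\frac{y^{2j-1}}{(2j-1)!}K_0(x+y,t)$ reduces, via the substitution $x+y = x'$, to $\binom{2i+2j-1}{2i-1}^{-1}$-type Beta-function combinatorics, one lands on the same shape with $i$ replaced by $i+j$. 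First I would isolate this purely $K_0$-level statement and record it as: the polynomial
\[
p_i(t) := \int_0^\infty \dd{x}\,\frac{x^{2i-1}}{(2i-1)!}\,K_0(x,t)
\]
satisfies $p_i(t) = \sum_{m=0}^i c_m \frac{t^{2i-2m}}{(2i-2m)!}$ for universal constants $c_m$ with generating series $\sum_m c_m u^{2m} = \frac{2\pi u}{\sin 2\pi u}$, and the two-variable integral equals $p_{i+j}(t)$ with the same $c_m$. These $c_m$ are exactly the coefficients of $1/\hat X(u;0] \cdot \frac{\sin 2\pi u}{2\pi u}$, i.e. the reverse moments $\beta_m[0]$.

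Next I would feed in the convolution. By definition $K(x,t;\mu] = \int_{-\infty}^\infty \rmd X(z)\,K_0(x+z,t)$, so
\[
\int_0^\infty \dd{x}\,\frac{x^{2i-1}}{(2i-1)!}\,K(x,t;\mu] = \int_{-\infty}^\infty \rmd X(z)\,\int_0^\infty \dd{x}\,\frac{x^{2i-1}}{(2i-1)!}\,K_0(x+z,t).
\]
The inner integral is again a polynomial in $z$ and $t$; shifting $x\mapsto x-z$ shows it equals $p_i$ evaluated with a shifted argument, which I would expand and then integrate term by term against $\rmd X(z)$ using the moment identity $\int \rmd X(z)\,z^p = (-1)^p p!\,[u^p]\hat X(u;\mu]$ stated in the paper. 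Collecting the $t$-powers, the coefficient of $t^{2i-2m}/(2i-2m)!$ becomes a finite sum of products $c_{m'} \cdot [u^{\bullet}]\hat X(u;\mu]$ which, by the definitions $\hat X = \frac{\sin 2\pi u}{2\pi u\,\eta}$ and $\sum c_{m'}u^{2m'} = \frac{2\pi u}{\sin 2\pi u}$, telescopes to exactly $[u^{2m}]\frac{1}{\eta(u;\mu]} = \beta_m[\mu]$ via \eqref{eq:beta_def}. The two-variable statement follows identically, since the $K_0$-level input there is $p_{i+j}$ rather than a product.

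The main obstacle I anticipate is bookkeeping rather than conceptual: one must carefully handle the ``negative-power'' tail when $z<0$ and $x+z$ can be negative, i.e. make sense of $\int_0^\infty K_0(x+z,t)\,\dd{x}$ as a function of $z$ over all of $\R$ and verify it is genuinely polynomial in $z$ (not merely on $z\geq 0$), so that integrating against the formal measure $X(z;\mu]$ — which we only know through its moments — is legitimate. I would address this by noting $\partial_z \int_0^\infty K_0(x+z,t)\,\dd{x} = -K_0(z,t)\to$ needs regularization; more robustly, I would work entirely at the level of formal power series in $\mu$, expanding $K(x,t;\mu] = \sum_p (-1)^p \partial_x^p K_0(x,t)\,[u^p]\hat X(u;\mu]$ as the paper itself suggests, so that each coefficient integral $\int_0^\infty \frac{x^{2i-1}}{(2i-1)!}\partial_x^p K_0(x,t)\,\dd{x}$ is a convergent, classical computation (integration by parts reducing it to lower $p_i$'s), and only afterward resum in $u$. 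This sidesteps any measure-theoretic subtlety and reduces everything to the single generating-series identity $\sum_m c_m u^{2m} = 2\pi u/\sin 2\pi u$ combined with $\hat X \cdot \eta \cdot \frac{2\pi u}{\sin 2\pi u} = 1$.
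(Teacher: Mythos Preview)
Your strategy --- compute the odd $K_0$--moments, convolve with $X$, and use $\hat X\cdot\eta\cdot\tfrac{2\pi u}{\sin 2\pi u}=1$ to telescope to $\beta_m$ --- is sound and is in spirit exactly what the paper does. The paper packages the convolution step via one-sided Laplace transforms: it computes the odd part $\hat K_0(u,t)-\hat K_0(-u,t)=-\tfrac{4\pi\cosh(tu)}{\sin 2\pi u}+\tfrac{2}{u}$, then relates $\hat K$ to $\hat K_0$ by an explicit formula, and finally reads off $[u^{2i}]\tfrac{\cosh(tu)}{\eta}$. Your derivative-expansion route arrives at the same place but with slightly different bookkeeping.

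There is, however, a genuine gap in your ``robust'' workaround. Your claim that integration by parts in
\[
(-1)^p\int_0^\infty \frac{x^{2i-1}}{(2i-1)!}\,\partial_x^{\,p} K_0(x,t)\,\dd x
\]
always reduces to a lower $p_{i'}$ is only correct for even $p\le 2i-2$, where you indeed obtain $p_{i-p/2}(t)$. For $p\ge 2i$ the boundary terms at $x=0$ no longer vanish: after $2i-1$ integrations by parts you are left with $\partial_x^{\,p-2i}K_0(0,t)$ (up to sign), which is not of the form $p_{i'}$. If you stop at $p\le 2i-2$ and resum, the constant term comes out as $\beta_i[\mu]-[u^{2i}]\hat X(u;\mu]$ instead of $\beta_i[\mu]$, which is wrong for general $\mu$.

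The fix is precisely the symmetry the paper exploits in its Lemma for $\hat K$: writing $K_0(x,t)=f(x+t)+f(x-t)$ with $f(s)=(1+e^{s/2})^{-1}$, the identity $f(-s)=1-f(s)$ gives $f^{(2q)}(-s)=-f^{(2q)}(s)$ for $q\ge 1$, hence $\partial_x^{\,2q}K_0(0,t)=0$ for $q\ge 1$ while $K_0(0,t)=1$. So only the single boundary contribution at $p=2i$ survives, supplying exactly the missing $[u^{2i}]\hat X$. Once you insert this observation, your telescoping goes through and yields the claimed $\sum_{m=0}^i\beta_m\,t^{2i-2m}/(2i-2m)!$. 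The paper's Laplace-transform organization absorbs this step into the identity $K_0(x,t)+K_0(-x,t)=2$ used in the proof of its relation between $\hat K$ and $\hat K_0$; your version needs it stated separately, but the content is the same.
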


\noindent
We need two lemmas to prove this proposition. First we examine the one-sided Laplace transforms
\begin{align}
\hat K_0(u,t) &\coloneqq \int_0^\infty \dd{x} e^{-ux} K_0(x,t), \label{eq:hatK0} \\
\hat K(u,t;\mu] &\coloneqq \int_0^\infty \dd{x} e^{-ux} K(x,t;\mu].\label{eq:hatK}
\end{align}

\begin{lemma} \label{lem:hatK0}
    \begin{align}
    \hat K_0(u,t)- \hat K_0(-u,t)= \frac{-4\pi \cosh(tu)}{\sin(2\pi u)}+\frac{2}{u}
    \end{align}
\end{lemma}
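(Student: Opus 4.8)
The plan is to compute the one-sided Laplace transform $\hat K_0(u,t)$ directly from the explicit formula for $K_0(x,t)$ and then antisymmetrize in $u$. Writing $K_0(x,t) = (1+e^{(x+t)/2})^{-1} + (1+e^{(x-t)/2})^{-1}$, I first record the elementary integral $\int_0^\infty \dd{x}\, e^{-ux}(1+e^{(x+s)/2})^{-1}$ for a parameter $s \in \R$. Substituting $v = e^{-x/2}$ (so $x = -2\log v$, $\dd{x} = -2\,\dd{v}/v$) turns this into $2\int_0^1 \dd{v}\, v^{2u-1}(1 + e^{s/2} v^{-1})^{-1} = 2 e^{-s/2}\int_0^1 \dd{v}\, v^{2u}(v + e^{s/2})^{-1}$, which is a standard incomplete-Beta / hypergeometric-type integral; more usefully, expanding the geometric series and resumming, or recognizing it as a digamma/Lerch expression, one gets a closed form involving $\psi$-functions or, after the antisymmetrization, elementary trigonometric functions. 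The key simplification is that all the messy digamma pieces are \emph{even} in $u$ and hence cancel in the difference $\hat K_0(u,t) - \hat K_0(-u,t)$.

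Concretely, the cleanest route is: for $\Re u > 0$, expand $(1+e^{(x\pm t)/2})^{-1} = \sum_{n\geq 1} (-1)^{n+1} e^{-n(x\pm t)/2}$ and integrate term by term to obtain $\hat K_0(u,t) = \sum_{n\geq 1}(-1)^{n+1}\big(e^{-nt/2} + e^{nt/2}\big)/(u + n/2) = \sum_{n\geq1}(-1)^{n+1}\,\frac{2\cosh(nt/2)}{2u+n}$. Then
\[
\hat K_0(u,t) - \hat K_0(-u,t) = \sum_{n\geq 1}(-1)^{n+1}\,2\cosh(nt/2)\left(\frac{1}{2u+n} - \frac{1}{-2u+n}\right) = \sum_{n\geq 1}(-1)^{n+1}\,2\cosh(nt/2)\cdot\frac{-4u}{n^2-4u^2}.
\]
This series is a classical one: using $\sum_{n\geq 1}\frac{(-1)^{n+1}2x}{n^2 - x^2} = \frac{1}{x} - \frac{\pi}{\sin\pi x}$ (the Mittag-Leffler expansion of $\pi/\sin$), and more generally the expansion of $\cosh(a x)/\sin(\pi x)$-type kernels, one evaluates the sum in closed form. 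Matching the parameters ($x = 2u$, and the $\cosh(nt/2)$ weight shifting the residues) produces exactly $\frac{-4\pi\cosh(tu)}{\sin(2\pi u)} + \frac{2}{u}$; the isolated $2/u$ term is the $n$-independent piece coming from the $1/x$ in the Mittag-Leffler formula at $x=2u$.

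The main obstacle is the bookkeeping in the last step: one must handle the $\cosh(nt/2)$ weight carefully, since $\sum_{n\geq1}(-1)^{n+1}\cosh(nt/2)\,\frac{-4u}{n^2-4u^2}$ is not literally the bare $\pi/\sin$ series but a $t$-deformation of it. The clean way to see this is to note that both sides, as functions of $t$, satisfy the ODE $\partial_t^2 f = \frac{1}{4}(f - (\text{const in }t))$ with matching values and derivatives at $t=0$ — equivalently, expand $\cosh(nt/2) = \sum_{i\geq 0}(nt/2)^{2i}/(2i)!$, swap sums, and evaluate $\sum_{n\geq1}(-1)^{n+1}n^{2i}\cdot\frac{-4u}{n^2-4u^2}$ against the known generating identity, recognizing the result coefficient-by-coefficient as the Taylor expansion of $-4\pi\cosh(tu)/\sin(2\pi u)$ in $t$. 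Alternatively — and this is probably the shortest write-up — one differentiates the target identity in $t$ to kill the $2/u$ term and reduce to a pure $\sinh/\sin$ identity that is verifiable by standard contour-integration (summing residues of $\pi\cot(\pi s)$ or $\pi/\sin(\pi s)$ against $\cosh(st/2)/(s^2-4u^2)$), then integrates back and fixes the constant of integration by evaluating at $t=0$, where the sum collapses to the Mittag-Leffler value of $\pi/\sin(2\pi u)$.
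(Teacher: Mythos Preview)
Your approach has a genuine convergence gap. The geometric expansion $(1+e^{(x-t)/2})^{-1} = \sum_{n\geq 1}(-1)^{n+1}e^{-n(x-t)/2}$ is only valid when $x>t$; for $0<x<t$ the series diverges. Consequently the ``closed form'' $\hat K_0(u,t) = \sum_{n\geq1}(-1)^{n+1}\,2\cosh(nt/2)/(2u+n)$ is a divergent series for any real $t\neq 0$, since $\cosh(nt/2)$ grows like $\tfrac12 e^{n|t|/2}$. The same problem infects every subsequent manipulation: the antisymmetrized series $\sum_{n\geq1}(-1)^{n+1}\,2\cosh(nt/2)\cdot\frac{-4u}{n^2-4u^2}$ diverges, and so do the moment sums $\sum_{n\geq1}(-1)^{n+1}n^{2i}\,\frac{-4u}{n^2-4u^2}$ for $i\geq 1$ that arise when you expand in $t$.

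The paper's proof circumvents this by splitting the $x$-integral at $x=t$ and using the appropriate (convergent) geometric expansion on each piece. The piece $0<x<t$ is what generates the closed exponential $\tfrac{-2\pi e^{-ut}}{\sin 2\pi u}$ directly via the Mittag--Leffler identity $\sum_{p\in\Z}\frac{(-1)^p}{u-p/2}=\frac{2\pi}{\sin 2\pi u}$, and the remaining series is manifestly even in $u$ and drops out upon antisymmetrizing. Your strategy can be salvaged --- the expansions you wrote down are legitimate for purely imaginary $t$ (where $\cosh(nt/2)$ becomes a bounded cosine), and both sides of the claimed identity are analytic in $t$, so one can verify it on $t\in i\R$ and analytically continue --- but you would need to say this explicitly and check the continuation. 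As written, the argument treats divergent series as if they were convergent.
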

\begin{proof}
    To compute the integral \eqref{eq:hatK0}, we only need positive values for $x$, so we assume $x > 0$. Since $K_0(x,-t)=K_0(x,t)$ we can also assume $t \geq 0$.
    For $x<t$ we may expand
    \begin{align}
    K_0(x,t)&=\frac{\exp(\frac{-x-t}{2})}{1+\exp(\frac{-x-t}{2})}+\frac{1}{1+\exp(\frac{x-t}{2})}\nonumber\\
    &= -\sum_{p=1}^\infty \qty(-e^{\frac{-x-t}{2}})^p + \sum_{p=0}^\infty \qty(-e^{\frac{x-t}{2}})^p,
    \end{align}
    while for $x>t$ we may use
    \begin{align}
    K_0(x,t)&=\frac{\exp(\frac{-x-t}{2})}{1+\exp(\frac{-x-t}{2})}+\frac{\exp(\frac{-x+t}{2})}{1+\exp(\frac{-x+t}{2})}\nonumber\\
    &= -\sum_{p=1}^\infty \qty(-e^{\frac{-x-t}{2}})^p - \sum_{p=1}^\infty \qty(-e^{\frac{-x+t}{2}})^p.
    \end{align}
    This gives
    \begin{align}
    \hat K_0(u,t) &= -\sum_{p=1}^\infty \int_0^\infty \dd{x} e^{-ux}\qty(-e^{\frac{-x-t}{2}})^p + \sum_{p=0}^\infty \int_0^t \dd{x} e^{-ux} \qty(-e^{\frac{x-t}{2}})^p - \sum_{p=1}^\infty \int_t^\infty \dd{x} e^{-ux} \qty(-e^{\frac{-x+t}{2}})^p \nonumber\\
    &= -e^{-ut} \sum_{p=-\infty}^\infty \frac{(-1)^p}{u-p/2} -\sum_{p=1}^\infty (-1)^p \frac{\exp(\frac{-tp}{2})}{u+p/2} + \sum_{p=0}^\infty (-1)^p \frac{\exp(\frac{-tp}{2})}{u-p/2}\nonumber\\
    &= \frac{-2\pi e^{-ut}}{\sin(2\pi u)}+\frac{1}{u} + \sum_{p=0}^\infty (-1)^p e^{\frac{-tp}{2}}\qty(\frac{1}{u-p/2}-\frac{1}{u+p/2}).
    \end{align}
    When subtracting $\hat{K}_0(-u,t)$ it should be clear that the sum cancels and we easily obtain the claimed formula.
\end{proof}

\begin{lemma} \label{lem:hattildeK}
    \begin{align}
    \hat K(u,t;\mu]-\hat K(-u,t;\mu] = \hat X(u;\mu]\qty(\hat K_0(u,t) - \hat K_0(-u,t)-\frac{2}{u}) + \frac{2\hat X(0;\mu]}{u}
    \end{align}
\end{lemma}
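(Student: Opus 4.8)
\emph{Proof strategy.} The plan is to interchange the $x$-integration in \eqref{eq:hatK} with the $X$-integration in the definition of $K(x,t;\mu]$, and then exploit two reflection symmetries: the pointwise identity $K_0(w,t)+K_0(-w,t)=2$, and the symmetry $z\mapsto -z$ of the measure $X$. After the substitution $w=x+z$ one gets
\[
\hat K(u,t;\mu]=\int_{-\infty}^\infty \rmd X(z)\,e^{uz}\int_z^\infty \rmd w\,e^{-uw}K_0(w,t).
\]
Splitting $\int_z^\infty=\int_0^\infty-\int_0^z$, and using that $\eta(u;\mu]$ is even in $u$, so that $\hat X(u;\mu]$ is even and $\int e^{uz}\rmd X(z)=\hat X(-u;\mu]=\hat X(u;\mu]$, one obtains $\hat K(u,t;\mu]=\hat X(u;\mu]\,\hat K_0(u,t)-B(u)$, where $B(u)=\int \rmd X(z)\,e^{uz}\int_0^z e^{-uw}K_0(w,t)\,\rmd w$.

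Next I would use the elementary identity $\tfrac{1}{1+e^{a}}+\tfrac{1}{1+e^{-a}}=1$, which gives $K_0(w,t)+K_0(-w,t)=2$, so that $\kappa(w):=K_0(w,t)-1$ is an odd function of $w$. Substituting $K_0=1+\kappa$ and carrying out $\int_0^z e^{-uw}\,\rmd w=(1-e^{-uz})/u$ splits $B(u)=\tfrac{1}{u}\bigl(\hat X(u;\mu]-\hat X(0;\mu]\bigr)+C(u)$ with $C(u)=\int\rmd X(z)\,e^{uz}\int_0^z e^{-uw}\kappa(w)\,\rmd w$. A short change of variables $w\mapsto -w$ (using oddness of $\kappa$) followed by $z\mapsto -z$ (using the symmetry of $X$, which is equivalent to $\hat X$ being even, i.e.\ its odd moments vanishing) shows that $C(u)=C(-u)$. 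Antisymmetrising therefore gives $B(u)-B(-u)=\tfrac{2}{u}\bigl(\hat X(u;\mu]-\hat X(0;\mu]\bigr)$, and substituting this together with $\hat X(-u;\mu]=\hat X(u;\mu]$ into
\[
\hat K(u,t;\mu]-\hat K(-u,t;\mu]=\hat X(u;\mu]\bigl(\hat K_0(u,t)-\hat K_0(-u,t)\bigr)-\bigl(B(u)-B(-u)\bigr)
\]
yields exactly the claimed formula after rearranging.

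The only delicate point — not really an obstacle — is that the individual integrals $B(u)$ and $C(u)$ do not converge in the classical sense, since $K_0(w,t)\to 2$ as $w\to -\infty$ while $X$ is generally not compactly supported. As with $K(x,t;\mu]$ itself (cf.\ the remarks following Theorem~\ref{the:tight_top_recursion}), all of the above manipulations are to be read coefficientwise in the formal power series in $\mu$: at order $\mu^0$ one has $\hat X(u;0]=1$ and $X=\delta_0$, for which the statement is trivial, and at each higher order the symmetry of $X$ invoked above is precisely the vanishing of its odd moments, i.e.\ the evenness of $\hat X(u;\mu]$. The remaining manipulations are elementary computations with the explicit $K_0$.
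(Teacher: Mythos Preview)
Your proof is correct and follows essentially the same approach as the paper's: both perform the substitution $w=x+z$, split $\int_z^\infty=\int_0^\infty-\int_0^z$, and then exploit the two symmetries $K_0(w,t)+K_0(-w,t)=2$ and the evenness of $\hat X(u;\mu]$ (equivalently, the $z\mapsto -z$ symmetry of $X$) to handle the correction term. The only cosmetic difference is the order of operations: the paper antisymmetrises in $u$ at the outset and then argues that the correction integral is invariant under $K_0(x,t)\to K_0(-x,t)$, whereas you isolate $\hat K(u,t;\mu]$ first, introduce $\kappa=K_0-1$, and antisymmetrise at the end.
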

\begin{proof}
    From the definition \eqref{eq:hatK} we obtain
    \begin{align}
    \hat K(u,t;\mu] -\hat K(-u,t;\mu] &= \int_{-\infty}^\infty \dd{X(z)} \int_0^\infty \dd{x} (e^{-ux}-e^{ux}) K_0(x+z,t)\nonumber\\
	&=\int_{-\infty}^\infty \dd{X(z)} \int_{z}^\infty \dd{x} (e^{(z-x)u}-e^{(x-z)u}) K_0(x,t)\nonumber\\
	&=\int_{-\infty}^\infty \dd{X(z)} (e^{zu} \hat{K}_0(u,t) - e^{-zu} \hat{K}_0(-u,t)) \breakline
	-\int_{-\infty}^\infty \dd{X(z)} \int_0^z \dd{x} (e^{(z-x)u}-e^{(x-z)u}) K_0(x,t).
    \end{align}
    The first integral evaluates to $\hat X(u;\mu](\hat K_0(u,t)-\hat K_0(-u,t))$.
    By changing variables $(x,z)\to(-x,-z)$ and using the symmetry of $\rmd X(z)$, we observe that the second integral is unchanged when $K_0(x,t)$ is replaced by $K_0(-x,t)$.
    Since also $K_0(x,t)+K_0(-x,t)=2$, the second integral can be calculated to give
    \begin{align}
        \int_{-\infty}^\infty \dd{X(z)} \int_0^z \dd{x} (e^{(z-x)u}-e^{(x-z)u}) K_0(x,t)&= \int_{-\infty}^\infty \dd{X(z)} \int_0^z \dd{x} (e^{(z-x)u}-e^{(x-z)u})\nonumber\\
	&=- \frac{2}{u} \int_{-\infty}^\infty \dd{X(z)} (1-e^{uz})\nonumber\\
	&=\frac{2}{u}\hat X(u;\mu]- \frac{2\hat X(0;\mu]}{u}.
    \end{align}
    Subtracting both integrals gives the desired result.
\end{proof}

\begin{proof}[Proof of Proposition \ref{prop:tildeK}]
We start by noting
\begin{align} 
    \int_0^\infty \dd{x}\, \frac{x^{2i-1}}{(2i-1)!}\,K(x,t;\mu]=-\frac{1}{2}[u^{2i-1}]  \left(\hat K(u,t;\mu]-\hat K(-u,t;\mu]\right) 
\end{align}
Using Lemma \ref{lem:hattildeK} and Lemma \ref{lem:hatK0}, we get for $i\geq1$
\begin{align} 
    \int_0^\infty \dd{x}\, \frac{x^{2i-1}}{(2i-1)!}\,K(x,t;\mu]&=-\frac{1}{2}[u^{2i-1}] \left(\hat X(u;\mu]\qty(\frac{-4\pi \cosh(tu)}{\sin(2\pi u)}) + \frac{2\hat X(0;\mu]}{u}\right) \nonumber\\
    &= [u^{2i}] \frac{ \cosh(tu)}{\eta(u;\mu]}\nonumber
    \\&= \sum_{m=0}^i \beta_m[\mu] \frac{t^{2i-2m}}{(2i-2m)!}
\end{align}
The second identity follows easily from the first by performing the integration at constant $x+y$, since 
\begin{align*}
    \int_0^{z} \frac{x^{2i-1}(z-x)^{2j-1}}{(2i-1)!(2j-1)!} \rmd x = \frac{z^{2i+2j-1}}{(2i+2j-1)!}.
\end{align*}
\end{proof}

\subsection{Proof of Theorem~\ref{the:tight_top_recursion}}
We will prove the tight topological recursion by retracing Mirzakhani's proof \cite{Mirzakhani2007a} of Witten's conjecture, which relies on the observation that her recursion formula \eqref{eq:Mirzakhanirecursion}, expressed as an identity on the coefficients of the volume polynomials $V_{g,n}(\mathbf{L})$, is equivalent to certain differential equations for the generating function $G(s;x_0,x_1,\ldots)$ of intersection numbers (see also \cite{Mulase_Mirzakhanis_2008}).
These differential equations can be expressed as the Virasoro constraints \cite{Dijkgraaf_Loop_1991,Witten_Two_1991,Mulase_Mirzakhanis_2008}
\begin{align}
    \tilde V_p e^{G(0;x_0,x_1,\ldots)}=0.
\end{align}
Here the Virasoro operators $\tilde V_{-1},\tilde V_0,\tilde V_1,\tilde V_2,\ldots$ are the differential operators acting on the ring of formal power series in $x_0,x_1,x_2,\ldots$ via
\begin{align}
    \tilde V_p
    &=-\frac{(2p+3)!!}{2^{p+1}} \pdv{x_{p+1}}+\frac{1}{2^{p+1}}\sum_{n=0}^\infty \frac{(2n+2p+1)!!}{(2n-1)!!} x_n \pdv{x_{n+p}}\breakline
    +\frac{\lambda^2}{2^{p+2}}\sum_{i+j=p-1}(2i+1)!!(2j+1)!!\pdv{x_i}\pdv{x_j}\breakline
    +\delta_{p,-1}\qty(\frac{\lambda^{-2}x_0^2}{2})+\frac{\delta_{p,0}}{16}.
\end{align}
They satisfy the Virasoro relations
\begin{align}
&\comm{\tilde V_m}{\tilde V_n}=(m-n)\tilde V_{m+n}.
\end{align}

Proposition~\ref{prop:GB_link_to_tildeT} suggests introducing the shift $x_k \to x_k + \tilde{\gamma}_k$ in $G$ with $\tilde{\gamma}_k = \delta_{k,1} - 2^{1-k}M_{k-1}[\mu]$, which satisfies 
\begin{align}
    \left( \tilde{V}_p + \frac{(2p+3)!!}{2^{p+1}} \pdv{x_{p+1}} - \frac{1}{2^{p+1}}\sum_{n=0}^\infty \frac{2^{-n}M_n[\mu]}{(2n+1)!!}(2p+2n+3)!! \pdv{x_{p+n+1}} \right)e^{G(0;x_0,x_1+\tilde{\gamma}_1,x_2+\tilde{\gamma_2},\ldots)}=0.\label{eq:stringshift}
\end{align}
We use the reverse moments $\beta_m[\mu]$ of \eqref{eq:beta_def} to introduce linear combinations
\begin{align}
    V_p&=\sum_{m=0}^\infty \beta_m[\mu] 2^{p} \left( \tilde{V}_{p+m} + \frac{(2p+2m+3)!!}{2^{p+m+1}} \pdv{x_{p+m+1}} - \frac{1}{2^{p+m+1}}\sum_{n=0}^\infty \frac{2^{-n}M_n[\mu]}{(2n+1)!!}(2p+2m+2n+3)!! \pdv{x_{p+m+n+1}} \right)\label{eq:Vpdef}
\end{align}
of these operators for all $p\geq -1$, which therefore obey
\begin{align}
    V_p \exp\left(G(0;x_0,x_1+\tilde{\gamma}_1,x_2+\tilde{\gamma_2},\ldots)\right)=0.
\end{align}
Using \eqref{eq:betaMrel} the operators $V_p$ can be expressed as 
\begin{align}
V_p&=\sum_{m=0}^\infty \beta_m[\mu] 2^{p} \left( \tilde{V}_{p+m} + \frac{(2p+2m+3)!!}{2^{p+m+1}} \pdv{x_{p+m+1}} - \frac{1}{2^{p+m+1}}\sum_{n=0}^\infty \frac{2^{-n}M_n[\mu]}{(2n+1)!!}(2p+2m+2n+3)!! \pdv{x_{p+m+n+1}} \right)\nonumber\\
&=-\frac{1}{2} (2p+3)!! \pdv{x_{p+1}}
+\frac{\lambda^2}{4}\sum_{m=0}^\infty\sum_{i+j=p+m-1}2^{-m}\beta_m[\mu](2i+1)!!(2j+1)!!\pdv{x_{i}}\pdv{x_{j}}\breakline
+\frac{1}{2}\sum_{n,m=0}^\infty 2^{-m} \beta_m[\mu] \frac{(2n+2p+2m+1)!!}{(2n-1)!!} x_{n} \pdv{x_{n+p+m}}\breakline
+\delta_{p,-1}\qty(\frac{\lambda^{-2}x_0^2\beta_0[\mu]}{4}+\frac{\beta_1[\mu]}{32})+\delta_{p,0}\frac{\beta_0[\mu]}{16}
\end{align}
In particular, after some rearranging (and shifting $p\to p-1$) we observe the identity
\begin{align}
\frac{1}{2} (2p+1)!! \frac{\partial G}{\partial x_p}
&=\frac{\lambda^2}{4}\sum_{m=0}^\infty\sum_{i+j=p+m-2}2^{-m}\beta_m[\mu](2i+1)!!(2j+1)!!\left(\frac{\partial^2 G}{\partial x_i\partial x_j}+\frac{\partial G}{\partial x_i}\frac{\partial G}{\partial x_j}\right)\breakline
+\frac{1}{2}\sum_{n,m=0}^\infty 2^{-m} \beta_m[\mu] \frac{(2n+2p+2m-1)!!}{(2n-1)!!} x_{n} \frac{\partial G}{\partial x_{n+p+m-1}}\breakline
+\delta_{p,0}\qty(\frac{\lambda^{-2}x_0^2\beta_0[\mu]}{4}+\frac{\beta_1[\mu]}{32})+\delta_{p,1}\frac{\beta_0[\mu]}{16},\label{eq:Gdiffeq}
\end{align}
where $G$ is understood to be evaluated at $G=G(0;x_0,x_1+\tilde{\gamma}_1,x_2+\tilde{\gamma_2},\ldots)$.

Substituting $x_k = t_k[\nu]$ such that $x_k + \tilde{\gamma}_k = \tau_k[\nu,\mu]$, Proposition~\ref{prop:GB_link_to_tildeT} links $G$ to the generating function 
\begin{align}
 G(0;\tau_0[\nu,\mu],\tau_1[\nu,\mu],\ldots)&=\sum_{g=0}^\infty \lambda^{2g-2}2^{3-3g}\tilde{F}_g[\nu,\mu]\nonumber\\
 &=\sum_{g=0}^\infty \lambda^{2g-2}2^{3-3g}\sum_{n=0}^\infty\frac{1}{n!}\int \rmd\nu(L_1)\cdots\rmd\nu(L_n) T_{g,n}(\mathbf{L};\mu]\label{eq:fulltightgenfun}
\end{align}
of tight Weil--Petersson volumes.
The differential equations \eqref{eq:Gdiffeq} can then be reformulated as the functional differential equation (here all partial derivates of $G$ are evaluated at $0, \tau_0[\nu,\mu],\tau_1[\nu,\mu],\ldots$)
\begin{align*}
    &\frac{\delta}{\delta \nu(L_1)}G(0;\tau_0[\nu,\mu],\tau_1[\nu,\mu],\ldots)\\
    &\quad=    \sum_{p=0}^\infty \frac{2L_1^{2p}}{4^p p!}\frac{\partial G}{\partial x_p} = \sum_{p=0}^\infty \frac{1}{L_1}\int_0^{L_1} \rmd t \frac{t^{2p}}{(2p)!}2^{1-p}(2p+1)!!\frac{\partial G}{\partial x_p}\\
    &\quad=
        \sum_{p=0}^{\infty}\frac{\lambda^2}{L_1} \int_0^{L_1}\dd{t}\,\frac{t^{2p}}{(2p)!}\sum_{m=0}^\infty\sum_{i+j=p+m-2}2^{-i-j-2}\beta_m[\mu](2i+1)!!(2j+1)!! \left(\frac{\partial^2 G}{\partial x_i\partial x_j}+\frac{\partial G}{\partial x_i}\frac{\partial G}{\partial x_j}\right)\\
    &\quad\quad+\sum_{p=0}^{\infty}\frac{1}{L_1} \int_0^{L_1}\dd{t}\,\frac{t^{2p}}{(2p)!}\int\dd{\nu(P)}\,\sum_{n,m=0}^\infty 2^{-m-p-n+2}\beta_m[\mu]
        \frac{(2n+2p+2m-1)!!}{(2n)!} P^{2n} \frac{\partial G}{\partial x_{n+p+m-1}}
    \\
    &\quad\quad+\lambda^{-2}t_0^2[\nu]\beta_0[\mu]+\frac{\beta_1[\mu]}{8}+\frac{\beta_0[\mu]L_1^2}{48}
\end{align*}
Inserting the integral identities of Proposition~\ref{prop:tildeK} this can also be expressed in terms of the kernel $K(x,t;\mu]$ as
\begin{align*}
    &\frac{\delta}{\delta \nu(L_1)}G(0;\tau_0[\nu,\mu],\tau_1[\nu,\mu],\ldots)\\
    &\quad = \frac{\lambda^2 }{4L_1} \int_0^{L_1}\dd{t}\int_0^\infty\dd{x}\int_0^\infty\dd{y}\sum_{i,j=0}^\infty\, K(x+y,t;\mu]\frac{x^{2i+1}y^{2j+1}}{4^i i!4^j j!} \left(\frac{\partial^2 G}{\partial x_i\partial x_j}+\frac{\partial G}{\partial x_i}\frac{\partial G}{\partial x_j}\right)\\
    &\qquad + \frac{1}{L_1}\int_0^{L_1}\dd{t}\int_0^\infty\dd{x} \int\dd{\nu(P)}\,\sum_{q=0}^\infty\, \left( K(x,t+P;\mu] +  K(x,t-P;\mu]\right)   \frac{x^{2q+1}}{4^q q!}       \frac{\partial G}{\partial x_q}
    \\
    &\qquad +\lambda^{-2}t_0^2[\nu]\beta_0[\mu]+\frac{\beta_1[\mu]}{8}+\frac{\beta_0[\mu]L_1^2}{48}\\
    &= \frac{\lambda^2 }{16L_1} \int_0^{L_1}\dd{t}\int_0^\infty\dd{x}\int_0^\infty\dd{y}\,xy K(x+y,t;\mu] \qty(\frac{\delta^2 G}{\delta\nu(x)\delta\nu(y)}+\frac{\delta G}{\delta\nu(x)}\frac{\delta G}{\delta\nu(y)})\breakline
    +\frac{1}{2L_1}\int_0^{L_1}\dd{t}\int_0^\infty\dd{x} \int\dd{\nu(P)}\,x \left( K(x,t+P;\mu] +  K(x,t-P;\mu]\right)   \frac{\delta G}{\delta\nu(x)}\breakline
    +\fdv{\nu(L_1)}\qty(\frac{\lambda^{-2}t_0^3[\nu]}{6M_0[\mu]}-\frac{M_1[\mu]t_0[\nu]}{48M_0[\mu]^2}+\frac{t_1[\nu]}{24M_0[\mu]}),
    \end{align*}
where in the last line we used \eqref{eq:betaexamples}.
This equation at the level of the generating function \eqref{eq:fulltightgenfun} is precisely equivalent to the recursion equation on its polynomial coefficients
\begin{align}
    T_{g,n}(\mathbf{L})&=
    \frac{1}{2L_1} \int_0^{L_1}\dd{t}\int_0^\infty\dd{x}\int_0^\infty\dd{y}\,xy K(x+y,t;\mu] \Bigg[ T_{g-1,n+1}(x,y,\mathbf{L}_{\widehat{\{1\}}})\nonumber\\
    &\mkern370mu +\!\!\!\!\sum_{\substack{g_1+g_2=g\\I\amalg J=\{2,\ldots,n\}}} T_{g_1,1+|I|}(x,\mathbf{L}_{I}) T_{g_2,1+|J|}(y,\mathbf{L}_{J})\Bigg]\nonumber\\
    &+\frac{1}{2L_1}\int_0^{L_1}\dd{t}\int_0^\infty\dd{x}\sum_{j=2}^n\,x \left(K(x,t+L_j;\mu] + K(x,t-L_j;\mu]\right) T_{g,n-1}(x,\mathbf{L}_{\widehat{\{1,j\}}}),
    \end{align}
for $(g,n)\notin \{(0,3),(1,1)\}$ combined with the initial data
\begin{align}
T_{0,3}(L_1,L_2,L_3;\mu]&=\frac{1}{M_0[\mu]}\\
T_{1,1}(L;\mu]&=-\frac{M_1[\mu]}{24M_0[\mu]^2}+\frac{L^2}{48M_0[\mu]}.
\end{align}
This completes the proof of Theorem~\ref{the:tight_top_recursion}.

\subsection{Proof of Theorem~\ref{the:recursion_n}}

We follow a strategy along the lines of the proof of Theorem~\ref{the:tight_top_recursion}.
Recall the relation \eqref{eq:fulltightgenfun} between the intersection number generating function $G$ and the tight Weil--Petersson volumes $T_{g,n}$.
Let us denote by $G_{g,n}(x_0,x_1,\ldots;\mathsf{M}_0,\mathsf{M}_1,\ldots)$ the homogeneous part of degree $n$ in $x_0,x_1,\ldots$ of  $G_g(0;x_0,x_1 + 1 - \mathsf{M}_0,x_2-\tfrac12 \mathsf{M}_1, x_3 - \tfrac{1}{4} \mathsf{M}_2, \ldots)$.
In other words, they are homogeneous polynomials of degree $n$ in $x_0,x_1,\ldots$ with coefficients that are formal power series in $\mathsf{M}_0,\mathsf{M}_1,\ldots$, such that
\begin{align}
    G_g(0;x_0,x_1 + 1 - \mathsf{M}_0,x_2-\tfrac12 \mathsf{M}_1, \ldots) = \sum_{n=0}^\infty G_{g,n}(x_0,x_1,\ldots;\mathsf{M}_0,\mathsf{M}_1,\ldots).
\end{align} 
We will prove that there exist polynomials $\bar{\mathcal{P}}_{g,n}(x_0,x_1,\ldots;m_1,m_2\ldots)$ such that
\begin{align}
    G_{g,n}(x_0,x_1,\ldots;\mathsf{M}_0,\mathsf{M}_1,\ldots) = \frac{1}{n!}\frac{1}{\mathsf{M}_0^{2g-2+n}} \bar{\mathcal{P}}_{g,n}\left(x_0,x_1,\ldots;\frac{\mathsf{M}_1}{\mathsf{M}_0},\frac{\mathsf{M}_2}{\mathsf{M}_0},\ldots\right)\label{eq:Pbarpolynomial}
\end{align}
and deduce a recurrence in $n$.

For $g\geq 2$ and $n=0$, the existence of a polynomial $\bar{\mathcal{P}}_{g,0}(m_1,m_2,\ldots)$ follows from \cite[Lemma~12]{budd2020irreducible}, since
\begin{align}
    G_{g,0}(x_0,x_1,\ldots;\mathsf{M}_0,\mathsf{M}_1,\ldots) &= G_{g}(0;0,1 - \mathsf{M}_0,-\tfrac12 \mathsf{M}_1,\ldots)\nonumber\\
    &= \mathsf{M}_0^{2-2g}G_{g}\left(0;0,0,-\tfrac12 \frac{\mathsf{M}_1}{\mathsf{M}_0},-\tfrac14 \frac{\mathsf{M}_2}{\mathsf{M}_0},\ldots\right)
\end{align}
and $G_g(0;0,0,x_2,x_3,\ldots)$ is polynomial by construction.
Also
\begin{equation}
    G_{0,3} = \frac{x_0^3}{6} \frac{1}{\mathsf{M}_0}, \qquad G_{1,1} = \frac{x_1}{24}\frac{1}{\mathsf{M}_0}-\frac{x_0}{48}\frac{\mathsf{M}_1}{\mathsf{M}_0^2}
\end{equation}

Let us now assume $2g-2+n \geq 2$ and aim to express $G_{g,n}$ in tems of $G_{g,n-1}$.
By construction the series $G_{g,n}$ obeys for $k\geq 1$,
\begin{align}
    \frac{\partial G_{g,n}}{\partial x_k} = -2^{k-1}\frac{\partial G_{g,n-1}}{\partial \mathsf{M}_{k-1}}.\label{eq:Ggnpdvrelation}
\end{align}
The string equation, i.e.\ \eqref{eq:stringshift} at $p=-1$, written in terms of $G_{g,n}$ reads
\begin{align}
    \sum_{k=1}^\infty x_k \frac{\partial G_{g,n-1}}{\partial x_{k-1}} - \sum_{k=0}^\infty 2^{-k} \mathsf{M}_k \frac{\partial G_{g,n}}{\partial x_{k}}= 0,
\end{align}
which after rearranging gives the relation
\begin{align}
     \frac{\partial G_{g,n}}{\partial x_0} = \sum_{k=1}^\infty \left(\frac{x_k}{\mathsf{M}_0} \frac{\partial G_{g,n-1}}{\partial x_{k-1}} - 2^{-k} \frac{\mathsf{M}_k}{\mathsf{M}_0} \frac{\partial G_{g,n}}{\partial x_k}\right)
\end{align}
Together with \eqref{eq:Ggnpdvrelation} this is sufficient to identify the recursion relation
\begin{align}
    G_{g,n}(x_0,x_1,\ldots;\mathsf{M}_0,\mathsf{M}_1,\ldots) &= \frac{1}{n} \sum_{k=0}^\infty x_k \frac{\partial G_{g,n}}{\partial x_k}\nonumber\\
    &= \frac{1}{n} \sum_{k=1}^\infty \left(\frac{x_0x_k}{\mathsf{M}_0}\frac{\partial G_{g,n-1}}{\partial x_{k-1}} + \frac{x_0\mathsf{M}_k}{2\mathsf{M}_0} \frac{\partial G_{g,n-1}}{\partial \mathsf{M}_{k-1}} - 2^{k-1} x_k \frac{\partial G_{g,n-1}}{\partial \mathsf{M}_{k-1}}\right).
\end{align}
By induction, we now verify that $G_{g,n}$ is of the form \eqref{eq:Pbarpolynomial}.
If \eqref{eq:Pbarpolynomial} is granted for $G_{g,n-1}$, then 
\begin{align}
    G_{g,n}(x_0,x_1,\ldots;\mathsf{M}_0,\mathsf{M}_1,\ldots) &= \frac{1}{n!}\frac{1}{\mathsf{M}_0^{2g-2+n}}\Bigg[\sum_{k=1}^\infty x_0x_k\frac{\partial \bar{\mathcal{P}}_{g,n-1}}{\partial x_{k-1}} - \sum_{k=2}^\infty \left(-x_0\frac{\mathsf{M}_k}{2\mathsf{M}_0} +2^{k-1}x_k\right)\frac{\partial \bar{\mathcal{P}}_{g,n-1}}{\partial m_{k-1}} \breakline
    + \left(-x_0\frac{\mathsf{M}_1}{2\mathsf{M}_0} +x_1\right)\left((2g+n-3)\bar{\mathcal{P}}_{g,n-1} - \sum_{k=1}^\infty m_k \frac{\partial \bar{\mathcal{P}}_{g,n-1}}{\partial m_{k}}\right)\Bigg]
\end{align}
is indeed of the form \eqref{eq:Pbarpolynomial} provided
\begin{align}
    \bar{\mathcal{P}}_{g,n}(x_0,x_1,\ldots;m_1,m_2,\ldots) &= \sum_{k=1}^\infty x_0x_k\frac{\partial \bar{\mathcal{P}}_{g,n-1}}{\partial x_{k-1}} - \sum_{p=1}^\infty \left(-\frac{x_0m_{p+1}}{2} +2^p x_{p+1}-\frac{x_0 m_1 m_k}{2}+x_1 m_k\right)\frac{\partial \bar{\mathcal{P}}_{g,n-1}}{\partial m_{p}} \breakline
    + \left(-\frac{x_0m_1}{2} +x_1\right)(2g+n-3)\bar{\mathcal{P}}_{g,n-1}
\end{align}

According to \eqref{eq:fulltightgenfun} the series $G_{g,n}$ and the tight Weil--Petersson volume $T_{g,n}$ are related via 
\begin{align}
    G_{g,n}(t_0[\nu],t_1[\nu],\ldots;M_0[\mu],M_1[\mu],\ldots) = \frac{2^{3-3g}}{n!} \int \rmd\nu(L_1)\cdots\rmd\nu(L_n) T_{g,n}(\mathbf{L};\mu].
\end{align}
This naturally leads to the existence of polynomials $\mathcal{P}_{g,n}(\mathbf{L},m_1,m_2,\ldots)$ such that 
\begin{align}\label{eq:TPrel}
    T_{g,n}(\mathbf{L};\mu] = \frac{1}{M_0^{2g-2+n}} \mathcal{P}_{g,n}\left(\mathbf{L},\frac{M_1}{M_0},\ldots,\frac{M_{3g-3+n}}{M_0}\right),
\end{align}
to get
\begin{align}
    \mathcal{P}_{g,n}(\mathbf{L},\mathbf{m}) &= \sum_{p=1}^\infty \left(m_{p+1} - \frac{L_1^{2p+2}}{2^{p+1}(p+1)!}-m_1 m_p + \frac{1}{2}L_1^2 m_p\right) \frac{\partial \mathcal{P}_{g,n-1}}{\partial m_p}(\mathbf{L}_{\widehat{\{1\}}},\mathbf{m}) \nonumber\\
    &\quad  + (2g-3+n)(-m_1+\tfrac12 L_1^2) \mathcal{P}_{g,n-1}(\mathbf{L}_{\widehat{\{1\}}},\mathbf{m}) + \ind_{\{g=0,n=3\}}
    + \sum_{i=2}^n \int \dd{L_i} L_i \, \mathcal{P}_{g,n-1}(\mathbf{L}_{\widehat{\{1\}}},\mathbf{m}).
\end{align}
The claims about the degree of the polynomials $\mathcal{P}_{g,n}$ are easily checked to be valid for the initial conditions and to be preserved by the recursion formula \eqref{eq:Pgnrecursion}.
This proves theorem \ref{the:recursion_n}.

\begin{proof}[Proof of Corollary~\ref{cor:stringdilaton}]
We note that
\begin{align}
    [L_1^{2p}]\mathcal{P}_{g,n}(\mathbf{L},\mathbf{m}) &=
    \delta_{p,0}\Bigg[\sum_{q=1}^\infty \left(m_{q+1}-m_1 m_q\right) \frac{\partial \mathcal{P}_{g,n-1}}{\partial m_q}(\mathbf{L}_{\widehat{\{1\}}},\mathbf{m})-(2g-3+n)m_1 \mathcal{P}_{g,n-1}(\mathbf{L}_{\widehat{\{1\}}},\mathbf{m}) \breakline
    + \ind_{\{g=0,n=3\}}
    + \sum_{i=2}^n \int \dd{L_i} L_i \, \mathcal{P}_{g,n-1}(\mathbf{L}_{\widehat{\{1\}}},\mathbf{m})\Bigg]\\\nonumber
    &+\delta_{p,1}\Bigg[
    \sum_{q=1}^\infty \left(\frac{1}{2}m_q\right) \frac{\partial \mathcal{P}_{g,n-1}}{\partial m_q}(\mathbf{L}_{\widehat{\{1\}}},\mathbf{m})+ \tfrac12(2g-3+n) \mathcal{P}_{g,n-1}(\mathbf{L}_{\widehat{\{1\}}},\mathbf{m})
    \Bigg]\\\nonumber
    &+\ind_{p>1}\Bigg[
    -\frac{1}{2^{p}p!} \frac{\partial \mathcal{P}_{g,n-1}}{\partial m_{p-1}}(\mathbf{L}_{\widehat{\{1\}}},\mathbf{m})
    \Bigg]
\end{align}
Setting $m_0=1$ this gives
\begin{align}
\sum_{p=0}^\infty 2^p p! m_p [L_1^{2p}]\mathcal{P}_{g,n}(\mathbf{L},\mathbf{m}) &= \ind_{\{g=0,n=3\}}
    + \sum_{i=2}^n \int \dd{L_i} L_i \, \mathcal{P}_{g,n-1}(\mathbf{L}_{\widehat{\{1\}}},\mathbf{m})
\end{align}
and
\begin{align}
\sum_{p=1}^\infty 2^p p! m_{p-1} [L_1^{2p}]\mathcal{P}_{g,n}(\mathbf{L},\mathbf{m}) &= (2g-3+n) \mathcal{P}_{g,n-1}(\mathbf{L}_{\widehat{\{1\}}},\mathbf{m}).
\end{align}
Using equation \eqref{eq:TPrel}, we get the desired result.
\end{proof}

\newpage
\section{Laplace transform, spectral curve and disk function}\label{sec:Laplace}
\subsection{Proof of Theorem \ref{the:tight_spectral_curve}}

Let us consider the partial derivative operator
\begin{align}\label{eq:bnd_insertion}
\Delta(z)=4\sum_{p=0}^\infty (2z^2)^{-1-p} (2p+1)!! \pdv{x_p}
\end{align}
on the ring of formal power series in $x_0,x_1,\ldots$ and $1/z$.
For later purposes we record several identities for the power series coefficients around $z=\infty$, valid for $a\geq0$,
\begin{align}
\qty[u^{-2-2a}]\Delta(u)&=2^{1-a}(2a+1)!!\pdv{x_a},\\
\qty[u^{-4-2a}]\Delta(u)\Delta(-u)&=2^{2-a}\sum_{i+j=a}(2i+1)!!(2j+1)!!\frac{\partial^2}{\partial x_i\partial x_j},\label{eq:Deltaidentities}\\
\qty[u^{2a-1}]\frac{1}{u\qty(z^2-u^2)\eta(u;\mu]}&=\sum_{m=0}^a z^{2m-2a-2} \beta_m[\mu],
\end{align}
where the reverse moments $\beta[\mu]$ were introduced in \eqref{eq:beta_def}.

From the definition \eqref{eq:tightlaplace} and the relation \eqref{eq:fulltightgenfun} we deduce that
for $g\geq1$ or $n\geq3$
\begin{align}
\omega_{g,n}(\mathbf{z})&= \int_0^\infty \left[\prod_{1\leq i\leq n} L_i e^{-z_i L_i}\right] \frac{\delta^n \tilde{F}_{g}[\nu,\mu]}{\delta\nu(L_1)\cdots\delta\nu(L_n)}\Bigg|_{\nu=0} \rmd L_1\cdots\rmd L_n \nonumber\\
&=2^{3g-3}\eval{\Delta(z_1)\ldots\Delta(z_n)G_g(0;x_0,x_1+\tilde{\gamma}_1,x_2+\tilde{\gamma}_2,\ldots)}_{x_0=x_1=\cdots=0},\label{eq:omegafromG}
\end{align}
where $\tilde{\gamma}_{k} = \delta_{k,1} - 2^{1-k} M_{k-1}[\mu]$ as before.
Recall the differential equation \eqref{eq:Gdiffeq} satisfied by this (shifted) intersection number generating function $G(0;x_0,x_1+\tilde{\gamma}_1,x_2+\tilde{\gamma}_2,\ldots)$,
\begin{align*}
\frac{1}{2} (2p+1)!! \frac{\partial G}{\partial x_p}
    &=\frac{\lambda^2}{4}\sum_{m=0}^\infty\sum_{i+j=p+m-2}2^{-m}\beta_m[\mu](2i+1)!!(2j+1)!!\left(\frac{\partial^2 G}{\partial x_i\partial x_j}+\frac{\partial G}{\partial x_i}\frac{\partial G}{\partial x_j}\right)\breakline
    +\frac{1}{2}\sum_{n,m=0}^\infty 2^{-m} \beta_m[\mu] \frac{(2n+2p+2m-1)!!}{(2n-1)!!} x_{n} \frac{\partial G}{\partial x_{n+p+m-1}}\breakline
    +\delta_{p,0}\qty(\frac{\lambda^{-2}x_0^2\beta_0[\mu]}{4}+\frac{\beta_1[\mu]}{32})+\delta_{p,1}\frac{\beta_0[\mu]}{16}.
\end{align*}
With the help of the identities \eqref{eq:Deltaidentities} it can be recast in terms of the operator $\Delta(z)$ as
\begin{align*} 
\Delta(z_1)G
    &=2\lambda^2\sum_{a=0}^\infty\sum_{m=0}^{a+2}\sum_{i+j=a}(2z_1^2)^{m-a-3}2^{-m}\beta_m[\mu](2i+1)!!(2j+1)!!\left(\frac{\partial^2 G}{\partial x_i\partial x_j}+\frac{\partial G}{\partial x_i}\frac{\partial G}{\partial x_j}\right)\breakline
    +4\sum_{n,m,p=0}^\infty (2z_1^2)^{-1-p} 2^{-m} \beta_m[\mu] \frac{(2n+2p+2m-1)!!}{(2n-1)!!} x_{n} \frac{\partial G}{\partial x_{n+p+m-1}}\breakline
    +\frac{4}{z_1^2}\qty(\frac{\lambda^{-2}x_0^2\beta_0[\mu]}{4}+\frac{\beta_1[\mu]}{32})+\frac{\beta_0[\mu]}{8z_1^4} \displaybreak[0]\\
    &=\frac{\lambda^2}{16}\sum_{a=0}^\infty\qty(\qty[u^{2a+3}]\frac{1}{u\qty(z_1^2-u^2)\eta(u;\mu]})\qty[u^{-4-2a}]\Big(\Delta(u)\Delta(-u)G+(\Delta(u)G)(\Delta(-u)G)\Big)\breakline
    +\frac{1}{2}\sum_{q=0}^\infty\sum_{n=0}^{q+1}\frac{2^n x_{n}}{(2n-1)!!}\qty(\qty[u^{2q-2n+1}]\frac{1}{u\qty(z_1^2-u^2)\eta(u;\mu]}) \qty[u^{-2q-2}]\Delta(u)G\breakline
    +\frac{4}{z_1^2}\qty(\frac{\lambda^{-2}x_0^2\beta_0[\mu]}{4}+\frac{\beta_1[\mu]}{32})+\frac{\beta_0[\mu]}{8z_1^4}\displaybreak[0]\\
    &=\frac{\lambda^2}{16}\residue_{u\to0}\frac{1}{u\qty(z_1^2-u^2)\eta(u;\mu]}\Big(\Delta(u)\Delta(-u)G+(\Delta(u)G)(\Delta(-u)G)\Big)\breakline
    +\frac{1}{2}\residue_{u\to0}\sum_{n=0}^\infty\frac{(2u^2)^n x_{n}}{(2n-1)!!}\frac{1}{u\qty(z_1^2-u^2)\eta(u;\mu]}\Delta(u)G\breakline
    +\frac{4}{z_1^2}\qty(\frac{\lambda^{-2}x_0^2\beta_0[\mu]}{4}+\frac{\beta_1[\mu]}{32})+\frac{\beta_0[\mu]}{8z_1^4}. 
\end{align*}
Extracting the genus-$g$ contribution, which appears as the coefficient of $\lambda^{2g-2}$, the relation \eqref{eq:omegafromG} allows us to turn this into a recursion for $\omega_{g,n}$,
\begin{align*}
\omega_{g,n}(\mathbf{z})&
    =\frac{1}{2}\residue_{u\to0}\frac{1}{u\qty(z_1^2-u^2)\eta(u;\mu]}\big[\omega_{g-1,n+1}(u,-u,\mathbf{z}_{\widehat{\{1\}}})+
    \sum_{\substack{g_1+g_2=g\\I\amalg J=\{2,\ldots,n\}}}\omega_{g_1,|I|}(u,\mathbf{z}_{I})\omega_{g_2,|J|}(-u,\mathbf{z}_{J})\big]\breakline
    +\residue_{u\to0}\sum_{j=2}^n\sum_{p=0}^\infty u^{2p}  z_j^{-2-2p}(2p+1)\frac{1}{u\qty(z_1^2-u^2)\eta(u;\mu]}\omega_{g,n-1}(u,\mathbf{z}_{\widehat{\{1,j\}}})\breakline
    +\frac{\delta_{g,0}\delta_{n,3}}{M_0[\mu]z_1^2z_2^2z_3^2}+\delta_{g,1}\delta_{n,1}\qty(-\frac{M_1[\mu]}{24M_0[\mu]^2z_1^2}+\frac{1}{8M_0[\mu]z_1^4}) \displaybreak[0]\\
    &=\residue_{u\to0}\frac{1}{2u\qty(z_1^2-u^2)\eta(u;\mu]}\Bigg[\omega_{g-1,n+1}(u,-u,\mathbf{z}_{\widehat{\{1\}}})+
    \sum_{\substack{g_1+g_2=g\\I\amalg J=\{2,\ldots,n\}}}\omega_{g_1,|I|}(u,\mathbf{z}_{I})\omega_{g_2,|J|}(-u,\mathbf{z}_{J})+\breakline\quad\quad\quad
    +\sum_{j=2}^n\qty(\frac{1}{(z_j-u)^2}+\frac{1}{(z_j+u)^2})\omega_{g,n-1}(u,\mathbf{z}_{\widehat{\{1,j\}}})\Bigg]\breakline
    +\frac{\delta_{g,0}\delta_{n,3}}{M_0[\mu]z_1^2z_2^2z_3^2}+\delta_{g,1}\delta_{n,1}\qty(-\frac{M_1[\mu]}{24M_0[\mu]^2z_1^2}+\frac{1}{8M_0[\mu]z_1^4}).
\end{align*}
Finally, if we set $\omega_{0,2}(\mathbf{z})=(z_1-z_2)^{-2}$ and $\omega_{0,0}(\mathbf{z})=\omega_{0,1}(\mathbf{z})=0$, this reduces to
\begin{align}
    \omega_{g,n}(\mathbf{z})
    &=\residue_{u\to0}\frac{1}{2u\qty(z_1^2-u^2)\eta(u;\mu]}\Bigg[\omega_{g-1,n+1}(u,-u,\mathbf{z}_{\widehat{\{1\}}})+
    \sum_{\substack{g_1+g_2=g\\I\amalg J=\{2,\ldots,n\}}}\omega_{g_1,|I|}(u,\mathbf{z}_{I})\omega_{g_2,|J|}(-u,\mathbf{z}_{J})\Bigg].
\end{align}

\subsection{Disk function}	

Due to proposition \ref{prop:VHTrelation}, there is a relation between the regular and tight Weil--Petersson volumes. In this subsection we will look at this relation in the Laplace transformed setting.

In particular, we are interested in the \emph{Laplace transformed generating functions of (regular) Weil--Petersson volumes}
\begin{align*}
    \mathcal{W}_{g,n}(\mathbf{z}) &= \int_0^\infty \dd{L_1} L_1e^{-z_1L_1} \cdots \int_0^\infty \dd{L_n}L_ne^{-z_nL_n} \frac{\delta^n F_g[\mu]}{\delta\mu(L_1)\cdots \delta \mu(L_n)},
\end{align*}
where we recall that
\begin{align*}
    \frac{\delta^n F_g[\mu]}{\delta\mu(L_1)\cdots \delta \mu(L_n)} = \sum_{p=0}^\infty \frac{1}{p!}\int  V_{g,n+p}(\mathbf{L},\mathbf{K})\,\rmd\mu(K_1)\cdots\rmd\mu(K_p).
\end{align*}

\begin{lemma}\label{lem:Hlaplace}
    We define $x_i=x_i(z_i;\mu]=\sqrt{z_i^2-2R[\mu]}$. For $g\geq 1$ or $n\geq 3$ we have
    \begin{align}
        \mathcal{W}_{g,n}(\mathbf{z}) = \omega_{g,n}(\mathbf{x}) \prod_{i=1}^n \frac{z_i}{x_i},
    \end{align}
    while for $g=0$ and $n=1,2$,
    \begin{align}
        \mathcal{W}_{0,1}(\mathbf{z}) &= - \int_0^R \frac{z_1}{(z_1^2-2r)^{3/2}} Z(r)\,\rmd r,\label{eq:disklaplace}\\
        \mathcal{W}_{0,2}(\mathbf{z}) &=\frac{z_1}{x_1}\frac{z_2}{x_2} \omega_{0,2}(\mathbf{x})- \frac{1}{(z_1-z_2)^2}.
    \end{align}
\end{lemma}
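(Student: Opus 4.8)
The plan is to obtain all three formulas by Laplace-transforming, in the variables $L_i$, the relations between (tight) Weil--Petersson volumes proved earlier, and then identifying the resulting $r$-integrals. The single analytic ingredient needed throughout is the elementary transform
\[
\int_0^\infty \rmd L\, L\, e^{-zL} I_0(L\sqrt{2r}) = \frac{z}{(z^2-2r)^{3/2}},
\]
obtained by differentiating $\int_0^\infty e^{-zL}I_0(L\sqrt{2r})\,\rmd L = (z^2-2r)^{-1/2}$ with respect to $z$.

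\emph{The cases $g=0$, $n=1,2$.} These follow directly from the closed expressions \eqref{eq:diskfunction} and \eqref{eq:cylinderfunction}. Applying the transform above to \eqref{eq:diskfunction} immediately yields \eqref{eq:disklaplace}. Applying it to \eqref{eq:cylinderfunction} gives
\[
\mathcal{W}_{0,2}(\mathbf{z}) = \int_0^{R[\mu]} \frac{z_1 z_2}{(z_1^2-2r)^{3/2}(z_2^2-2r)^{3/2}}\,\rmd r .
\]
To evaluate this, set $\xi_i(r)=\sqrt{z_i^2-2r}$ and note that $\xi_1(r)^2-\xi_2(r)^2=z_1^2-z_2^2$ is independent of $r$; a short computation then shows $r\mapsto z_1 z_2/\big(\xi_1(r)\xi_2(r)(\xi_1(r)-\xi_2(r))^2\big)$ is a primitive of the integrand. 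Evaluating between $r=0$ (where $\xi_i=z_i$) and $r=R[\mu]$ (where $\xi_i=x_i$) produces $\mathcal{W}_{0,2}(\mathbf{z}) = \frac{z_1 z_2}{x_1 x_2 (x_1-x_2)^2} - \frac{1}{(z_1-z_2)^2}$, which is the claim after inserting $\omega_{0,2}(\mathbf{x})=(x_1-x_2)^{-2}$ from Theorem~\ref{the:tight_spectral_curve}.

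\emph{The case $g\geq1$ or $n\geq3$.} Here I would start from \eqref{eq:FTrelation}, multiply by $\prod_i L_i e^{-z_iL_i}$ and integrate, using that $H(L,K;\mu]$ is supported on $L\geq K$ so that (by Fubini)
\[
\mathcal{W}_{g,n}(\mathbf{z}) = \int T_{g,n}(\mathbf{K};\mu]\prod_{i=1}^n K_i\Big(e^{-z_iK_i}+\int_{K_i}^\infty \rmd L\, L\, e^{-z_iL}H(L,K_i;\mu]\Big)\rmd K_i .
\]
Everything then reduces to the single-variable identity
\[
e^{-zK}+\int_K^\infty \rmd L\, L\, e^{-zL}H(L,K;\mu] = \frac{z}{x}\,e^{-xK},\qquad x=\sqrt{z^2-2R[\mu]},
\]
after which the bracket above becomes $K_i\frac{z_i}{x_i}e^{-x_iK_i}$ and, by definition \eqref{eq:tightlaplace}, $\mathcal{W}_{g,n}(\mathbf{z}) = \prod_i\frac{z_i}{x_i}\,\omega_{g,n}(\mathbf{x})$. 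To prove the single-variable identity I would insert the explicit form \eqref{eq:Hexpression} of $H$, substitute $L=\sqrt{s^2+K^2}$ to turn the integral into $\sqrt{2R[\mu]}\int_0^\infty e^{-z\sqrt{s^2+K^2}}I_1(s\sqrt{2R[\mu]})\,\rmd s$, and evaluate this by a classical Laplace-type Bessel integral; alternatively, since both sides are formal power series in $R[\mu]$, one expands $I_1$ and checks the identity coefficient by coefficient in $R[\mu]$ using $\int_K^\infty u^m e^{-zu}\,\rmd u$ together with the expansions of $z/x$ and $e^{-xK}$. At $K=0$ this specializes to \eqref{eq:Hat0} and the standard transform $\int_0^\infty e^{-zL}I_1(L\sqrt{2R})\,\rmd L=\tfrac1{\sqrt{2R}}(\tfrac{z}{x}-1)$, which already gives $\mathcal{W}_{g,n}$ the right leading behaviour.

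The main obstacle is precisely this single-variable half-tight-cylinder identity: it is the one place where a genuine special-function computation (or a careful power-series bookkeeping) is required, and one must be attentive to the support condition $L\geq K$ and to interpreting the Laplace transforms as asymptotic series at $z_i=\infty$, equivalently as formal power series in $\mu$ (hence in $R[\mu]$), so that the interchanges of sums and integrals used in the $g\geq1$ or $n\geq3$ case are legitimate. The $n=2$ antiderivative step is routine once the $r$-invariance of $\xi_1^2-\xi_2^2$ is noticed, and the $n=1$ case is immediate.
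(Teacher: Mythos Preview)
Your proposal is correct and follows the same overall route as the paper: reduce via \eqref{eq:FTrelation} to the single-variable Laplace transform of the half-tight cylinder, and handle $n=1,2$ directly from \eqref{eq:diskfunction}--\eqref{eq:cylinderfunction} using the $I_0$ transform and the same antiderivative for the cylinder integral. For the single-variable identity you flag as the main obstacle, the paper evaluates it in one stroke via the Gaussian representations $e^{-zL}=\int_0^\infty \tfrac{z}{\sqrt{4\pi y^3}}e^{-z^2/(4y)-yL^2}\,\rmd y$ and $\int_0^\infty e^{-yp^2/(2R)}I_1(p)\,\rmd p=e^{R/(2y)}-1$, which is cleaner than either a table lookup or term-by-term expansion in $R[\mu]$.
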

\begin{proof}
    For the first identity we wish to combine \eqref{eq:FTrelation} and \eqref{eq:tightlaplace}.
    It requires an expression for the Laplace transform of the half-tight cylinder.
    Using
    \begin{equation}
        \int_0^\infty \dd{y} \frac{z}{\sqrt{4\pi y^3}} e^{-\frac{z^2}{4y}-yL^2} = e^{-zL}, \qquad \int_0^\infty \dd{p}e^{-y\frac{p^2}{2R}}I_1(p) = e^{\frac{R}{2y}}-1,
    \end{equation}
    allows us to compute
    \begin{align*}
        \int_K^\infty \dd{L} L e^{-z L} (H(L,K) K + \delta(L-K)) 
        &=K e^{-zK}+\begin{multlined}[t] 
            \int_K^\infty \dd{L} L \qty(\int_0^\infty \dd{y} \frac{zK}{\sqrt{4\pi y^3}} e^{-\frac{z^2}{4y}-yL^2})\\
            \sqrt{\frac{2R}{L^2-K^2}}I_1\qty(\sqrt{2R(L^2-K^2)})
        \end{multlined}\\
        &=K e^{-zK}+\int_0^\infty \dd{y} \frac{zK}{\sqrt{4\pi y^3}} e^{-\frac{z^2}{4y}-yK^2} \int_0^\infty \dd{p}e^{-y\frac{p^2}{2R}}I_1(p)  \\
        &=K e^{-zK}+\int_0^\infty \dd{y} \frac{zK}{\sqrt{4\pi y^3}} e^{-\frac{z^2}{4y}-yK^2}\qty(e^{\frac{R}{2y}}-1) \\
        &=K\frac{z}{\sqrt{z^2 -2R}}e^{-K\sqrt{z^2-2R}}.
    \end{align*}
    Therefore,
    \begin{align*}
        \mathcal{W}_{g,n}(\mathbf{z}) &= \int_0^\infty T_{g,n}(\mathbf{K};\mu] \prod_{i=1}^n K_i \frac{z_i}{x_i} e^{-x_i K_i} \rmd K_i,
    \end{align*}
    which by \eqref{eq:tightlaplace} gives the first stated identity.

    For the last two identities we use that the Laplace transform of the modified Bessel function $I_0$ is given by
    \begin{align*}
        \int_0^\infty I_0(L\sqrt{2r}) L e^{-z L}\rmd L = \frac{z}{(z^2-2r)^{3/2}}.
    \end{align*}
    Then \eqref{eq:disklaplace} follows directly from \eqref{eq:diskfunction}, while for the cylinder case \eqref{eq:cylinderfunction} implies
    \begin{align*}
        \mathcal{W}_{0,2}(\mathbf{z}) &= \int_0^R \frac{z_1}{(z_1^2-2r)^{3/2}}\frac{z_2}{(z_2^2-2r)^{3/2}} \rmd r \\
        &= \frac{z_1}{\sqrt{z_1^2-2R}}\frac{z_2}{\sqrt{z_2^2-2R}} \frac{1}{\left(\sqrt{z_1^2-2R}-\sqrt{z_2^2-2R}\right)^2}-\frac{1}{(z_1-z_2)^2}\\
        &= \frac{z_1}{x_1}\frac{z_2}{x_2} \omega_{0,2}(\mathbf{x})- \frac{1}{(z_1-z_2)^2}.
    \end{align*}
\end{proof}

We finish this section by giving alternative expressions for the disk function and the series $\eta(u;\mu]$.

\begin{lemma}
    The disk function $\mathcal{W}_{0,1}(\mathbf{z})$ is related to $\eta$ via
    \begin{align*}
        \mathcal{W}_{0,1}(\mathbf{z}) = - z_1 \sqrt{z_1^2-2R}\,\eta\left(\sqrt{z_1^2-2R}\right) + \frac{z_1}{2\pi}\sin2\pi z_1 -\int  \dd{\mu(L)}\cosh(L z_1),
    \end{align*}
    valid when $4|R| < |z_1|^2$.
\end{lemma}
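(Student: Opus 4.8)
The plan is to start from the integral expression \eqref{eq:disklaplace} for $\mathcal{W}_{0,1}(\mathbf{z})$, namely $\mathcal{W}_{0,1}(\mathbf{z}) = -\int_0^R z_1(z_1^2-2r)^{-3/2} Z(r)\,\rmd r$, and integrate by parts. Writing $z_1 (z_1^2-2r)^{-3/2} = \frac{\partial}{\partial r}\big( (z_1^2-2r)^{-1/2}\big)\cdot z_1$ (up to the obvious constant), one boundary term at $r=R$ produces $-z_1 (z_1^2-2R)^{-1/2} Z(R;\mu]$, which vanishes because $Z(R[\mu];\mu]=0$ by the defining relation \eqref{eq:Zdef}. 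The boundary term at $r=0$ produces $z_1 |z_1|^{-1} Z(0;\mu] = Z(0;\mu]$, and from $Z(r;\mu] = \tfrac{\sqrt r}{\sqrt2\pi}J_1(2\pi\sqrt{2r}) - \int\rmd\mu(L)I_0(L\sqrt{2r})$ one has $Z(0;\mu] = 1 - \int\rmd\mu(L)$ after expanding the Bessel functions near $0$. So after integration by parts
\begin{align*}
\mathcal{W}_{0,1}(\mathbf{z}) = Z(0;\mu] - \int_0^R \frac{z_1}{\sqrt{z_1^2-2r}}\, Z'(r;\mu]\,\rmd r.
\end{align*}
Wait — I should be careful about which factor to differentiate; the cleaner route is to integrate by parts the other way, moving the derivative onto $Z$. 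The key point is that each $\rmd/\rmd r$ acting on $Z$ lowers the Bessel index, producing the moments $M_k$ via \eqref{eq:Mexplicit}, while repeated integration by parts against $(z_1^2-2r)^{-3/2}$ generates the Taylor coefficients of $\eta$ around $u^2 = z_1^2 - 2R$.

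The core computation is a Taylor expansion of $Z(r;\mu]$ around the point $r = R[\mu]$. Since $M_k[\mu] = Z^{(k+1)}(R[\mu];\mu]$ by \eqref{eq:Mexplicit}, we have $Z(r;\mu] = \sum_{k\ge0} \frac{M_k[\mu]}{(k+1)!}(r-R)^{k+1}$ (the $k=-1$ term vanishes). Substituting this into $-\int_0^R z_1(z_1^2-2r)^{-3/2}Z(r;\mu]\,\rmd r$ and changing variables to $s = z_1^2 - 2r$, so that $r - R = \tfrac12(z_1^2 - 2R - s) = \tfrac12((z_1^2-2R) - s)$, turns the integral into a sum of elementary integrals $\int s^{-3/2}((z_1^2-2R)-s)^{k+1}\,\rmd s$ over the range $s \in [z_1^2 - 2R,\ z_1^2]$. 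Each such integral is evaluated by the binomial theorem; the convergence/interchange of sum and integral is legitimate as a formal power series in $R$ (equivalently in $\mu$), which is all that is needed, and the stated analyticity condition $4|R| < |z_1|^2$ is exactly what makes the underlying series converge if one wants an analytic statement. Collecting terms, the part of the sum proportional to $(z_1^2-2R)^{1/2}$ times a power series in $(z_1^2-2R)$ reassembles into $-z_1\sqrt{z_1^2-2R}\,\eta(\sqrt{z_1^2-2R};\mu]$ using the definition $\eta(u;\mu] = \sum_p M_p[\mu] u^{2p}/(2p+1)!!$ together with the elementary identity $\int_0^1 t^{-3/2}(1-t)^{p}\,\mathrm{d}\text{(suitable)} \propto 1/(2p+1)!!$; the remaining, $R$-independent part of the $s=z_1^2$ endpoint must reassemble into $\frac{z_1}{2\pi}\sin 2\pi z_1 - \int\rmd\mu(L)\cosh(L z_1)$.

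To pin down that $R$-independent remainder cleanly, rather than tracking every endpoint contribution by hand, I would instead argue as follows: define the right-hand side $G(z_1) \coloneqq -z_1\sqrt{z_1^2-2R}\,\eta(\sqrt{z_1^2-2R};\mu] + \frac{z_1}{2\pi}\sin2\pi z_1 - \int\rmd\mu(L)\cosh(Lz_1)$ and check that $G$ and $\mathcal{W}_{0,1}$ satisfy the same first-order ODE in $R$ (differentiate both under the integral sign in \eqref{eq:disklaplace}, using $\fdv{R}{\mu(0)}\neq0$ to trade $\partial_R$ for a known operation) with the same initial value at $R=0$. At $R=0$ one has $\eta(z_1;0] = \sin(2\pi z_1)/(2\pi z_1)$ by the $\mu=0$ evaluation recorded after Theorem~\ref{the:tight_spectral_curve}, so $-z_1 \cdot z_1 \cdot \sin(2\pi z_1)/(2\pi z_1) + \frac{z_1}{2\pi}\sin 2\pi z_1 = 0$, leaving $G(z_1)|_{R=0} = -\int\rmd\mu(L)\cosh(Lz_1)$, which matches $\mathcal{W}_{0,1}$ at $R=0$ by \eqref{eq:disklaplace}. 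The main obstacle is bookkeeping: correctly matching the half-integer-power ($\sqrt{z_1^2-2R}$) sector against $\eta$ and verifying that the leftover analytic-in-$z_1^2$ sector is precisely the $\mu=0$ combination $\frac{z_1}{2\pi}\sin2\pi z_1 - \int\rmd\mu(L)\cosh(Lz_1)$ — i.e. making sure no cross terms leak between the two sectors. The ODE-plus-initial-condition strategy sidesteps most of that, reducing the problem to one derivative identity and one evaluation at $R=0$.
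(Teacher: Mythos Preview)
Your first approach---repeated integration by parts, equivalently Taylor-expanding $Z(r;\mu]$ about $r=R$---is correct and cleaner than the paper's route. Using the antiderivative $\int(z_1^2-2r)^{\alpha}\rmd r = -(z_1^2-2r)^{\alpha+1}/(2(\alpha+1))$, the integral splits into boundary terms at $r=R$ and $r=0$. The $r=R$ terms, via $Z^{(k)}(R)=M_{k-1}$ and $Z(R)=0$, assemble exactly into $-z_1\sqrt{z_1^2-2R}\,\eta(\sqrt{z_1^2-2R})$ (the double-factorials $(2p+1)!!$ in $\eta$ arise from the iterated prefactors $-2(\alpha+1)$). The $r=0$ terms give $\sum_{j\ge0}\frac{z_1^{2j}}{(2j-1)!!}Z^{(j)}(0;\mu]$, and a direct Taylor expansion of the Bessel functions in $Z$ shows this equals $\frac{z_1}{2\pi}\sin 2\pi z_1 - \int\rmd\mu(L)\cosh(Lz_1)$. (Small slip: $Z(0;\mu]=-\int\rmd\mu(L)$, not $1-\int\rmd\mu(L)$.) The paper instead expands $(z_1^2-2r)^{-3/2}$ as a Laurent series in $x_1^{-2}=(z_1^2-2R)^{-1}$, evaluates each $\int_0^R(r-R)^{p-1}Z(r)\rmd r$ via tabulated Bessel integrals, and then matches the positive- and negative-power pieces against $\eta$ and against a spherical-Bessel generating-function identity for $\sin(2\pi\sqrt{x^2+2R})$ and $\cosh(L\sqrt{x^2+2R})$. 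Your approach avoids the external tables and the auxiliary generating function, at the cost of one elementary series identity.

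Your ODE shortcut, however, has a real gap. Both $R=R[\mu]$ and $M_k=M_k[\mu]$ depend on $\mu$, so there is no well-defined ``ODE in $R$'' for the identity as written; and your initial condition conflates $R=0$ with $\mu=0$. The value of $\eta(z_1;\mu]$ at a weight $\mu$ with $R[\mu]=0$ is \emph{not} $\sin(2\pi z_1)/(2\pi z_1)$---that is the value at $\mu=0$ only, where the whole identity collapses to $0=0$. One can rescue the idea by promoting $R$ to a free parameter and replacing $M_k$ by $Z^{(k+1)}(R;\mu]$; then both sides do obey the same first-order ODE in the free $R$, but checking the initial condition at $R=0$ still requires precisely the $r=0$ endpoint identity above, so the ODE detour saves nothing. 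Just carry your first approach through to the end.
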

Note that $\mu=0$ gives $\mathcal{W}_{0,1}(\mathbf{z})=0$ as expected.

\begin{proof}
The starting point is the standard generating function \cite[Equation 10.1.39]{Abramowitz1964}
\begin{align}
    \frac{1}{u}\sin(\sqrt{u^2-2ut}) = \sum_{n=0}^\infty \frac{t^n}{n!}y_{n-1}(u)    
\end{align}
for the spherical Bessel functions $y_k(u)$ valid when $2|t| < |u|$.
Restricting to $2|R| < |x|^2$ and using the series expansion of the ordinary and spherical Bessel functions we find 
\begin{align}
	\frac{x}{2\pi}\sin(2\pi\lambda\sqrt{x^2+2R}) &= \sum_{k=0}^\infty \frac{\,(-\pi)^k}{k!} 2^kx^{2-k}\lambda^{k+1} y_{k-1}(2\pi\lambda x) R^k\nonumber\\
	&= \sum_{k,m=0}^\infty \frac{(-1)^{m}\pi^{2m+\frac12}}{k!m!} \frac{2^{k-1}x^{2(m-k+1)}\lambda^{2m+1}R^k}{\Gamma(m-k+\tfrac32)} \nonumber\\
	&= \sum_{p=-\infty}^\infty x^{2p} \sum_{m=\max(0,p-1)}^\infty \frac{(-1)^m\pi^{2m+\frac12}}{(m+1-p)!m!} \frac{2^{m-p}\lambda^{2m+1}R^{m+1-p}}{\Gamma(\tfrac12+p)} \nonumber\\
	&= \sum_{p=-\infty}^\infty x^{2p} \lambda^p \frac{\Gamma(1/2)}{\Gamma(p+1/2)2^p}\left(\frac{-\sqrt{2}\pi}{\sqrt{R}}\right)^{p-1}J_{p-1}(2\pi\lambda\sqrt{2R}).
\end{align}
Setting $\lambda=1$ now gives
\begin{equation}
    \frac{x}{2\pi}\sin(2\pi\sqrt{x^2+2R})=\sum_{p=-\infty}^\infty x^{2p}  \frac{\Gamma(1/2)}{\Gamma(p+1/2)2^p}\left(\frac{-\sqrt{2}\pi}{\sqrt{R}}\right)^{p-1}J_{p-1}(2\pi\sqrt{2R}).
\end{equation}
On the other hand we can show that
\begin{align*}
    \dv{\lambda}\qty[\frac{x}{\sqrt{x^2+2R}}\cos(2\pi\lambda\sqrt{x^2+2R})]&=-2\pi x \sin(2\pi\lambda\sqrt{x^2+2R})\nonumber\\
    &=-4\pi^2 \sum_{p=-\infty}^\infty x^{2p} \lambda^p \frac{\Gamma(1/2)}{\Gamma(p+1/2)2^p}\left(\frac{-\sqrt{2}\pi}{\sqrt{R}}\right)^{p-1}J_{p-1}(2\pi\lambda\sqrt{2R})\nonumber\\
    &=\dv{\lambda}\qty[\sum_{p=-\infty}^\infty x^{2p} \lambda^p\frac{\Gamma(1/2)}{\Gamma(p+1/2)2^p}\left(\frac{-2\pi}{\sqrt{2R}}\right)^p J_{p}(2\pi\lambda\sqrt{2R})].
\end{align*}
Integrating and setting $\lambda=(iL)/(2\pi)$ gives
\begin{equation}
    \frac{x}{\sqrt{x^2+2R}}\cosh(L\sqrt{x^2+2R}) = \sum_{p=-\infty}^\infty x^{2p}\frac{\Gamma(1/2)}{\Gamma(p+1/2)2^p} \left(\frac{L}{\sqrt{2R}}\right)^p I_p(L\sqrt{2R}),
\end{equation}
valid for $2|R| < |x|^2$.

Starting from \eqref{eq:disklaplace} and restricting to $2|R| < |x_1|^2$ we find the series expansion
\begin{align}
    \mathcal{W}_{0,1}\left(\sqrt{x_1^2+2R}\right)\frac{x_1}{\sqrt{x_1^2+2R}} &= - \int_0^R \frac{x_1}{(x_1^2 + 2R - 2r)^{3/2}} Z(r) \rmd r\nonumber\\
    &=- x_1^{-2} \int_0^R \frac{1}{\qty(1 + \frac{2R - 2r}{x_1^2})^{3/2}} Z(r) \rmd r\nonumber\\
    &= - \sum_{p=1}^\infty \frac{(-2)^p\Gamma(1/2)}{(p-1)!\Gamma(1/2-p)} x_1^{-2p} \int_{0}^R (r-R)^{p-1} Z(r)\rmd r.
\end{align}
We can use \cite[6.567.1 and 6.567.18]{Gradshteyn_Table_2015}, 
\begin{align}
    \int_{0}^R (r-R)^{p-1} \frac{\sqrt{r}}{\sqrt{2}\pi}J_1(2\pi\sqrt{2r}) \rmd r &= \frac{(-1)^{p-1}\sqrt{2}R^{p+1/2}}{\pi} \int_0^1 x^2 (1-x^2)^{p-1} J_1(2\pi\sqrt{2R}x)\dd{x}\nonumber\\
    &= (p-1)! \qty(\frac{-\sqrt{R}}{\sqrt{2}\pi})^{p+1}J_{p+1}(2\pi\sqrt{2R}),\\
    \int_{0}^R (r-R)^{p-1} I_0(L\sqrt{2r}) \rmd r &= (-R)^{p-1}2R \int_0^1 x (1-x^2)^{p-1} I_0(L\sqrt{2R}x)\dd{x}\nonumber\\
    &= -(p-1)! \qty(\frac{-\sqrt{2R}}{L})^{p}I_{p}(L\sqrt{2R}).
\end{align}
This yields
\begin{align}    
    \mathcal{W}_{0,1}\left(\sqrt{x_1^2+2R}\right)\frac{x_1}{\sqrt{x_1^2+2R}} &= 
    -\sum_{p=1}^\infty x_1^{-2p} \frac{(-2)^p\Gamma(1/2)}{\Gamma(1/2-p)}\Bigg[\left(\frac{-\sqrt{R}}{\sqrt{2}\pi}\right)^{p+1} J_{p+1}(2\pi\sqrt{2R})\breakline
    + \int  \dd{\mu(L)} \left(\frac{-\sqrt{2R}}{L}\right)^{p} I_{p}(L\sqrt{2R})\Bigg]\nonumber\\
    &= \sum_{p=-\infty}^{-1} x_1^{2p} \frac{\Gamma(1/2)}{\Gamma(1/2+p)2^p}\Bigg[\left(\frac{-\sqrt{2}\pi}{\sqrt{R}}\right)^{p-1} J_{p-1}(2\pi\sqrt{2R})\breakline
    -\int  \dd{\mu(L)} \left(\frac{L}{\sqrt{2R}}\right)^{p} I_{p}(L\sqrt{2R})\Bigg].
\end{align}
On the other hand, \eqref{eq:Mexplicit} and \eqref{eq:etadef} imply
\begin{align}
    x_1^2 \eta(x_1)&= \sum_{p=1}^\infty x_1^{2p} \frac{\Gamma(1/2)}{\Gamma(1/2+p)2^p}\Bigg[\left(\frac{-\sqrt{2}\pi}{\sqrt{R}}\right)^{p-1} J_{p-1}(2\pi\sqrt{2R})
    -\int  \dd{\mu(L)} \left(\frac{L}{\sqrt{2R}}\right)^{p} I_{p}(L\sqrt{2R})\Bigg]
\end{align}
Together with $Z(R)=0$ we may now conclude that
\begin{align}
    &\mathcal{W}_{0,1}\left(\sqrt{x_1^2+2R}\right)\frac{x_1}{\sqrt{x_1^2+2R}} + x_1^2 \eta(x_1)\nonumber\\
    &\qquad\qquad = \frac{x_1}{2\pi}\sin(2\pi\sqrt{x_1^2+2R}) - \int  \dd{\mu(L)} \frac{x_1}{\sqrt{x_1^2+2R}}\cosh(L\sqrt{x_1^2+2R}),
\end{align}
valid for $2|R| < |x_1|^2$.
Substituting $x_1 = \sqrt{z_1^2 - 2R}$ gives the desired expression.
\end{proof}

For convenience, we record an explicit expression for $\eta(u;\mu]$ that follows from this proof.
For a formal power series $F(r,u)$ in $r$ with coefficients that are Laurent polynomials in $u$, we denote by $[u^{\geq 0}]F(r,u)$ the formal power series obtained by dropping the negative powers of $u$ in the coefficients of $F(r,u)$.
Then we can write $\eta(u;\mu]$ as
\begin{equation}
    \eta(u;\mu] = [u^{\geq 0}]\left(\frac{u}{2\pi}\sin(2\pi\sqrt{u^2+2r}) - \int  \dd{\mu(L)} \frac{u}{\sqrt{u^2+2r}}\cosh(L\sqrt{u^2+2r})\right)\Bigg|_{r=R[\mu]}.
\end{equation}

\newpage
\section{JT gravity}\label{sec:JT}
The Weil--Petersson volumes play an important role in Jackiw-Teitelboim (JT) gravity \cite{jackiw1985,teitelboim1983,Saad_JT_2019}, a two-dimensional toy model of quantum gravity.
JT gravity has received significant attention in recent years because of the holographic perspective on the double-scaled matrix model it is dual to \cite{Saad_JT_2019}. 
In this section we point to some opportunities to use our results in the context of JT gravity and its extensions in which hyperbolic surfaces with defects play a role \cite{Maxfield_2021,Blommaert_2021,Okuyama2021FZZT}.
But we start with a brief introduction to the JT gravity partition function in Euclidean signature.

JT gravity is governed by the (Euclidean) action
\begin{align}
I_{\textrm{JT, bulk}}[g_{\mu\nu},\phi]=-\frac{1}{2}\int_{\mathcal{M}}\sqrt{g}\phi \qty(R+2),
\end{align}
where $\phi$ is the scalar dilaton field, $g_{\mu\nu}$ is a two-dimensional Riemannian metric and $R$ the corresponding Ricci scalar curvature.
Since we want this action to make sense when the manifold has boundaries, the boundary term
\begin{align}
I_{\textrm{JT, boundary}}[g_{\mu\nu},\phi]=-\int_{\partial\mathcal{M}}\sqrt{h}\phi \qty(K-1)
\end{align}
is included, where $h_{\mu\nu}$ is the induced metric on the boundary and $K$ is the extrinsic curvature at the boundary.
Including the topological Einstein--Hilbert term, proportional to parameter $S_0$, gives the full (Euclidean) JT action
\begin{align}
I_{\textrm{JT}}[g_{\mu\nu},\phi]=-S_0\chi+I_{\textrm{JT,bulk}}[g_{\mu\nu},\phi]+I_{\textrm{JT,boundary}}[g_{\mu\nu},\phi],
\end{align}
where $\chi$ is the Euler characteristic of the manifold.

The JT gravity partition function on a manifold $\mathcal{M}$ with $n$ boundaries of lengths $\boldsymbol{\beta} = (\beta_1,\ldots,\beta_n)$ can formally be written as
\begin{align}
    Z_n\qty(\boldsymbol{\beta})=\int_{\mathcal{M}}\mathcal{D}g\,\mathcal{D}\phi \exp(I_{\textrm{JT}}).
\end{align}
In the partition function, the dilation field $\phi$ acts as a Lagrange multiplier on $(R+2)$, therefore enforcing a constant negative curvature $R=-2$ in the bulk. 
This is why the relevant manifolds will be hyperbolic surfaces. 

Due to the Einstein--Hilbert term, we can do a topological expansion by a formal power series expansion in $e^{-S_0}$,
\begin{align}
    Z_n\qty(\boldsymbol{\beta})=\sum_{g=0}^\infty \qty(e^{-S_0})^{2g+n-2} Z_{g,n}(\boldsymbol{\beta}).
\end{align}
It has been shown by Saad, Shenker \& Stanford \cite{Saad_JT_2019} that the JT partition functions $Z_{g,n}$ for $2g+n-2>0$ can be further decomposed by splitting the surfaces into $n$ \emph{trumpets} and a hyperbolic surface of genus $g$ and $n$ geodesic boundaries with lengths $\mathbf{b}=(b_1,\ldots,b_n)$, and that the partition function measure is closely related to the Weil--Petersson measure. 
To be precise, it satisfies the identity
\begin{align}
Z_{g,n}(\boldsymbol{\beta})=\int_0^\infty  \qty(\prod_{i=1}^n b_i \dd{b_i} Z^{\textrm{Trumpet}}(\beta_i,b_i)) V_{g,n}(\mathbf{b}),
\end{align}
where $V_{g,n}(\mathbf{b})$ are the Weil--Petersson volumes and the trumpet contributions are given by
\begin{align}
Z^{\textrm{Trumpet}}(\beta,b)=\frac{1}{2\sqrt{\pi\beta}}e^{-\frac{b^2}{4\beta}}.
\end{align}
This formula is the link between JT gravity and Weil--Petersson volumes.

\begin{figure}
    \centering
    \includegraphics[height=.2\textheight]{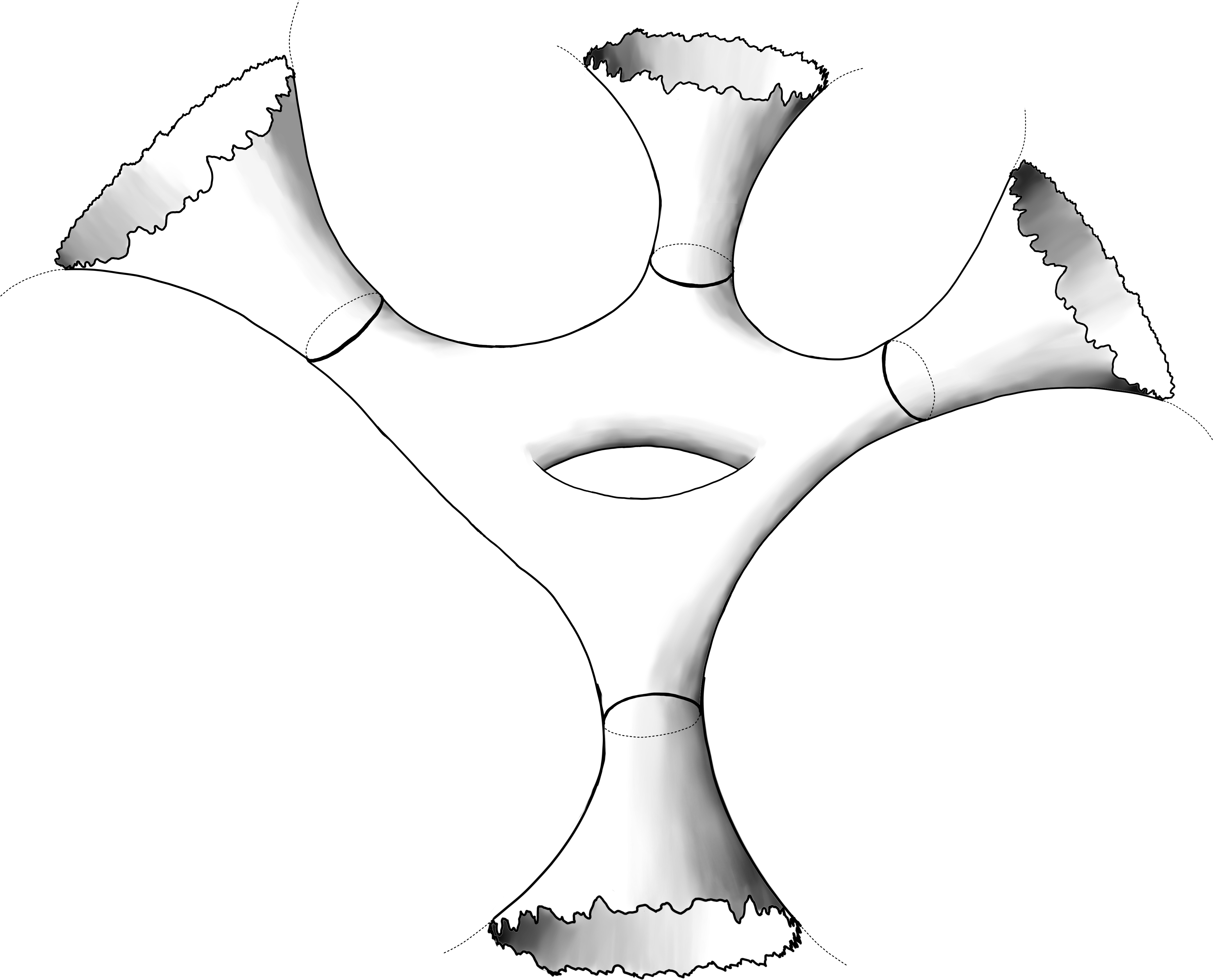}
    \colorbox{lightgray}{\includegraphics[height=.2\textheight]{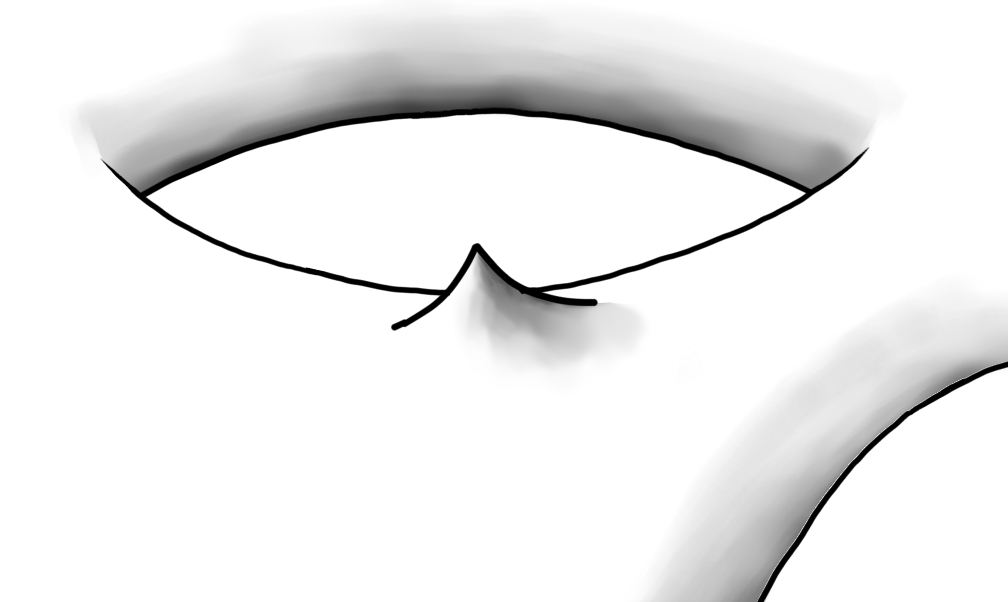}}
    \caption{Left: Visualization of decomposing the manifold into $n$ JT trumpets and a hyperbolic surface with $n$ geodesic boundaries. In this case $g=1$ and $n=4$. Right: A conical defect on a hyperbolic surface.}
\end{figure}

There are several natural extensions of the JT action. If we only allow up to two derivatives, the most general action can be transformed to \cite{Witten_2020}
\begin{align}
I_{\textrm{bulk}}[g_{\mu\nu},\phi]=-\frac{1}{2}\int_{\mathcal{M}}\sqrt{g} [\phi\qty(R+2)+U(\phi)].
\end{align}
In the next subsection we will discuss a natural choice of the dilaton potential $U(\phi)$, which gives rise to defects in the hyperbolic surfaces. 

\subsection{Conical defects}

One of the most natural dilaton potentials is
\begin{align}
U(\phi)=\mu e^{-2\pi(1-\alpha)\phi},
\end{align}
which adds a gas of conical defects of cone angle $2\pi\alpha$ carrying  weight $\mu$ each.
It naturally arises \cite{Maxfield_2021} from Kaluza--Klein instantons when performing dimensional reduction on three-dimensional black holes.

More generally, one can allow multiple types of defects by considering a measure $\mu$ on $i[0,2\pi)$ and setting  
\begin{align}
U(\phi)=\int_0^1 \rmd\mu(2\pi i\alpha) \,e^{-2\pi(1-\alpha)\phi}.
\end{align}
For instance, the choice $\mu = \sum_{j=1}^k \mu_j \delta_{i \gamma_j}$  gives $k$ types of defects with cone angles $\gamma_1,\ldots,\gamma_k \in [0,2\pi]$,
\begin{align}
    U(\phi)=\sum_i \mu_i e^{-2\pi(1-\alpha_i)\phi}.
\end{align}
The choice to consider the measure on the imaginary interval will be convenient later.

It can be shown \cite{Maxfield_2021} that these potentials indeed lead to conical defects. 
For example, one can look at the term linear in $\mu$ in the integrand of the partition function for a single type of gas:
\begin{align}
&[\mu^1]\exp(-I_{\textrm{bulk}})=\frac{1}{2}\exp(-I_{\textrm{JT, bulk}})\int_{\mathcal{M}}\dd{x_1} \sqrt{g(x_1)} \exp(-2\pi(1-\alpha)\phi(x_1))\nonumber\\
&\qquad=\frac{1}{2}\int_{\mathcal{M}}\dd{x_1} \sqrt{g(x_1)}\exp(\frac{1}{2}\int_{\mathcal{M}}\dd{x} \sqrt{g(x)}\phi(x) \qty(R(x)+2-4\pi(1-\alpha)\delta^2(x-x_1))).
\end{align}
It follows that the surface has curvature $R=-2$ everywhere, except at point $x_1$, where we have a conical defect with cone angle $2\pi\alpha$.
If one includes all orders of $\mu$, any number of defects may appear and each defect carries a weight $\mu$ \cite{Maxfield_2021}.

As already mentioned in the introduction, the Weil--Petersson volumes for surfaces with sharp cone points (cone angle $2\pi\alpha< \pi$) are obtained \cite{Tan_Generalizations_2006,Mirzakhani2007,Do_Weil_2009,do2011moduli} from the usual Weil--Petersson volume polynomials by treating the defect angle $2\pi\alpha$ as a geodesic boundary with imaginary boundary length $2\pi i\alpha$.
This that partition function $Z_{g,n}(\mathbf{\beta})$ is closely related to the generating function $F_g[\mu]$ of Weyl--Petersson volumes considered in this paper.
To be precise, using \eqref{eq:wppartitionfunction} and \eqref{eq:FTrelation},
\begin{align}
Z_{g,n}(\boldsymbol{\beta})&=\int_0^\infty  \qty(\prod_{i=1}^n b_i \dd{b_i} Z^{\textrm{Trumpet}}(\beta_i,b_i))\sum_{p=0}^\infty \frac{1}{p!}\int \rmd\mu(b_{n+1})\cdots\rmd\mu(b_{n+p}) V_{g,n+p}(\mathbf{b})\nonumber\\
&=\int_0^\infty  \qty(\prod_{i=1}^n b_i \dd{b_i} Z^{\textrm{Trumpet}}(\beta_i,b_i))\frac{\delta^n F_g[\mu]}{\delta\mu(b_{1})\cdots\delta\mu(b_{n})}\nonumber\\
&=\int_0^\infty  \qty(\prod_{i=1}^n b_i \dd{b_i} Z^{\textrm{Trumpet}}(\beta_i,b_i))\left(\prod_{i=1}^n \dd{K_i}\left(K_i\,H(b_i,K_i;\mu]+\delta(K_i-b_i)\right)\right) T_{g,n}(\mathbf{K};\mu],\label{eq:Zgntight}
\end{align}
which we can compute using the recursions described in this paper. 
In particular, its topological recursion can in principle be derived from that of $T_{g,n}$ in Theorem~\ref{the:tight_top_recursion}.

\begin{figure}
    \centering
    \includegraphics[height=.2\textheight]{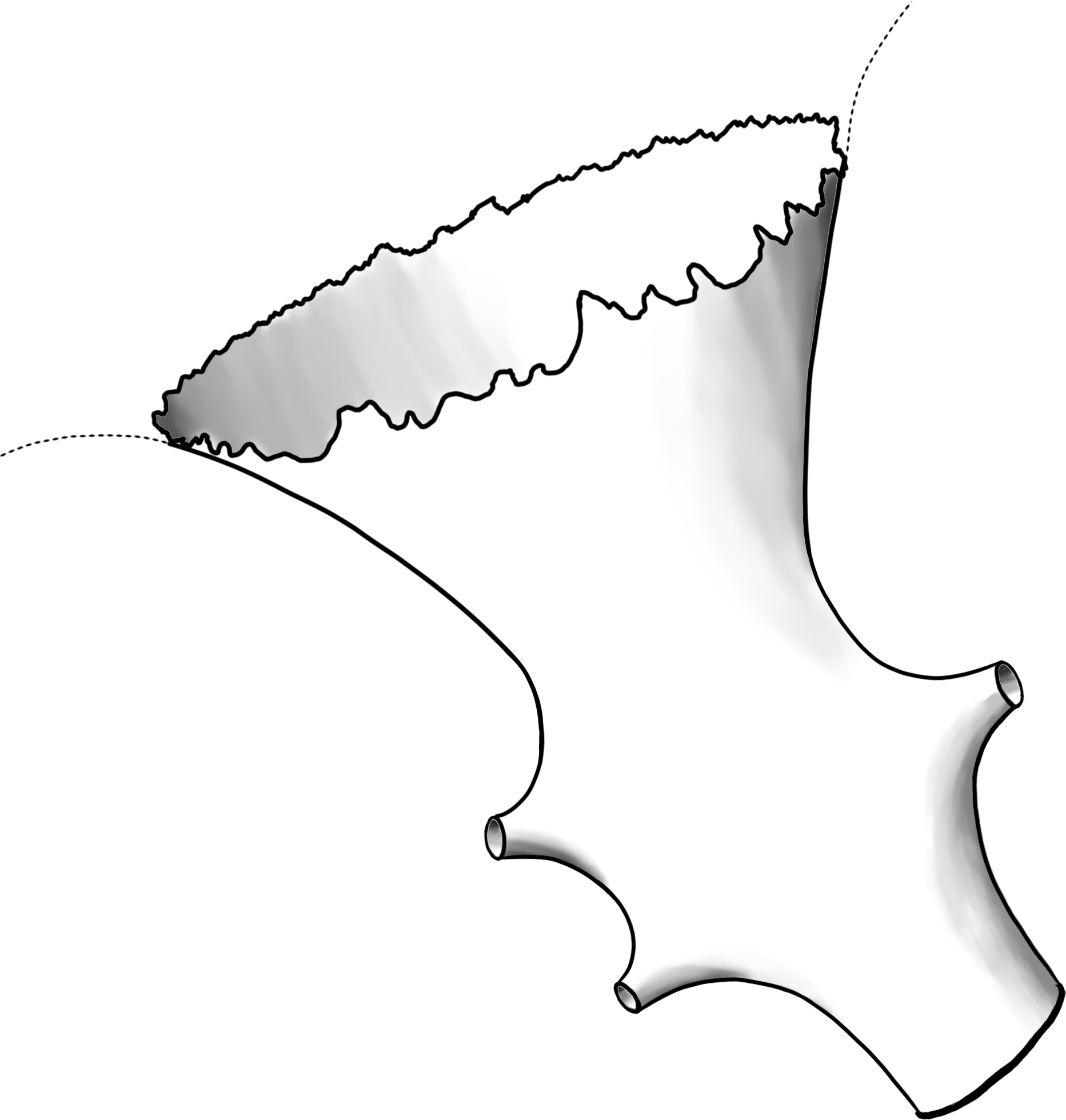}
    \caption{Visualization of the tight trumpet.\label{fig:tighttrumpet}}
\end{figure}

We can simplify \eqref{eq:Zgntight} by considering the \emph{tight trumpet}, which is a genus-$0$ hyperbolic surface with an asymptotic boundary of length $\beta$, a tight boundary of length $K$ and an arbitrary number of extra geodesic boundaries, with the constraint that the tight boundary cannot be separated from the asymptotic one by a curve of length $\beta$.
See Figure~\ref{fig:tighttrumpet}.
Since it can be obtained by gluing a trumpet to a half-tight cylinder, with the help of Lemma~\ref{lem:halftight} we find that the partition function associated to a tight trumpet is given by
\begin{align}
    Z^{\textrm{TT}}(\beta,K)&=\int_K^\infty \frac{b}{K} \dd{b} Z^{\textrm{Trumpet}}(\beta,b)\left(K\,H(b,K;\mu]+\delta(K-b)\right)\nonumber\\
    &= \frac{1}{2\sqrt{\pi\beta}}e^{-\frac{K^2}{4\beta}} + \int_K^\infty b\dd{b} \frac{1}{2\sqrt{\pi\beta}}e^{-\frac{b^2}{4\beta}} \sqrt{\frac{2R[\mu]}{b^2-K^2}} I_1\left( \sqrt{b^2-K^2}\sqrt{2R[\mu]}\right)\nonumber\\
    &=\frac{1}{2\sqrt{\pi\beta}}e^{-\frac{K^2}{4\beta}+2R[\mu]\beta} = e^{2R[\mu]\beta}Z^{\textrm{Trumpet}}(\beta,K).
\end{align}
Remarkably it differs from the JT trumpet only in a factor exponential in the boundary length $\beta$. 
We conclude that for $g\geq 1$ or $n\geq 3$,
\begin{align}
    Z_{g,n}(\boldsymbol{\beta})=  \int_0^\infty \left(\prod_{i=1}^n K_i \dd{K_i}Z^{\textrm{TT}}(\beta_i,K_i)\right) T_{g,n}(\mathbf{K};\mu],
\end{align}
which we understand as a gluing of tight trumpets to tight hyperbolic surfaces.
In the case $g=0$ and $n=2$, we only need to glue two tight trumpets together to find the universal two-boundary correlator
\begin{align}
    Z_{0,2}(\beta_1,\beta_2) &= \int_0^\infty Z^{\textrm{TT}}(\beta_1,K) Z^{\textrm{TT}}(\beta_2,K) \,K\,\rmd K\nonumber\\
    &= \frac{1}{2\pi}\frac{\sqrt{\beta_1\beta_2}}{\beta_1+\beta_2} e^{2(\beta_1+\beta_2)R[\mu]}.
\end{align}

We note that these expressions do not apply to the case of blunt cone points (cone angle $2\pi\alpha \in [\pi,2\pi]$).
The problem is that in the presence of such defects it is no longer true that every free homotopy class of closed curves necessarily contains a geodesic, because, informally, when shortening a closed curve it can be pulled across a blunt cone point, while that never happens for a short one.
However, this is not an issue when considering tight cycles, because in that setting one is considering larger homotopy classes, namely of the manifold with its defects closed off.
Such homotopy classes will always contain a shortest geodesic, which generically is unique.
Whereas the JT trumpet cannot always be removed from a surface with blunt defects in a well-defined manner, the removal of a tight trumpet should pose no problem.
It is natural to ask whether such reasoning can be used to connect to the recent works \cite{Turiaci_2021,Eberhardt_2D_2023} in JT gravity dealing with blunt cone points.

\subsection{FZZT-branes}
Another well-studied extension of JT gravity, is the introduction of FZZT branes. With this extension the hyperbolic surfaces can end on a FZZT brane. In the random matrix model description of JT gravity, this corresponds to fixing some eigenvalues of the random matrix \cite{Blommaert_2021}.

In the partition function, this leads to the addition of an arbitrary number of geodesic boundaries as defects with a certain weight $\mathcal{M}(L)=-e^{-zL}$, where $L$ is the length of the boundary,
\begin{align} \label{eq:FZZT}
    Z_{g,n}(\boldsymbol{\beta})_{\textrm{FZZT}}&= \sum_{p=0}^\infty \frac{e^{-S_0p}}{p!}\int_0^\infty  \qty(\prod_{i=1}^n \dd{b_i} b_i Z^{\textrm{Trumpet}}(\beta_i,b_i)) \qty(\prod_{i=n+1}^{n+p} \dd{b_i} \mathcal{M}(b_i)) V_{g,n+p}(\mathbf{b}).
\end{align}
Such weights have been interpreted \cite{Okuyama2021FZZT}\footnote{Please note that $z$ in our work corresponds to $z/(\sqrt{2}\pi) $ in \cite{Okuyama2021FZZT}} as the action of a fermion with mass $z$.

Using our setup, we can rewrite this to:
\begin{align} 
    Z_{g,n}(\boldsymbol{\beta})_{\textrm{FZZT}}&= \int_0^\infty \qty(\prod_{i=1}^n \dd{b_i} b_i Z^{\textrm{Trumpet}}(\beta_i,b_i))\breakline \left(\prod_{i=1}^n \dd{K_i} (K_i H(b_i,K_i;\mu_{\textrm{FZZT}}] +\delta(b_i-K_i))\right) T_{g,n}(\mathbf{K};\mu_{\textrm{FZZT}}] ,
\end{align}
with
\begin{align}
\mu_{\textrm{FZZT}}=-e^{-S_0-zL}\rmd L,
\end{align}
or again using the tight trumpet
\begin{align}
    Z_{g,n}(\boldsymbol{\beta})_{\textrm{FZZT}}=  \int_0^\infty \left(\prod_{i=1}^n K_i \dd{K_i}Z^{\textrm{TT}}(\beta_i,K_i;\mu_{\textrm{FZZT}}]\right) T_{g,n}(\mathbf{K};\mu_{\textrm{FZZT}}],
\end{align}
with
\begin{align}
    Z^{\textrm{TT}}(\beta,K;\mu]
    &=\frac{1}{2\sqrt{\pi\beta}}e^{-\frac{K^2}{4\beta}+2R[\mu]\beta}.
\end{align}
The behaviour of $R[\mu_{\textrm{FZZT}}]$ depends on $z$ and $S_0$ and its critical points should give insight into critical phenomena of the partition function, see \cite{castro2023critical}.


\begin{thebibliography}{10}

    \bibitem{Abramowitz1964}
    {\sc M.~Abramowitz and I.~A. Stegun}, {\em Handbook of {Mathematical}
      {Functions}: {With} {Formulas}, {Graphs}, and {Mathematical} {Tables}},
      Courier Corporation, 1964.
    
    \bibitem{Blommaert_2021}
    {\sc A.~Blommaert, T.~G. Mertens, and H.~Verschelde}, {\em Eigenbranes in
      {Jackiw-Teitelboim} gravity}, Journal of High Energy Physics (2021) 2.
    
    \bibitem{Bouttier_Bijective_2022}
    {\sc J.~Bouttier, E.~Guitter, and G.~Miermont}, {\em Bijective enumeration of
      planar bipartite maps with three tight boundaries, or how to slice pairs of
      pants}, Annales Henri Lebesgue, 5 (2022), pp.~1035--1110.
    
    \bibitem{budd2020irreducible}
    {\sc T.~Budd}, {\em Irreducible metric maps and {W}eil-{P}etersson volumes},
      Comm. Math. Phys., 394 (2022), pp.~887--917.
    
    \bibitem{Budd_Statistics_}
    {\sc T.~Budd and P.~Koster}, {\em Statistics of critical boltzmann hyperbolic
      surfaces}.
    \newblock in preparation.
    
    \bibitem{buser1992geometry}
    {\sc P.~Buser}, {\em Geometry and Spectra of Compact Riemann Surfaces},
      Birkh\"auser, Boston, 1992.
    
    \bibitem{castro2023critical}
    {\sc A.~Castro}, {\em Critical jt gravity}, arXiv preprint arXiv:2306.14823,
      (2023).
    
    \bibitem{Dijkgraaf_Loop_1991}
    {\sc R.~Dijkgraaf, H.~Verlinde, and E.~Verlinde}, {\em Loop equations and
      virasoro constraints in non-perturbative two-dimensional quantum gravity},
      Nuclear Physics B, 348 (1991), pp.~435--456.
    
    \bibitem{do2011moduli}
    {\sc N.~Do}, {\em Moduli spaces of hyperbolic surfaces and their weil-petersson
      volumes}, arXiv preprint arXiv:1103.4674,  (2011).
    
    \bibitem{Do_Weil_2009}
    {\sc N.~Do and P.~Norbury}, {\em Weil-{P}etersson volumes and cone surfaces},
      Geom. Dedicata, 141 (2009), pp.~93--107.
    
    \bibitem{Eberhardt_2D_2023}
    {\sc L.~Eberhardt and G.~J. Turiaci}, {\em 2d dilaton gravity and the
      weil-petersson volumes with conical defects}, arXiv preprint arXiv:2304.14948, (2023).
    
    \bibitem{Eynard_Invariants_2007}
    {\sc B.~Eynard and N.~Orantin}, {\em Invariants of algebraic curves and
      topological expansion}, arXiv preprint math-ph/0702045,  (2007).
    
    \bibitem{eynard2007weil}
    \leavevmode\vrule height 2pt depth -1.6pt width 23pt, {\em Weil-petersson
      volume of moduli spaces, mirzakhani's recursion and matrix models}, arXiv
      preprint arXiv:0705.3600,  (2007).
    
    \bibitem{Faber_conjectural_1999}
    {\sc C.~Faber}, {\em A conjectural description of the tautological ring of the
      moduli space of curves}, in Moduli of curves and abelian varieties, Aspects
      Math., E33, Friedr. Vieweg, Braunschweig, 1999, pp.~109--129.
    
    \bibitem{Gilmore_Short_2021}
    {\sc C.~Gilmore, E.~Le~Masson, T.~Sahlsten, and J.~Thomas}, {\em Short geodesic
      loops and {$L^p$} norms of eigenfunctions on large genus random surfaces},
      Geom. Funct. Anal., 31 (2021), pp.~62--110.
    
    \bibitem{Gradshteyn_Table_2015}
    {\sc I.~S. Gradshteyn and I.~M. Ryzhik}, {\em Table of integrals, series, and
      products}, Elsevier/Academic Press, Amsterdam, eighth~ed., 2015.
    \newblock Translated from the Russian, Translation edited and with a preface by
      Daniel Zwillinger and Victor Moll, Revised from the seventh edition
      [MR2360010].
    
    \bibitem{Guth_Pants_2011}
    {\sc L.~Guth, H.~Parlier, and R.~Young}, {\em Pants decompositions of random
      surfaces}, Geom. Funct. Anal., 21 (2011), pp.~1069--1090.
    
    \bibitem{Itzykson_Combinatorics_1992}
    {\sc C.~Itzykson and J.-B. Zuber}, {\em Combinatorics of the modular group.
      {II}. {T}he {K}ontsevich integrals}, Internat. J. Modern Phys. A, 7 (1992),
      pp.~5661--5705.
    
    \bibitem{jackiw1985}
    {\sc R.~Jackiw}, {\em Lower dimensional gravity}, Nuclear Physics B, 252
      (1985), pp.~343--356.
    
    \bibitem{Kaufmann_Higher_1996}
    {\sc R.~Kaufmann, Y.~Manin, and D.~Zagier}, {\em Higher {W}eil-{P}etersson
      volumes of moduli spaces of stable {$n$}-pointed curves}, Comm. Math. Phys.,
      181 (1996), pp.~763--787.
    
    \bibitem{Kontsevich1992}
    {\sc M.~Kontsevich}, {\em Intersection theory on the moduli space of curves and
      the matrix {A}iry function}, Comm. Math. Phys., 147 (1992), pp.~1--23.
    
    \bibitem{Maxfield_2021}
    {\sc H.~Maxfield and G.~J. Turiaci}, {\em The path integral of 3d gravity near
      extremality; or, {JT} gravity with defects as a matrix integral}, Journal of
      High Energy Physics (2021) 1.
    
    \bibitem{Mirzakhani2007}
    {\sc M.~Mirzakhani}, {\em Simple geodesics and {W}eil-{P}etersson volumes of
      moduli spaces of bordered {R}iemann surfaces}, Invent. Math., 167 (2007),
      pp.~179--222.
    
    \bibitem{Mirzakhani2007a}
    \leavevmode\vrule height 2pt depth -1.6pt width 23pt, {\em Weil-{P}etersson
      volumes and intersection theory on the moduli space of curves}, J. Amer.
      Math. Soc., 20 (2007), pp.~1--23.
    
    \bibitem{Mirzakhani_Growth_2013}
    \leavevmode\vrule height 2pt depth -1.6pt width 23pt, {\em Growth of
      {W}eil-{P}etersson volumes and random hyperbolic surfaces of large genus}, J.
      Differential Geom., 94 (2013), pp.~267--300.
    
    \bibitem{Mirzakhani_Lengths_2019}
    {\sc M.~Mirzakhani and B.~Petri}, {\em Lengths of closed geodesics on random
      surfaces of large genus}, Comment. Math. Helv., 94 (2019), pp.~869--889.
    
    \bibitem{Monk_Benjamini_2022}
    {\sc L.~Monk}, {\em Benjamini--schramm convergence and spectra of random
      hyperbolic surfaces of high genus}, Analysis \& PDE, 15 (2022), pp.~727--752.
    
    \bibitem{mulase2006mirzakhanis}
    {\sc M.~Mulase and B.~Safnuk}, {\em Mirzakhani's recursion relations, virasoro
      constraints and the kdv hierarchy}, arXiv preprint math/0601194,  (2006).
    
    \bibitem{Mulase_Mirzakhanis_2008}
    {\sc M.~Mulase and B.~Safnuk}, {\em Mirzakhani's recursion relations,
      {V}irasoro constraints and the {K}d{V} hierarchy}, Indian J. Math., 50
      (2008), pp.~189--218.
    
    \bibitem{Okuyama2021FZZT}
    {\sc K.~Okuyama and K.~Sakai}, {\em {FZZT} branes in {JT} gravity and
      topological gravity}, Journal of High Energy Physics (2021) 9.
    
    \bibitem{Saad_JT_2019}
    {\sc P.~Saad, S.~H. Shenker, and D.~Stanford}, {\em Jt gravity as a matrix
      integral}, arXiv preprint hep-th/1903.11115,  (2019).
    
    \bibitem{Tan_Generalizations_2006}
    {\sc S.~P. Tan, Y.~L. Wong, and Y.~Zhang}, {\em Generalizations of
      {M}c{S}hane's identity to hyperbolic cone-surfaces}, J. Differential Geom.,
      72 (2006), pp.~73--112.
    
    \bibitem{teitelboim1983}
    {\sc C.~Teitelboim}, {\em Gravitation and hamiltonian structure in two
      spacetime dimensions}, Physics Letters B, 126 (1983), pp.~41--45.
    
    \bibitem{Turiaci_2021}
    {\sc G.~J. Turiaci, M.~Usatyuk, and W.~W. Weng}, {\em 2d dilaton-gravity,
      deformations of the minimal string, and matrix models}, Classical and Quantum
      Gravity, 38 (2021), p.~204001.
    
    \bibitem{Witten_Two_1991}
    {\sc E.~Witten}, {\em Two-dimensional gravity and intersection theory on moduli
      space}, in Surveys in differential geometry ({C}ambridge, {MA}, 1990), Lehigh
      Univ., Bethlehem, PA, 1991, pp.~243--310.
    
    \bibitem{Witten_2020}
    {\sc E.~Witten}, {\em Matrix models and deformations of {JT} gravity},
      Proceedings of the Royal Society A: Mathematical, Physical and Engineering
      Sciences, 476 (2020).
    
    \end{thebibliography}
\end{document}